\documentclass[atmp]{ipart_v1}

\Vol{19}
\Issue{3}
\Year{2015}
\firstpage{701}

\usepackage{t1enc}
\usepackage[latin1]{inputenc}
\usepackage[english]{babel}

\usepackage{amsthm}
\usepackage{yfonts}

\usepackage{bbm}
\usepackage{bm}
\usepackage{mathrsfs}

\usepackage{fullpage}
\usepackage{amsmath,amsfonts,amssymb,amsthm,amscd}
\usepackage{mathabx}
\usepackage{spectralsequences}
\newcommand{\A}{{\mathcal A}}
\newcommand{\B}{{\mathcal B}}
\newcommand{\C}{{\mathcal C}}
\newcommand{\E}{{\mathcal E}}
\newcommand{\F}{{\mathcal F}}
\newcommand{\G}{{\mathcal G}}

\newcommand{\M}{{\mathcal M}}
\newcommand{\MM}{{\mathcal M}{\mathcal M}}
\newcommand{\N}{{\mathcal N}}
\newcommand{\NN}{{\mathcal N}{\mathcal N}}

\newcommand{\T}{{\mathcal T}}


\newcommand{\K}{{\mathcal K}}

\newcommand{\HC}{{\mathcal H}{\mathcal C}}

\newcommand{\KR}{{\mathbb R}}

\newcommand{\KZ}{{\mathbb Z}}

\newcommand{\KT}{{\mathbb T}}

\newcommand{\Simp}{{\mbox{\bf Simp}}}
\newcommand{\Geom}{{\mbox{\bf Geom}}}
\newcommand{\Core}{{\mbox{\bf Core}}}
\newcommand{\Grpd}{{\mbox{\bf Grpd}}}

\newcommand{\id}{{\mbox{id}}}
\newcommand{\Aut}{\operatorname{Aut}}
\newcommand{\im}{\operatorname{im}}
%


\newcommand{\be}[0]{\begin{equation}}
\newcommand{\ee}[0]{\end{equation}}
%
\numberwithin{equation}{section}

\theoremstyle{plain}
\newtheorem{theorem}{Theorem}[section]
\newtheorem{lemma}[theorem]{Lemma}

%
\newtheorem{corollary}{Corollary}[section]
\newtheorem{definition}{Def}[section]

\AddToHook{begindocument/before}{\RequirePackage{nameref}}

\begin{document}

\title[Higher Topological T-duality]{On Higher Topological T-duality Functors}

\author[Ashwin S. Pande]{Ashwin S. Pande}

\begin{abstract}
We use String Field Theory (SFT) to 
construct a higher analogue of Bunke-Schick's 
functor $P: \mathbf{Top}^{op} \to \mathbf{Set}$ 
\cite{BunkeS1} by geometrizing $P.$
We use the projection of SFT onto its massless modes 
\cite{SFTDiffeo} to construct the category $\C$ 
whose objects are pairs (which we identify with
SFT backgrounds) and whose maps are morphisms of pairs
(which are gauge transformations). 
Using $\C$ and categorical equivalence, for any
$CW-$complex $X$ we define the moduli space $G(X)$ 
of SFT backgrounds which are pairs over $X$
up to gauge equivalence. 
We use the homotopy theory of the moduli space $G(X)$
to define functors on the category of $CW-$complexes
$P_k:\mathbf{CW}^{op} \to \mathbf{Grpd}$
such that $P_0 \simeq P,$ $P_1$ is nontrivial
and $P_k(X)$ are always trivial for $k \geq 2.$ 
Arrows in $P_1(X)$ are shown to be
isotopy classes of maps in the mapping class group of $X$
acting on (isomorphism classes of) pairs over $X.$
We discuss applications to Topological T-duality
for triples and to modelling doubled geometries
and T-folds \cite{HullT}.
\end{abstract}
\maketitle
\section{Introduction \label{SSecTTDP}}
In this paper we define a higher version of
Bunke-Schick's functor $P(X)$ (where
$P:\mathbf{Top}^{op} \to \mathbf{Set}$ and $X$
is a topological space) using the
homotopy theory of the moduli space of
String Field Theory backgrounds over $X$ 
which we construct
from the category of all pairs over $X.$

We begin this Section with a review of
the T-duality in Type II String Theory. We give
a brief outline of the Topological T-duality
formalisms of Mathai and Rosenberg \cite{MR1}
and Bunke and Schick \cite{BunkeS1}. 

We then give an abbreviated summary of
the main results of this paper and their
physical relevance.

After this, we present a detailed discussion of the outline
of this paper consisting of 
a SubSection by SubSection summary of the work in
this paper. We end this Section with a brief
introduction to closed bosonic String Field Theory
since we will use ideas from SFT in later Sections.

\subsubsection{T-duality and Topological T-duality}
Topological T-duality was defined by Mathai and Rosenberg in Ref.\  \cite{MR1} as a topological model for the T-duality symmetry of Type II String Theory. The theory of Mathai and Rosenberg makes a mathematical model of the T-duality symmetry of Type II String Theory backgrounds $E$ with free circle action with $H-$flux using methods from noncommutative topology. The background $E$ is then automatically a principal circle bundle
$p:E \to X$ with $X \simeq E/S^1$ the base space of the bundle and the $H-$flux an integral cohomology class $H$ in $H^3(E,\KZ).$

It is well known in String Theory that Type II String Theory on a background $E$ with
a free circle action with $H-$flux $H$ is physically equivalent to Type II String Theory on a `T-dual' background $E^{\#}$ 
with a {\em different} free circle action and a {\em different} $H-$flux $H^{\#}.$ 
Mathai and Rosenberg in Ref.\ \cite{MR1} argued that T-duality could be studied 
using ideas from the theory of crossed products of continuous-trace  $C^{\ast}-$algebras 
naturally associated to such backgrounds. 

The theory of Mathai and Rosenberg (\cite{MR1}) 
associates a $C^{\ast}-$dynamical system 
$(\A, \KR, \alpha)$ to a given spacetime background
$E \to X$ where $\A$ is a continuous-trace 
$C^{\ast}-$algebra with spectrum $\hat{A} \simeq E$ and 
$\alpha$ is an $\KR-$action on $\A$ inducing the natural
circle action on $\hat{\A}.$
The Dixmier-Douady invariant of $\A$ is chosen to be $H.$
It is well known (see Ref.\ \cite{JMRCBMS} and references therein) that the crossed product of the continuous-trace $C^{\ast}-$algebra $\A$ by the
$\KR-$action $\alpha$ in the $C^{\ast}-$dynamical system above is 
another $C^{\ast}-$dynamical system of the form 
$(\A^{\#}, \KR^{\#}, \alpha^{\#}).$ 
Bouwknegt, Evslin and Mathai in Ref.\ \cite{BEM-TTD}
and Mathai and Rosenbeg in Ref.\ \cite{MR1} 
identified the spectrum of the 
$C^{\ast}-$algebra $\A^{\#}$ with the Topological T-dual
spacetime $E^{\#}.$ In addition they identified the 
Dixmier-Douady invariant of $\A^{\#}$ with the dual
$H-$flux $H^{\#}$ on the T-dual spacetime. 

It is strange that the topological class of the bundle $p:E^{\#} \to X$ calculated using the crossed product construction (as described in the previous paragraph) always agrees with the topological class
of the physical T-dual spacetime obtained from String Theory calculations for
all known examples. Also it turns out that in all these physical examples  $H^{\#}$ is the field strength of the $B-$field on 
the physical T-dual spacetime.

Hence under the action of Topological T-duality a principal circle 
bundle $p:E \to X$ together with a $H-$flux $H \in H^3(E,\KZ)$ is 
transformed into a dual principal circle bundle $p^{\#}:E^{\#} \to X$ 
together with a dual $H-$flux $H^{\#} \in H^3(E^{\#},\KZ)$ on it. This transformation is termed the Topological T-duality Transformation or
Topological T-duality in short.

Motivated by the work of Mathai and Rosenberg 
(Ref.\ \cite{MR1}), Bunke and Schick in Ref.\ \cite{BunkeS1}
derived the Topological T-duality transformation using
methods from Algebraic Topology. In Ref.\ \cite{BunkeS1}
the authors define 'pairs' over a topological space --- where
a pair over a topological space $X$ consists of
a principal circle bundle $p:E \to X$ together with a class
$H\in H^3(E,\KZ).$

The authors define a functor  
$P: X \to$ \{ {\bf Equivalence classes of Pairs over $X$} \} 
on the category $\mathbf{Top}$ with objects topological
spaces and arrows all continuous maps which associates to
each topological space $X$ the set of all 
pairs over $X.$ They show that $P$ is a 
$\mathbf{Set}-$valued presheaf on $\mathbf{Top}$ i.e., 
a functor 
$P:\mathbf{Top}^{op} \to \mathbf{Set}.$

The authors show that $P$ is a representable functor and 
in particular it has a classifying space. 
They further show that there is a natural homotopy
automorphism of the classifying space of this functor. 
The authors show that this automorphism and the 
self-transformation it induces of the functor 
$P(X)$ agrees with the Topological T-duality transformation
of isomorphism classes of pairs consisting of a 
principal circle bundle over $X$ together with a $H-$flux
studied in Refs.\ \cite{BEM-TTD, MR1}. 

\subsubsection{Main Results and Physical Relevance}
The aim of this paper is to define a higher version of
Bunke-Schick's functor $P(X)$ (where
$P:\mathbf{Top}^{op} \to \mathbf{Set}$ and $X$
is any topological space) using the
homotopy theory of the moduli space of
String Field Theory backgrounds over $X.$
We construct this moduli space by geometrizing
$P$ using the category of elements construction.

We now outline the main results of this paper and 
explain their connection to String Theory.

To construct the higher version of Bunke-Schick's 
functor we first define the category whose objects
are pairs with morphisms of pairs as arrows.
In Sec.\ (\ref{SecIntro}) we define the category
of pairs $\C$ based on the work of Ref.\ \cite{BunkeS1}.
We also note that there is a natural forgetful functor
$F:\C \to \mathbf{Top}$ which sends a pair to its base space.

We argue using String Theory that we may
identify pairs with SFT backgrounds and morphisms
between pairs as SFT gauge transformations which don't
distort the background too much. We identify $\C$
with a topological approximation to the low-energy 
effective theory obtained from SFT as in Ref.\ \cite{SFTDiffeo}.

%
%
%
In Sec.\ (\ref{SecGXHot}) we begin by
geometrizing Bunke-Schick's functor $P.$
We then use this to define the moduli space $G(X)$
which we will use to construct the functors $P_0, P_1.$

We first define the category $\HC$ whose objects
consist of homotopy equivalence classes of 
pairs and whose morphisms are induced from
morphisms in $\C$ and show that there is a natural
forgetful functor $\pi: \C \to \HC.$ 
We demonstrate that $F \simeq E \circ \pi$
and that $E:\HC \to \mathbf{Top}$
is the category of elements of 
$P:\mathbf{Top}^{op} \to \mathbf{Set}$
hence the functor $P$
has a geometric interpretation as the fibration 
$E:\HC \to \mathbf{Top}.$

We use the idea of categorical equivalence of fibrations of 
categories to construct for any $CW-$complex $X$ 
the moduli space $G(X)$ of SFT backgrounds which
are pairs over $X.$ The points in this moduli space 
consist of homotopy equivalence
classes of pairs over $X$ and hence to SFT gauge
equivalence classes of SFT backgrounds with $H-$flux 
which are principal circle bundles over $X.$ Paths in 
$G(X)$ correspond to those SFT gauge transformations
between pairs corresponding to morphisms of pairs 
covering a self-homeomorphism of
the base $X.$ These are exactly the SFT gauge
transformations which don't distort the background too
much. 

In Sec.\ (\ref{SecP_i}) we prove that the homotopy
theory of the moduli space $G(X)$ then gives us invariants
$P_k(X)$ of $X$ such that $P_0(X)$ is Bunke-Schick's
functor $P(X)$ restricted to the category of $CW-$complexes
and cellular maps, $P_1:\mathbf{CW}^{op} \to \mathbf{Grpd}$ 
is a higher version of Bunke-Schick's
functor and the $P_k(X)$ are trivial if 
$k \geq 2.$ We geometrize $P_1$ by proving that
elements of $P_1(X)$ have a geometric interpretation
as lifts of the action of the mapping class group
of $X$ denoted $MCG(X)$ on isomorphism classes
of pairs over $X.$
We examine $P_1(X)$ when $X$ is a surface and $X$ 
is a three-manifold in particular, a knot complement.
Thus, $P_0$ takes isomorphic values on homotopy
equivalent $CW-$complexes. However, $P_1$ takes
values which are equivalent groupoids on 
isotopy equivalent $CW-$complexes.

We conclude this paper by discussing three possible
extensions of the ideas in this paper in Secs.\ 
(\ref{SecPropP0P1}, \ref{SecP32P1}, \ref{SecTFold})
respectively.

\subsubsection{Outline of Paper}
We now present a detailed outline of this paper 
with section and subsection
numbers in the paragraphs following this paragraph
up to the end of the section. 
We also give a brief introduction to closed bosonic
String Field Theory (SFT) at the end of this section.
We will use arguments from SFT throughout the paper.

{\flushleft{\bf{Outline of Sec.\ (\ref{SecIntro}):}}}
In Sec.\ (\ref{SecIntro}) of 
this paper we use Ref.\ \cite{BunkeS1} to define the 
category which we denote by $\C$ whose objects
are pairs over an 
arbitrary space $X$ in ${\bf Top}$ consisting of a principal
circle bundle $p:E \to X$ and a $H-$flux $H \in H^3(E,\KZ).$
The morphisms between two pairs are equivariant maps between the
underlying bundles of those two pairs
which induce maps on the cohomology
of the total spaces of the bundles which
map the $H-$flux on one pair to the $H-$flux on the other.

We show in SubSec.\ (\ref{SSecDefCCat}) below 
that every morphism of pairs is
a composite of some number of
isomorphisms of a pair over the same base with
pullbacks of a pair over a space along arbitrary continuous
maps between two topological spaces.
We show that we can define the category $\C$ above 
using either definition of morphism of pairs.
We show that there is a natural forgetful
functor $F: \C \to \mathbf{Top}$ which
sends a pair $(E,H)$ over $X$ to the base space $X.$
We also study morphisms of pairs over two different
bases and over the same base in detail in 
SubSec.\ (\ref{SSecMorC}).

Physically, we view the objects in the 
category $\C$ as consisting of all possible Type II
flux backgrounds with sourceless $H-$flux.
All these backgrounds possess a free circle action. 
We also assume Type II String Theory propagates on 
these backgrounds.

We describe this situation using ideas from
String Field Theory (SFT), in particular closed bosonic
SFT (see Ref.\ \cite{Horowitz}
for a general introduction).
The massless modes of SFT consist of the graviton
and the $B-$field. In Sec.\ (2.1) of Ref.\ \cite{SFTDiffeo}, 
a consistent reduced 
description for closed bosonic SFT was derived in which
the effective action for the
massless modes of SFT was calculated by
projecting the full String Field Theory onto its massless
modes by integrating out the massive modes. 

We argue that
{\em objects}  in the category $\C$ are exactly the valid backgrounds
for the projected SFT (see SubSec.\ (\ref{SSecSFT-TTD})
for details) since each pair keeps only the
topological information (i.e., the bundle characteristic class and
value of $H-$flux) about the massless modes of 
closed bosonic SFT.

The SFT action has infinitesimal
gauge symmetries which form a $L_{\infty}-$algebra.
In the reduced description in Ref.\ \cite{SFTDiffeo}
a reduced set of gauge symmetries are present.
In this paper we consider finite (or exponentiated)
gauge symmetries of the reduced SFT. 
We argue that {\em morphisms} in $\C$ may be identified with
background preserving
and background changing exponentiated
gauge transformations of the reduced SFT 
(see SubSec.\ (\ref{SSecSFT-TTD})
for details).

{\flushleft{\bf{Outline of Sec.\ (\ref{SecGXHot}):}}}
In Sec.\ (\ref{SecGXHot}) we first geometrize Bunke-Schick's
functor $P$ by showing it is equivalent to a categorical
fibration. Then we construct a categorically equivalent
presheaf of groupoids on $\mathbf{Top}$ from a 
generalization of this fibration.

We first construct a category whose objects are 
homotopy equivalence classes of 
pairs denoted $\HC$ and whose morphisms are induced
from $\C.$ We show that there is a natural map $\pi: \C \to \HC.$
We geometrize Bunke-Schick's functor
$P$ by demonstrating that there is a categorical
fibration $E: \HC \to \mathbf{Top}$ which has
$P$ as its category of elements.
 
Since the Bunke-Schick
functor $P$ is a presheaf of sets on $\mathbf{Top},$
it naturally defines a functor
$E: \HC \to \mathbf{Top}$ as its category of
elements (see Ref.\ \cite{nLab-CatEl}) and we show
that $F \simeq E \circ \pi.$ The fiber of the functor
$E$ over a topological space $X$ is, by definition of 
the category of elements, a small category whose
set of objects is the set $P(X)$ and whose only
arrow is the identity.
Due to this it is natural to view objects in $\HC$ as defining
a generalization of the idea of a topological
space.

Hence, for any topological space $X,$
the subcategory of $\HC$ mapping to $X$ under
$E$ should be viewed as equivalent
to $X$ (see discussion after Thm.\ (\ref{ThmHCTEqv}) below).
This subcategory is the subcategory $\pi(\C_X)$ of $\HC$
where $\C_X$ is the subcategory of $\C$ consisting of all
pairs in $\C$ over $X.$

We cannot use this category in what follows for technical reasons
(see SubSubSec.\ (\ref{SSecNonTrivP_k}) below) so
we use a subcategory of this category (termed $\pi(\F_X)$)
which has the same objects but with only iso-arrows as follows:
We identify the objects of $P(X)$ with
SFT vacua which are pairs up to 
SFT gauge transformations
induced by bundle gauge transformations
of the underlying circle bundle of the pair.
Since the category $P(X)$ is a {\em set} these
vacua have only {\em identity} gauge transformations 
as morphisms. It is natural to define a subcategory
of $\HC$ (denoted $\pi(\F_X)$ below)
which has the same objects as $P(X)$
but has arrows between objects which
are images under $\pi$ of isomorphisms
of pairs in $\C$ covering a self-homeomorphism
$f:X \to X.$ This is the subcategory $\pi(\F_X)$ above.

Physically, the objects of $\pi(\F_X)$ correspond to 
SFT vacua which are homotopy equivalence
classes of pairs over $X$ with equivalence classes
of SFT gauge transformations between these
vacua which are isomorphisms of pairs covering
a nontrivial self-homeomorphism $f:X \to X$ as arrows.
The arrows of $\pi(\F_X)$ then must
correspond to the action of SFT gauge transformations on
these backgrounds which preserve the fiber structure
and do not the underlying background too much.
We show in this Section that this class of
gauge transformations and backgrounds emerges
naturally from SFT. 

The category $\F_X$ used in constructing
the extension $\pi(\F_X)$ of $P(X)$ in the previous
paragraph has a very
natural mathematical definition as the 
essential fiber category of the functor 
$F:\C \to \mathbf{Top}.$ 
We prove in Lem.\ (\ref{LemPiFXGrpd})
and Thm.\ (\ref{ThmPiFXDef}) that the essential
fiber of $E$ is the small subcategory $\pi(\F_X)$
of $\HC.$ We may view $\pi(\F_X)$ as a categorification
of Bunke-Schick's functor $P(X)$ which turns the
{\em set} $P(X)$ into the {\em small groupoid} $\pi(\F_X).$

In this paper we principally study the assignment
$X \mapsto \pi(\F_X).$ We show in 
SubSec.\ (\ref{SSecNMFE})
that we may use this assignment to define a higher version of 
Bunke-Schick's functor $P.$ To do this we extend
the assignment $X \mapsto \pi(\F_X)$ above
to a functor $\M: \mathbf{Top}^{op} \to \mathbf{Grpd}$
and prove that $\M$ is categorically equivalent to
the functor $E$ above. In Sec.\ (\ref{SecP_i})
we argue that $\M$ is a generalization of $P.$

In SubSec.\ (\ref{SSecNMFE}) we discuss the construction
of the functor $\M$ from $E$ in detail using the
analogue of the Grothendieck Construction for Street Fibrations.
The value of $\M$ on a $CW-$complex $X$ is a 
small groupoid $\M(X)$ which is equivalent to $\pi(\F_X).$
Recall that the Bunke-Schick functor $P$ is
a $\mathbf{Set}-$valued presheaf on $\mathbf{Top}$
and the underlying set 
of objects of the groupoid $\M(X)$ may be naturally
identified with the set $P(X).$ Thus$\M(X)$ is a
`categorification' of Bunke-Schick's functor $P(X).$ 

{\flushleft{\bf{Outline of Sec.\ (\ref{SecP_i}):}}}
In this Section we assume that $X$ is a
$CW-$complex. This is because we study
the homotopy theory of a classifying space
constructed from $\M(X)$ and for this it
is essential for $\M(X)$ to be a small category.
For this, we need to assume that $X$ is a 
$CW-$complex or we cannot bound the
cardinality of the space of pairs over $X.$

In this Section we prove that the homotopy 
theory of the moduli space $G(X)$ obtained from
$\M(X)$ by simplicial realization (see below)
gives us invariants $P_k(X)$ of $X$ such
that $P_0(X)$ is Bunke-Schick's functor 
$P(X),$ $P_1(X)$ is a higher version of 
Bunke-Schick's functor and the $P_k(X)$
are trivial if $k \geq 2.$

We had noted above that $\M(X)$ should be
viewed as a categorified version of $P(X).$ 
However, since $P(X)$ is a set and $\M(X)$ is
a groupoid it is not clear how the two functors
are related to each other or how to understand
and to interpret physically any
information contained in $\M(X).$

We do this in Sec.\ (\ref{SecP_i}) by using the functor
$\M$ and the groupoid $\M(X)$ above
to construct a `higher' Topological T-duality functor by
passing to the simplicial category $\Simp(\M(X))$ 
associated to the groupoid $\M(X).$ 
We show that this gives us a natural assignment 
$X \mapsto \M(X) \mapsto |\Simp(\M(X))|$
where $|.|$ denotes the geometric realization functor of
a simplicial category.
We show that this defines a new functor 
$G: \mathbf{CW}^{op} \to \mathbf{Top}$
such that $G(X)$ is the geometric realization of the
simplicial classifying space of the small category
$\M(X)$ for every $X$ in ${\mathbf{CW}}.$ Here, $X$
must be a $CW-$complex for technical reasons.

We show in Sec.\ (\ref{SecP_i})
that Bunke-Schick's functor $P(X)$ surjects 
onto $\pi_0(G(X))$ by a map $\phi: P(X) \to \pi_0(G(X)).$
We argue in Sec.\ (\ref{SecP_i}) that is is possible
to naturally define $P_i(X), i >0$ as
functors obtained from the higher homotopy
groupoids $\Pi_i(G(X)), i >0$ (see Ref.\ \cite{GraVit}) by
$2-$pullback \cite{nLab-2Pbck} along $\phi.$
We prove that we can obtain exactly
two nontrivial functors $P_0$ and $P_1$ using this 
method, and all other functors $P_k, k \geq 2$ are
trivial. We then identify $P_1(X)$ with a 'higher' 
Topological T-duality functor.

We had geometrized Bunke-Schick's functor $P_0(X)$ (see discussion after
Thm.\ (\ref{ThmHCTEqv})) by identifying $E:\HC \to \mathbf{Top}$
with the category of elements of $P_0(X).$
We geometrize $P_1(X)$ by proving that elements of $P_1(X)$
correspond to actions of the mapping class group of $X$
(denoted $MCG(X)$) on isomorphism classes of pairs over $X.$
We prove that elements of $P_1(X)$ may be identified 
with isomorphism classes of pairs over $X$ which 
remain the same when pulled back along an
{\em isotopy class} of self-homeomorphisms of $X.$ 
Hence, $MCG(X)$ naturally acts on isomorphism
classes of pairs over $X.$ We give examples of 
pairs over $\KT^2$ and pairs over a knot complement.

We note that $P_0(X)$ is a homotopy invariant
functor on the category ${\mathbf{CW}}$ i.e., it
takes the same value on homotopy equivalent spaces. 
However, due to the above argument, 
$P_1(X)$ is an isotopy invariant functor on $\mathbf{CW},$
that is, its values on spaces which differ by an isotopy equivalence
are naturally equivalent groupoids.

In SubSec.\ (\ref{SecPropP0P1}) we show that
unlike $P(X)$ and $\M(X)$ we can compare 
$P_1(X)$ with $P_1(A)$ for any $A \subseteq X$
due to an exact sequence joining $P_1(X)$ and $P_1(A).$ 

\subsubsection{Introduction to SFT}
Throughout the paper, we also point out an interesting
connection of the above view of Topological T-duality
with closed bosonic SFT 
(see SubSec.\ (\ref{SSecSFT-TTD}) for more details). 

We now discuss this connection in detail.
The category $\C$ described above has as objects pairs 
and has as arrows morphisms of pairs. 
Thus the category $\C$ has as objects topological
approximations to String Theory backgrounds which are
principal circle bundles over a base with $H-$flux. Hence to
describe $\C$ physically we need a way to describe all such
backgrounds. 

Closed bosonic SFT provides a way to describe all
these String backgrounds at once and all transformations
between them since it is a background independent way 
of describing String Theory (see Ref.\ \cite{KZw, Horowitz}).

Closed bosonic SFT is written in terms
of a String Field $\Psi$ and an action for $\Psi$
(see SubSec.\ (\ref{SSecSFT-TTD}) below).
The theory has a large number of infinitesimal symmetries,
which form an $L_{\infty}$ Algebra (see 
Ref.\ \cite{nLab-LInf}). The $L_{\infty}-$algebra structure 
defines a commutator bracket of two symmetries and
also higher commutator brackets. The defining
relations of the $L_{\infty}-$algebra consist
of Jacobi-like identities between lower commutator
bracktes of the infinitesimal symmetry generators
such that the right-hand side of the Jacobi-like
identity do not equal zero but equal a higher 
commutator bracket.

Thus infinitesimal gauge transformations in
SFT do not form a gauge group
when exponentiated unlike a gauge field theory
since all commutator brackets in the theory
only commute up to higher brackets.

In this paper we consider closed bosonic String
Field Theory backgrounds and exponentiated
gauge transformations between them. These
gauge transformations have not been studied
much, but we do not study the entire set of
transformations but only a restricted one.

Mazel and coworkers in Ref.\ \cite{SFTDiffeo} Sec.\ (2.1) 
derived a consistent reduced description of closed
bosonic SFT projected
onto its massless modes for spacetimes with
small curvature and weak field strengths. The
massless modes for closed bosonic SFT consist of
the graviton and the $B-$field.

Hence in this paper we identify
the objects of the category $\C$
with these backgrounds and the 
morphisms in $\C$ with
certain types of gauge transformations of the reduced SFT.

We study exponentiated transformations of classical
closed bosonic SFT which do not
deform the background too much. It can be shown
(see Ref.\ (\cite{SFTDiffeo})) that these gauge
transformations correspond to self-diffeomorphisms
of the spacetime background which move points, 
to maps between two different spacetime backgrounds 
of differing characteristic class 
(see Ref.\ (\cite{KZw}) for the
case of a toroidal background) and also to gauge
transformations of the $H-$flux on the background.

In SubSec.\ (\ref{SSecSFT-TTD})
we identify objects in $\C$ with the subset of string
backgrounds in SFT which
are principal circle bundles over a base 
(the base then corresponds to the 
uncompactified directions of the string background) 
as in Refs.\ \cite{D-Hull, HullT}. We also 
argue that topological approximations to some 
background-changing gauge transformations of 
SFT give maps between
backgrounds and correspond to morphisms in $\C.$

We cannot study all the gauge transformations
of the reduced SFT between two given spacetime
backgrounds in this paper. In this paper we mainly
study gauge transformations of the reduced SFT
backgrounds corresponding to isomorphisms 
of pairs over a fixed base $X$ covering the identity
and isomorphisms of pairs over $X$ covering a fixed
self-homeomorphism of $X.$ We study these types of
morphisms of pairs in detail in Sec.\ (\ref{SSecMorC}).
In that Section and in Sec.\ (\ref{SecPropP0P1})
we also remark on morphisms of pairs covering an arbitrary
map $f:X \to X.$

We argue in Sec.\ (\ref{SecP_i})
that both Bunke-Schick's functor $P$ and 
the higher Topological T-duality functor defined in this paper
correspond to a gauge equivalence class of data obtained
from SFT on the background $E$ under 
{\em background-preserving} gauge transformations 
(see Ref.\ \cite{KZw, Horowitz}).

We conclude this paper by discussing three possible
extensions of the ideas in this paper in Secs.\ 
(\ref{SecPropP0P1}, \ref{SecP32P1}, \ref{SecTFold})
respectively. In Sec.\ (\ref{SecPropP0P1}) we discuss a
natural extension of the formalism in this paper which
lets us define more nontrivial invariants of a space. 
In this extension the invariants $P_k, k \geq 2$ need not 
be zero automatically. We also propose a second way of
extending the functors $P_k$ using Thomason
Cohomology so that functors similar to the $P_k$
may be defined for spaces with extra data (i.e., spaces
with not just background sourcless $H-$flux but other
similar data like background sourceless Ramond-Ramond 
flux).

In Sec.\ (\ref{SecP32P1}) we relate the functors
$P_0, P_1$ defined in this paper to the functor
$P_{3,2}$ defined in Ref.\ (\cite{Pan2}). 

It is well known from the work of Hull and Zwiebach in
Ref.\ \cite{HullZ} that SFT on toroidal
backgrounds must be described using the
original coordinates and the `dual' coordinates
simultaneously. This space is termed a doubled geometry in
Refs.\ \cite{HullZ, KZw}. 

This doubled geometry may be identified with
the correspondence space in the `Diamond Diagram' 
of Topological T-duality (see Refs.\ \cite{BEM,MR1,BunkeS1}
for details about the Diamond Diagram and the
correspondence space). In Sec.\ (\ref{SecTFold}) 
we argue that the correspondence space of
Refs.\ (\cite{MR1,BunkeS1}) may be directly constructed
using the formalism described in this paper. 
We further argue that the T-fold
geometries of Dabholkar and Hull (\cite{D-Hull, HullT}) 
may also be naturally constructed by a generalization of
the ideas in this paper.

\section{ Morphisms in the Category of Pairs \label{SecIntro}}
In this section we will construct the category of pairs $\C$ using
the pairs and morphisms of pairs 
described Ref.\ \cite{BunkeS1}. As we had discussed in 
SubSec.\ (\ref{SSecTTDP}) above we view a pair $(E,H)$
over $X$ as a SFT
background $E$ (with background $H-$flux $H$)
which is circle compactified over a spacetime $X.$ 

More precisely in SubSec.\ (\ref{SSecSFT-TTD})
we will argue that the objects of $\C$ i.e., the data of a 
pair $(E,H)$ label SFT backgrounds with 
a free circle isometry whose quotient by the circle isometry
is the `base' spacetime $X.$
The morphisms in the category $\C$ then correspond to
certain background-preserving or background-changing
gauge transformations of the SFT. These
morphisms cannot correspond to all such gauge
transformations, since such gauge tranformations need
not respect the circle isometry, while all the morphisms in 
$\C$ do respect this.

In SubSec. (\ref{SSecDefCCat}) below, we discuss the construction
of the category of pairs $\C.$ 
In SubSec. (\ref{SSecMorPf1}) below, we discuss some properties arbitrary
morphisms of pairs.
In SubSec. (\ref{SSecMorC}) below, we compute all morphisms of pairs 
covering all possible maps $f:X \to Y.$
We will use the results in these three subsections when
we construct higher Topological T-duality functors in 
Sec. (\ref{SecGXHot}).
\subsection{Definition of the Category of Pairs $\C$}
\label{SSecDefCCat}
In Ref.\ \cite{BunkeS1} Bunke and Schick defined a pair over a topological space---a pair over a space $X$ consists of a principal circle bundle $E \to X$ with a class $H \in H^3(E, \KZ).$
The authors of Ref.\ \cite{BunkeS1} used this definition of pairs over a
topological space  to construct a functor $P$ on the category of
topological spaces and maps such that $P(X) =${\{\bf Equivalence classes of pairs over $X$\}} and proved that this functor is representable. 
They further showed that Topological T-duality for circle bundles may be defined purely geometrically as a homotopy automorphism of the classifying space of this functor and the induced automorphism of 
$P$ satisfies all the properties (including induced isomorphisms in $K-$theory) that are satisfied by the formalism of Topological T-duality of Mathai and Rosenberg.

We would like to extend $P$ to higher Topological T-duality functors.
To do this precisely we now define the category $\C$ 
whose objects are the Topological T-duality pairs of Bunke and Schick 
(as in Ref.\ \cite{BunkeS1}) and whose morphisms 
are the morphisms 
of these pairs in Ref.\ \cite{BunkeS1}. 
We define $\C$ formally in 
Def.\ (\ref{DefCCat2}) below. 
We then define a new category
$\C'$ with the same objects as $\C$ but with a
different definition of morphism of pairs in 
Def.\ (\ref{DefCCat1}) below.
We then show that $\C$ and $\C'$ are equivalent
categories in the two 
paragraphs after Def.\ (\ref{DefCCat1}). 
Thus the category $\C$ has
two equivalent definitions due to two different but equivalent 
definitions of morphism. We will use this characterization
of $\C$ in later sections of this paper.

We have a forgetful functor $F:\C \to {\mathbf {Top}}$ which sends a 
pair to its underlying topological space that is given a pair
$(E \to X, H)$ over $X,$ the functor $F$ acts as
$F:(E \to X, H) \mapsto X.$ 
We also have the Topological T-duality functor 
$P:{\mathbf{Top}^{op}} \to {\mathbf{Set}}$ of Ref.\ \cite{BunkeS1}. 
As shown in Sec. (2.4) of Ref.\ \cite{BunkeS1} the
Topological T-duality transformation 
sends an isomorphism class of a pair over $X$ to 
the isomorphism class of a dual pair over $X$ and
is an automorphism of the functor $P.$ We say that 
$P$ possesses the T-duality symmetry (see 
Ref.\ \cite{BunkeS1} for details).

In this paper we use $\C$ and $P$ to construct  
a second contravariant functor which takes values
in groupoids
$P_1: {\mathbf{Top}^{op}} \to {\mathbf{Grpd}}$
possessing the T-duality symmetry.
We characterize $P_1(X)$ in terms
of automorphisms of pairs over $X$ and show that
$P_1$ is also invariant under the action of 
Topological T-Duality. 

For every $X$ in ${\mathbf{Top}}$ we define a subcategory 
$\C_X$ of $\C$ as follows: The objects of $\C_X$ are 
pairs over $X$ and the arrows of $\C_X$ 
are morphisms of pairs over $X$ which cover an arbitrary
map $f:X \to X$ in ${\mathbf{Top}}.$
We use $\C$ and $\C_X$ throughout this paper.

We now define the category $\C$ (also see 
Ref.\ \cite{Pan2}). The objects of $\C$ are pairs over a
topological space and a morphism
of pairs in $\C$ is an equivariant morphism of 
principal circle bundles 
compatible with the $H-$flux as in 
Refs.\ \cite{UMDThesis, Pan2}. 

\begin{definition}
We define the objects of the category $\C$ as
ordered pairs $(E\to X, H)$ consisting of a principal circle bundle
$E \to X$ over a topological space $X$ together with a class termed the 
$H-$flux $H \in H^3(E,\KZ).$ We denote pairs by $(E,H)$ or $(E \to X, H).$

Let $f:Y \to X$ be any continuous map of topological spaces.
Let $(E' \to Y, H')$ be a pair over $Y$ and 
$(E \to X, H)$ be another pair over $X.$ 
We define a morphism of pairs from $(E \to X, H)$ to $(E' \to Y, H')$
 in $\C$ to be a commutative diagram of the form
\begin{equation}
\begin{CD}
E' @>\tilde{f}>> E \\
@VV{p'}V   @VV{p}V \\
Y @>>f> X \label{MorP} \\
\end{CD}
\end{equation}
with $\tilde{f}^{\ast}(H) = H'$ and
where the map $\tilde{f}$ need not be a pullback but need only be
$S^1-$equivariant. 
\label{DefMorPair2}
\label{DefCCat2}

It is clear that with the above definitions $\C$ is a category.
\end{definition}

We now follow 
Ref.\ \cite{BunkeS1, UMDThesis, Pan2} and define a 
morphism of pairs in $\C$ as a composite of two
types of arrows between objects of $\C.$ 

After this we define another category $\C'$ with the same
objects as $\C$ but with a different definition of arrows in
Def.\ (\ref{DefCCat1}) below.  
We show that these two definitions give equivalent categories after Def.\ (\ref{DefCCat1}) below.
\begin{definition}
We define the category of pairs --- denoted $\C'$ --- as follows:
\begin{enumerate}
\item[] The category of pairs $\C'$ has as objects ordered pairs 
$(E\to X, H)$ consisting of a principal circle bundle
$E \to X$ over a topological space $X$ together with a class termed the 
$H-$flux $H \in H^3(E,\KZ).$ We denote the above pair either as
$(E,H)$ or as $(E \to X, H).$

\item[] We define a morphism of pairs in $\C'$ to be generated by
composing isomorphisms of pairs over a fixed base together with pullback
of pairs along an arbitrary continuous map $f:Y \to X.$ 
These two morphisms of pairs are defined as follows:
\begin{enumerate}
\item{{\bf Isomorphisms of pairs over a fixed base:}}
As in Ref.\ \cite{BunkeS1} given two pairs $(E,H)$ and $(E',H')$ over
$X$ we say $(E',H')$ is isomorphic to $(E, H)$ if there is an isomorphism of principal circle bundles $h:E' \to E$ over $X$ of the form 
\begin{equation}
\begin{CD}
E' @>h>> E \\
@VV{p'}V   @VV{p}V \\
X @>>id> X \label{PairIso} \\
\end{CD}
\end{equation}
such that $h^{\ast}(H) = H'.$ 
\label{Mor1Pt1}

\item{{\bf Pullback of pairs along an arbitrary continuous map $f:Y \to X$ }}
In addition as in Ref.\ \cite{BunkeS1} we define a morphism
from the pair $(F \to X, H_F)$ to the pair $(E = f^{\ast}F \to Y, H_E)$
as a pullback square of principal circle bundles along the map
$f:Y \to X$
\begin{equation}
\begin{CD}
(f^{\ast}F, H_E = f^{\ast}(H_F)) @>\tilde{f}>> (F, H_F) \\
@VV{\phi}V   @VV{\pi}V \\
Y @>>f> X \label{PairPbck} \\
\end{CD}
\end{equation}
such that $f^{\ast}(H_F) = H_E.$
\label{Mor1Pt2}
\end{enumerate}
It is clear with these definitions of objects and arrows, 
$\C'$ forms a category.
\end{enumerate}
\label{DefCCat1}
\end{definition}

Note the following for Part (\ref{Mor1Pt1}) of 
Def.\ (\ref{DefCCat1}) above:
\begin{itemize} 
\item From Ref.\ \cite{BunkeS1} Lem.\ (2.1) isomorphic
pairs over the same base $X$ as in Part (\ref{Mor1Pt1}) of
Def.\ (\ref{DefCCat1}) above
are homotopy equivalent by a `homotopy of pairs'
defined in Ref.\ \cite{BunkeS1}.
\item Also from Ref.\ \cite{BunkeS1}
the isomorphism $h$ in Part (\ref{Mor1Pt1}) of 
Def.\ (\ref{DefCCat1}) above
need not be a pullback map it need
only be a circle bundle morphism. 
\item Note that the bundle morphism $h$ in Part (\ref{Mor1Pt1}) of
Def.\ (\ref{DefCCat1}) above is also a gauge transformation
of principal circle bundles underlying the above pairs since the
two bundles must be isomorphic by commutativity of
the Diagram Eq.\ (\ref{PairIso}).
\end{itemize}

Note that the definition of morphism of
pairs in Eq.\ (\ref{MorP}) is equivalent to pullback
morphisms between the homotopy equivalence classes of
pairs defined in Ref.\ \cite{BunkeS1} by the following
argument: Suppose we were given a morphism between
pairs covering a map $f: Y \to X$ as in Eq.\ (\ref{MorP}).
We construct the pullback square along the map $f: Y \to X$
associated to the spaces $E, X,Y$ 
with the pullback $f^{\ast}E$ replacing the circle bundle
$E'$ in Eq.\ (\ref{MorP}). Using the universal property of 
pullback bundles there is a unique natural
map $\nu:E' \to f^{\ast}(E)$ which causes $\tilde{f}$ to factor as
$\tilde{f} = \nu \circ f^{\ast}.$ This shows that Eq.\ (\ref{MorP})
factors uniquely into the composition of two commutative diagrams, 
one of which is a unique pullback square of the bundle 
$E \to X$ over $f:Y \to X$ and the other a unique isomorphism of a pair
over $Y.$ Hence it is easy to show that any morphism of pairs in the sense
mentioned in Eq.\ (\ref{MorP}) above is exactly equivalent to a
compositions of a unique pullback square over the map $f:Y \to X$ 
and a unique isomorphism of pairs over a fixed base. 
The converse is trivial and hence for the
rest of this paper
we consider $\C$ and $\C'$ to be {\em equivalent categories}.  

Thus we may alternately define $\C$ as consisting of objects which
are pairs over topological spaces
with arrows which are morphisms of pairs in the sense of  
Def.\ (\ref{DefMorPair2}).  It is clear from the above
that we may use either Def.\ (\ref{DefCCat1}) or Def.\ (\ref{DefCCat2})
to define morphisms in $\C.$

Note that if we have an isomorphism of pairs covering an arbitrary
map of spaces $f:Y \to X,$ then, $f$ has to be a homeomorphism of spaces.
In addition, when we factor this isomorphism as a  composite of
a unique isomorphism of pairs over $Y,$ and a unique pullback map
of pairs covering the given map of spaces $f:Y \to X,$
(as in the discussion after Def.\ (\ref{DefCCat2}) above)
this pullback map has to be an isomorphism of pairs covering
the homeomorphism $f:Y \to X$ as well.

We now describe two useful functors connected to the category
$\C$ above which we will use throughout this paper:
There is a natural forgetful functor $F:\C \to {\mathbf{Top}}$ which
sends a pair $(E \to X, H)$ to the base space $X$ of the associated 
principal bundle and sends any morphism of pairs 
from $(E \to X, H)$ to $(E' \to Y, H')$ namely
\begin{equation}
\begin{CD}
E' @>\tilde{f}>> E \\
@VV{p'}V   @VV{p}V \\
Y @>>f> X \nonumber \\
\end{CD}
\end{equation}
(with $\tilde{f}^{\ast}(H) = H'$ )
to the map $f:Y \to X.$ We will examine this functor in detail
in SubSec.\ (\ref{SSecPCLift}) below.

The functor $P$ of Bunke-Schick (see Sec.\ (2.1.8) of Ref.\ \cite{BunkeS1}) 
associates to any topological space $X$ the set of Bunke-Schick 
isomorphism classes of pairs (see Def.\ (\ref{DefCCat1}))
over $X$ and to any map of topological spaces $f:Y \to X$ a map of pairs 
$P(f): P(X) \to P(Y)$. The functor is contravariant since pairs
over $X$ pullback to pairs over $Y.$ In SubSec.\ (\ref{SSecPCLift})
we will discuss the lift of $P$ to the category $\C$ above.

\subsection{Some Properties of Morphisms of Pairs}
\label{SSecMorPf1}
Let $f:Y \to X$ a continuous map and let $\tilde{f}$ be a morphism
of pairs from pairs over $Y$ to pairs over $X$ covering $f$
as in Eq.\ (\ref{MorP}) above.
In this subsection we prove some results about the behavior of 
pairs over $X$ under pullback by $f.$ 

First, we show using the results in Ref.\ \cite{BunkeS1} that 
pullbacks of isomorphic pairs over $X$ along a map 
$f:Y \to X$ are isomorphic (note that the map $f$ can be 
nontrivial as a map from $Y$ to $X$):
\begin{lemma}
Suppose $f:Y \to X$ is a map and let 
$(E_0, H_0)$ and $(E_1, H_1)$ be isomorphic
pairs over $X.$ Then the pullback pairs 
$(f^{\ast}E_0, f^{\ast}H_0)$ and 
$(f^{\ast}E_1, f^{\ast}H_1)$ are isomorphic
over $Y.$ \label{LemPairIso}
\end{lemma}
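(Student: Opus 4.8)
The plan is to use the functoriality of pullback together with the explicit description of isomorphism of pairs over a fixed base given in Definition~(\ref{DefCCat1}), Part~(\ref{Mor1Pt1}). By hypothesis there is an isomorphism of principal circle bundles $h : E_0 \to E_1$ over $X$ with $h^{\ast}(H_1) = H_0$. First I would recall the standard fact that the pullback functor $f^{\ast}$ on principal $S^1$-bundles over $X$ is a functor into principal $S^1$-bundles over $Y$: it sends the bundle isomorphism $h : E_0 \to E_1$ over $\mathrm{id}_X$ to a bundle isomorphism $f^{\ast}h : f^{\ast}E_0 \to f^{\ast}E_1$ over $\mathrm{id}_Y$, and it preserves the $S^1$-equivariance since $f^{\ast}E_i = \{(y,e) \in Y \times E_i : f(y) = p_i(e)\}$ carries the diagonal $S^1$-action and $f^{\ast}h(y,e) = (y, h(e))$ is manifestly equivariant. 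This produces the candidate isomorphism of the underlying bundles over $Y$.

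Second, I would check the compatibility with the $H$-fluxes. Let $\tilde f_i : f^{\ast}E_i \to E_i$ be the canonical maps in the pullback squares of Eq.~(\ref{PairPbck}), so that $H_{f^{\ast}E_i} = \tilde f_i^{\ast}(H_i)$ by definition of the pulled-back pair. The key point is the commutativity of the square
\begin{equation}
\begin{CD}
f^{\ast}E_0 @>f^{\ast}h>> f^{\ast}E_1 \\
@VV{\tilde f_0}V @VV{\tilde f_1}V \\
E_0 @>h>> E_1
\end{CD} \nonumber
\end{equation}
which is immediate from the formulas $\tilde f_i(y,e) = e$ and $f^{\ast}h(y,e) = (y,h(e))$. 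Applying cohomology and using $h^{\ast}(H_1) = H_0$ gives
\[
(f^{\ast}h)^{\ast}\bigl(H_{f^{\ast}E_1}\bigr) = (f^{\ast}h)^{\ast}\tilde f_1^{\ast}(H_1) = \tilde f_0^{\ast} h^{\ast}(H_1) = \tilde f_0^{\ast}(H_0) = H_{f^{\ast}E_0},
\]
so $f^{\ast}h$ is an isomorphism of pairs $(f^{\ast}E_0, f^{\ast}H_0) \xrightarrow{\ \sim\ } (f^{\ast}E_1, f^{\ast}H_1)$ over $Y$ in the sense of Part~(\ref{Mor1Pt1}) of Definition~(\ref{DefCCat1}).

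The only genuine subtlety — which I would call the main obstacle — is making sure the $H$-flux on the pulled-back pair is \emph{well defined} independently of choices: a priori $f^{\ast}H_i$ means $\tilde f_i^{\ast}(H_i)$, and one must be careful that the two pullback squares for $E_0$ and $E_1$ are set up so that the canonical maps $\tilde f_i$ are compatible with $f^{\ast}h$, i.e. that $f^{\ast}$ really is a functor on the category of pairs over $X$ with isomorphisms over a fixed base. Here I would lean directly on the discussion after Def.~(\ref{DefCCat2}), where it is shown that any morphism of pairs factors uniquely as a pullback square followed by an isomorphism over a fixed base; applying that factorization to the composite $h \circ \tilde f_0 = \tilde f_1 \circ (f^{\ast}h)$ of morphisms in $\C$ gives exactly the commuting square above and hence the claim. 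Everything else is routine diagram-chasing, and naturality of the pullback construction for principal bundles does all the real work.
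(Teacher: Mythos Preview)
Your proof is correct, but the paper takes a different route. Rather than pulling back the bundle isomorphism $h$ directly and chasing the $H$-flux through the commuting square as you do, the paper invokes Bunke--Schick's Lemma~(2.1) to replace the isomorphism $(E_0,H_0)\cong(E_1,H_1)$ by a \emph{homotopy of pairs}, i.e.\ a single pair $(E,H)$ over $X\times I$ restricting to $(E_i,H_i)$ at the endpoints. This pair is classified by a map $\phi:X\times I\to R$ into the Bunke--Schick classifying space, and composing with $f\times\id$ yields a pair over $Y\times I$ whose endpoint restrictions are the two pullback pairs; these are therefore homotopic, hence isomorphic. Your argument is more elementary and self-contained --- it needs only functoriality of pullback of principal bundles and naturality of $H^3$ --- while the paper's argument stays within the homotopy/classifying-space framework used throughout and in particular produces the homotopy of the pullback pairs (not just the isomorphism), which is what the surrounding text actually uses.
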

\begin{proof}
By Ref.\ \cite{BunkeS1}, the pairs $(E_0,H_0)$ and $(E_1, H_1)$
are homotopy equivalent over $X$ by a homotopy of pairs
defined in Ref.\ \cite{BunkeS1} which gives a canonical pair
$(E \to (X \times I), H).$ The pair $(E \to (X\times I), H)$
is pulled back from the universal pair over $R$ by a
classifying map $\phi: X \times I  \to R$ where $R$ is the 
Bunke-Schick classifying space in Ref.\ \cite{BunkeS1}.
The composite map $\phi \circ (f \times id)$ 
classifies a pair $(f^{\ast} \times 1)E$
over $Y \times I$ and the restriction of this pair
to $Y \times {i}, i = 0,1$ are isomorphic to the pullbacks
of $(E_i, H_i)$ with $i = 0$ and $1$ respectively.
Thus by Ref.\ \cite{BunkeS1} the two pullback
pairs are homotopy equivalent over $Y$ as well and hence
isomorphic over $Y.$
\end{proof}
Thus if $\phi:(E_0, H_0) \to (E_1, H_1)$ is an isomorphism of pairs
over $X$ then the pullback pairs $(f^\ast E_0, f^{\ast} H_0 )$
and $(f^\ast E_1, f^\ast H_1)$ are isomorphic over $Y$ by
an isomorphism of pairs obtained from $\phi$ and $f.$
Thus given $f:Y \to X,$
pullback extends to a nontrivial map between isomorphism classes 
of pairs over $X$to isomorphism classes of pairs over $Y.$ 
This map is the  image of $f$ under the functor $P$ 
and is the map of sets $P(f)$ defined in Ref.\ \cite{BunkeS1}.

Next we show that pulling back a pair over $X$ via two homotopic
maps $f:Y \to X$ and $g:Y \to X$ gives isomorphic pullback
pairs over $Y:$
\begin{lemma}
Suppose $f,g:Y \to X$ are maps and $f \sim g$ via a homotopy 
$H: I \times Y \to X.$ Then, the pullback of a pair $(E,[a])$ over $X$ with 
$[a] \in H^3(E,\KZ)$  by $f$ is isomorphic to the
pullback of the same pair over $X$ by $g.$ In addition
the pullback pairs are homotopic by a homotopy of
pairs in the sense of Ref.\ \cite{BunkeS1}.
\label{LemHotIso}
\end{lemma}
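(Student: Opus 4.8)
The statement to prove is Lemma \ref{LemHotIso}: if $f,g:Y\to X$ are homotopic via $H:I\times Y\to X$, then for any pair $(E,[a])$ over $X$ the pullbacks $f^{\ast}(E,[a])$ and $g^{\ast}(E,[a])$ are isomorphic over $Y$, and in fact homotopic by a homotopy of pairs in the sense of Ref.\ \cite{BunkeS1}. The natural approach is to run exactly the argument of the preceding Lemma \ref{LemPairIso}, but with the homotopy now living on the base rather than inside the space of pairs. Concretely, the pair $(E,[a])$ over $X$ is classified by a map $c:X\to R$ into the Bunke--Schick classifying space $R$ of Ref.\ \cite{BunkeS1}, i.e.\ $(E,[a])\cong c^{\ast}(\mathcal{E}_{univ})$ where $\mathcal{E}_{univ}$ is the universal pair over $R$.

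\textbf{Key steps.} First I would form the composite $c\circ H: I\times Y\to X\to R$. Pulling back the universal pair along $c\circ H$ produces a pair $(\widetilde E\to I\times Y,\widetilde H)$ over the cylinder $I\times Y$. This is by construction a homotopy of pairs in the precise sense of Ref.\ \cite{BunkeS1}, since it is pulled back from the universal pair over $R$ via a map out of a cylinder. Second, I would restrict to the two ends: at $\{0\}\times Y$ the classifying map is $c\circ H|_{0} = c\circ f$, and at $\{1\}\times Y$ it is $c\circ H|_{1} = c\circ g$. Hence the restrictions of $(\widetilde E,\widetilde H)$ to the two ends are classified by $c\circ f$ and $c\circ g$ respectively, so by functoriality of the classifying construction they are isomorphic to $f^{\ast}(E,[a])$ and $g^{\ast}(E,[a])$ respectively. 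Third, I would invoke the result of Ref.\ \cite{BunkeS1} that the two ends of a homotopy of pairs over $Y$ are isomorphic pairs over $Y$ — this is the same fact used at the end of the proof of Lemma \ref{LemPairIso}. Combining these three observations gives that $f^{\ast}(E,[a])$ and $g^{\ast}(E,[a])$ are isomorphic over $Y$, and the homotopy of pairs realizing this is $(\widetilde E,\widetilde H)$ itself, which is the extra claim in the statement.

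\textbf{Expected obstacle.} The only delicate point is the bookkeeping in the second step: one must be careful that "the pair classified by $c\circ f$ is isomorphic to $f^{\ast}$ of the pair classified by $c$" — this is just the statement that the classifying map of a pullback is the composite of the original classifying map with the pulling-back map, together with the fact that the classifying-space construction of Ref.\ \cite{BunkeS1} classifies pairs up to the relevant notion of equivalence (isomorphism / homotopy of pairs), not on the nose. As long as one cites the representability result of Ref.\ \cite{BunkeS1} for this, the argument is a direct transcription of the preceding lemma's proof with the cylinder in the base instead of over $X$. No genuinely new input beyond Ref.\ \cite{BunkeS1} is needed; the whole content is the homotopy invariance built into a representable functor, and the "homotopy of pairs" refinement comes for free from the cylinder pullback $(\widetilde E,\widetilde H)$.
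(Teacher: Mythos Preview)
Your proposal is correct and essentially identical to the paper's own proof: the paper also takes the classifying map $w:X\to R$, observes $w\circ f\sim w\circ g$ so the pullbacks are isomorphic by representability, and then cites Lemma~(2.2) of Ref.~\cite{BunkeS1} for the equivalence between isomorphism and homotopy of pairs. The only cosmetic difference is that you construct the homotopy of pairs explicitly as the cylinder pullback $(c\circ H)^{\ast}(\mathcal{E}_{univ})$ and then read off the isomorphism, whereas the paper first records the isomorphism from homotopic classifying maps and then invokes the isomorphism--homotopy equivalence in the other direction; both orderings amount to the same citation of Bunke--Schick's representability result.
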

\begin{proof}
Let $w: X \to R$ be the classifying map associated to
$(E,[\alpha]).$ Since $f \sim g$ it is clear the pullbacks
of $(E, [\alpha])$ along $f$ and along $g$ are isomorphic since
$w \circ f \sim w \circ g$
 
By the proof of Lem.\ (2.2) of Ref.\ \cite{BunkeS1},
isomorphism of pairs over the same base is equivalent to 
homotopy of pairs and the homotopy may be constructed naturally from the isomorphism.  
\end{proof}
Thus deforming a map $f:Y \to X$ in its homotopy equivalence
class doesn't change the associated map of sets $P(f): P(X) \to P(Y).$

It follows from Lemma (\ref{LemHotIso})
that if $X$ is homotopy equivalent to
$Y$ by maps $f:Y \to X, g: X \to Y$ with $f\circ g \sim \id_X,$
$g \circ f \sim \id_Y$
then the homotopy equivalence induces a bijection of sets
$P(f): P(X) \to P(Y).$ 

We also note the following fact which we will use below: Isomorphism classes of the pullback diagram
\begin{equation}
\begin{CD}
(E, \tilde{f}^{\ast}(H)) @>\tilde{h}_{\alpha}>> (E, H)\\
@VV{\pi}V   @VV{\pi}V \\
X @>>h_{\alpha}> X \nonumber \\
\end{CD}
\end{equation}

are in one-to-one correspondence with isomorphism classes of pairs 
over the mapping torus of $X$ by $h_{\alpha}.$
To prove this we show that every pullback of a pair over $X$ by a 
map $f:X \to X$  naturally gives rise to a pair on
the mapping torus $T_f X.$
\begin{lemma}
Let $f:X \to X$ be an arbitrary map, then every morphism of pairs
on $X$ of the form

\begin{equation}
\begin{CD}
(E, \tilde{f}^{\ast}(H)) @>\tilde{f}>> (E, H)\\
@VV{f^{\ast}\pi}V   @VV{\pi}V \\
X @>>f> X \nonumber \\
\end{CD}
\end{equation}

gives rise to a unique isomorphism class of a pair on the mapping torus $T_f X$ of $X$ by $f.$ The
isomorphism class of this pair does not change if we replace $f$ by a homotopic map.
\label{LemHomTf}
\end{lemma}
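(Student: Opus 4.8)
The plan is to build the pair over the mapping torus $T_f X$ directly from the clutching data of the morphism $\tilde f$, using the fact (from Def.\ (\ref{DefCCat2}) and the discussion after it) that $\tilde f$ factors uniquely as an isomorphism of pairs over $X$ composed with the canonical pullback square of $(E,H)$ along $f$. Recall $T_f X = (X \times [0,1]) / (x,1)\sim(f(x),0)$. First I would construct the total space: form $E \times [0,1] \to X \times [0,1]$, a principal circle bundle, and glue the fiber over $(x,1)$ to the fiber over $(f(x),0)$ using the bundle map $\tilde f : (f^{\ast}E)_x \to E_{f(x)}$ — equivalently, using the isomorphism of circle bundles $E \to f^{\ast}E$ provided by the universal property, as in the factorization after Def.\ (\ref{DefCCat2}). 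Call the resulting space $\hat E$; the projection descends to a principal $S^1$-bundle $\hat p : \hat E \to T_f X$ because the gluing map is $S^1$-equivariant (this is exactly the equivariance hypothesis on $\tilde f$ in Eq.\ (\ref{MorP})).

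Next I would produce the $H$-flux on $\hat E$. Pull $H$ back to $E \times [0,1]$ to get a class $H \times 1 \in H^3(E\times[0,1],\KZ)$; since the two ends are identified by $\tilde f$ and the hypothesis of the lemma is precisely $\tilde f^{\ast}(H) = \tilde f^{\ast}(H)$ — i.e. the restrictions of $H\times 1$ to $E\times\{0\}$ and (via the gluing) to $E\times\{1\}$ agree — a Mayer--Vietoris argument for the two pieces covering $\hat E$ (a collar neighborhood of the glued end and the rest) shows there is a class $\hat H \in H^3(\hat E,\KZ)$ restricting to $H\times 1$ on the interior. I would need to check the $H^2$ gluing term vanishes or can be absorbed; the cleanest route is to observe that $\tilde f$ being an actual morphism of pairs (not just equality of cohomology classes up to the bundle iso) pins down a specific cochain-level identification, so the glued class is well-defined. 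The pair $(\hat E \to T_f X, \hat H)$ is the claimed pair.

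For uniqueness of the isomorphism class, I would argue that a different choice of collar, of the Mayer--Vietoris splitting, or of representative within the morphism of pairs changes $(\hat E,\hat H)$ only by an isomorphism of pairs over $T_f X$: the ambiguity in $\hat H$ lies in the image of an $H^2$ coboundary coming from the overlap, which is realized by a bundle gauge transformation, i.e. precisely an isomorphism of pairs over the fixed base $T_f X$ as in Part (\ref{Mor1Pt1}) of Def.\ (\ref{DefCCat1}). Finally, for homotopy-invariance in $f$: if $f \sim g$ via $F : I\times X \to X$, then $T_f X$ and $T_g X$ are homotopy equivalent (standard), and more to the point the glued pair for $g$ is the pullback of the glued pair for $f$ along this equivalence; combining this with Lemma (\ref{LemHotIso}) (pullback along homotopic maps gives isomorphic, indeed homotopic-over-the-base, pairs) and Lemma (\ref{LemPairIso}) shows the isomorphism class of the pair on the mapping torus is unchanged. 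I expect the main obstacle to be the second step: making the gluing of the $H$-flux canonical enough that the resulting class is genuinely well-defined rather than merely well-defined up to the $H^2$-ambiguity — this is where I would lean hardest on the refined statement, from the discussion after Def.\ (\ref{DefCCat2}), that a morphism of pairs carries more data than an equality of cohomology classes, namely the specific bundle isomorphism intertwining the fluxes.
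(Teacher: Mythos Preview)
Your approach is correct in strategy but takes a genuinely different, more constructive route than the paper. The paper's own proof is essentially a three-sentence sketch: it invokes Lemma~(\ref{LemHotIso}) to say the pullback pair is unchanged under homotopy of $f$, asserts that ``given a pullback diagram of pairs as above, we may form the mapping torus $T_f X$'' (without saying how the pair over $T_f X$ is actually built), notes that homotopic $f$ give homeomorphic (sic) mapping tori, and then simply states the correspondence. In particular the paper never writes down the clutching construction for $\hat E$ or the gluing of the flux; it treats the existence of the pair over $T_f X$ as evident.

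Your explicit construction --- form $E\times[0,1]$, glue the ends by the $S^1$-equivariant map $\tilde f$, then patch the $H$-flux by Mayer--Vietoris --- is exactly what is needed to make the paper's assertion precise, and your identification of the $H^2$-ambiguity in the gluing as the main obstacle is well taken: the paper's proof does not confront this either. Your homotopy-invariance argument is also slightly more careful than the paper's: you say $T_fX$ and $T_gX$ are homotopy equivalent (which is the correct general statement for continuous maps), whereas the paper claims a homeomorphism. What the paper's brevity buys is that it stays at the level of isomorphism classes throughout and leans entirely on the representability of $P$ and Lemmas~(\ref{LemPairIso})--(\ref{LemHotIso}); what your approach buys is an actual object over $T_fX$ and an honest accounting of where non-uniqueness could enter. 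The two are compatible, with yours filling in what the paper leaves implicit.
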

\begin{proof}
By Lem.\ (\ref{LemHotIso}), the isomorphism class of the pullback of
a pair does not change if $f$ is replaced by a map $g$ with $g \sim \id.$ 

Also, given a pullback diagram of pairs as above, we may form the mapping
torus $T_f X.$ It is well known that changing $f$ by a homotopy 
only changes $T_f X$ by a homeomorphism.

Thus, the nontrivial isomorphisms classes of the pullback diagram above 
are in one-to-one
correspondence with pairs over mapping tori each of the form $T_f X.$
\end{proof}
\subsection{Morphisms in $\C$}
\label{SSecMorC}
Let $f$ be an arbitrary map $f:Y \to X$ in ${\mathbf{Top}}$ 
and consider a morphism covering $f$ in $\C$ 
from a pair over $X$ to a pair
over $Y$ as in Def.\ (\ref{DefCCat2}). 
The Bunke-Schick functor $P:\mathbf{Top}^{op} \to \mathbf{Sets}$
of Ref.\ \cite{BunkeS1} associates a map of sets $P(f)$ to $f.$

In this section we show how to calculate $P(f)$ for every possible
morphism of pairs covering a morphism of spaces 
$f: X \to X$ as described in Def.\ (\ref{DefCCat2}) 
and as a result $P(f)$ is determined for every map 
$f:X \to X$ by composition (recall Def.\ (\ref{DefCCat1}
and Def.\ (\ref{DefCCat2}) are {\em equivalent} definitions of the
category $\C$).

In Sec.\ (\ref{SSecMorPf1}) we had proved some properties of the
map of sets $P(f)$ (see Eq.\ (\ref{MorP})) when $f:Y \to X$ is
any continuous map. In particular we had shown that the pullback of an 
isomorphism of pairs over $X$ gives rise to an isomorphism of pairs
over $Y.$ Also, pulling back the same pair over $X$ along two homotopic
maps $f,g:Y \to X$ results in an isomorphism of pullback pairs
over $Y.$ Using these results and the results 
in this subsection it is possible to determine the map
$P(f)$ fully for every possible map $f:X \to X.$

We will examine, in sequence, isomorphisms of
pairs when  $X \simeq Y$ and
$f$ is the identity map (as in the first part of Def.\ (\ref{DefCCat1})
and when $f:X \to X$ is a homeomorphism.
After that we will examine arbitrary morphisms (as in Def.\ (\ref{DefCCat2}))
of pairs when $f: Y \to X$ is an arbitrary map which is not a
homeomorphism.

In (\ref{SSecMorPXXid}) we determine all isomorphisms of pairs in $\C$
between two pairs both over the same base space $X$ which
cover the identity map $\id:X \to X.$ We also calculate the 
image of the $H-$flux $\tilde{f}^\ast(H)$ in equation Eq.\ (\ref{MorP}) 
under an isomorphism of pairs of $X.$ We will use the
results in this subsubsection in SubSec.\ (\ref{SSecBXFX}) below
to define the category $\B_X.$

In (\ref{SSecMorPXXf}) we determine all isomorphisms of pairs over the same base space covering an arbitrary homeomorphism $f:X \to X$
by determining the pullback pair by $f$ when $f:X \to X$
is a homeomorphism not homotopy equivalent to the identity. 
We will use the results in this subsubsection
in SubSec.\ (\ref{SSecBXFX}) below to define the category $\F_X.$

In (\ref{SSecMorPXY}) we determine
arbitrary morphisms of pairs covering an arbitrary map
$f:Y \to X$ which is not the identity or a homeomorphism
by describing a method to
calculate (see Def.\ (\ref{DefMorPair2}) and Eq.\ (\ref{MorP}))
the map of sets $P(f)$ when $f:Y \to X$ is any continuous map which
is not a homeomorphism. We will use the arguments in this
subsubsection to motivate the definition of the category $\G$
in SubSec.\ (\ref{SSecGXCX}) below.

\subsubsection{Isomorphisms of Pairs covering the identity}
\label{SSecMorPXXid} 
We now determine all isomorphisms between two pairs over the
same base $X.$ We show in SubSec.\ (\ref{SSecBXFX}) below that
we may form a category $\B_X$ from this data. The results
in this subsubsection completely characterize the category $\B_X$
and will be used in SubSec.\ (\ref{SSecBXFX}).

Consider an isomorphism between two pairs
$(E' \to X, H')$ to $(E \to Y, H)$ over the same base $X$ covering 
a map $f:X \to X.$ By Eq. (\ref{MorP}) we have a commutative diagram
\begin{equation}
\begin{CD}
E' @>\tilde{f}>> E \\
@VV{p'}V   @VV{p}V \\
X @>>f> X\\
\end{CD}
\end{equation}
with $\tilde{f}^{\ast}(H) = H'$ and
where the map $\tilde{f}$ need only be a $S^1-$equivariant map.
Since this diagram is an isomorphism in $\C$ the map $f$ must be a homeomorphism in ${\mathbf{Top}}$ and the map $\tilde{f}$ must be a bundle isomorphism. 

By the argument after Def.\ (\ref{DefMorPair2}) 
we may factorize the above morphism of a pairs into a pullback
of pairs over a homeomorphism $f:X \to X$ composed with a 
isomorphism of pairs (in the sense of Eq. (\ref{PairIso})) over $X.$

By the definition of morphism of pairs we just used and
the equivalent definition of morphism of pairs in Def.\ (\ref{DefMorPair2})
it is clear that we only need to consider two types of mappings of pairs --- 
isomorphisms of pairs over $X$ covering the identity map $\id_X:X \to X$
in the sense of Eq.\ (\ref{PairIso}) and pullback of pairs over $X$ by a nontrivial homeomorphism $f:X \to X$ in the sense of Eq.\ (\ref{PairPbck}). 
Any isomorphism of pairs over $X$ can be obtained by composing these two.

We begin by characterizing isomorphisms of pairs over
a base $X$ covering the identity map $\id_X:X \to X:$
Consider an isomorphism between a pair $(E', H')$ and $(E, H)$ over $X$ 
given by a circle bundle map $h:E' \to E$ as in Eq.\ (\ref{PairIso}) above.


In the above it is clear that $h$ is an isomorphism we have that
\begin{equation}
\begin{CD}
(E', h^{\ast}(H)) @>h>> (E,H) \\
@VV{\pi}V @VV{\pi}V \\
X @>>\id> X \label{PairIso2}\\
\end{CD}
\end{equation}
and $E' \simeq E.$

Thus every morphism of pairs $h$ from $(E, h^{\ast}(H))$ to $(E,H)$ over $X$ must be a pair of maps
$(\phi, \lambda)$ consisting of a bundle morphism $\phi: E \to E$ and a group homomorphism 
$\lambda: H^3(E, \KZ) \to H^3(E,\KZ).$

The bundle morphism $\phi$ corresponds to the group of 
circle-valued functions on $X$ acting by multiplication on $E.$  
Two such automorphisms are homotopic if the corresponding functions are homotopic. 
Thus the set of all bundle morphisms  $\phi:E \to E$ covering the identity morphism
$id:X \to X$ is a group isomorphic to a quotient of
$H^1(X, \KZ) \simeq [X, S^1] \simeq [X, K(\KZ,1)].$
Clearly, if the bundle morphism is multiplication by a constant $a, |a| = 1,$ 
it is homotopic to multiplication by the identity.

The group homomorphisms $\lambda: H^3(E,\KZ) \to H^3(E, \KZ)$ can be calculated
as follows following an argument in Ref.\ \cite{BunkeS1}, proof of Thm.\ (2.1.6).
\begin{lemma}
Suppose a nontrivial bundle morphism $\tilde{f}$ of $E \to E$ 
was induced by a multiplication map on $E$ by a function $a: X \to S^1$ 
such that the following diagram was commutative
\begin{equation}
\begin{CD}
(E, \tilde{f}^{\ast}(H)) @>\tilde{f}>> (E, H)\\
@VV{\pi}V   @VV{\pi}V \\
X @>>\id> X \nonumber \\
\end{CD}
\end{equation}
then
\begin{enumerate}
\item The map $\tilde{f}^{\ast}$ is given by
\begin{eqnarray}
\tilde{f}^{\ast}(H) &=& H - \pi^{\ast}(c \cup a) \nonumber \\
 &=& H + \pi^{\ast}(\pi_!(H) \cup a) \label{EqOrbitH1} 
\end{eqnarray}
where $a$ is the characteristic class in $H^1(X,\KZ)$ of the map $a:X \to S^1$ such that
the action of $\tilde{f}$ on $E$ is multiplication by $a$ and $\pi_!(H) = -c.$

\item The orbit of $(E,H)$ under a nontrivial bundle morphism $\tilde{f}$
depends on the T-dual pair $(E^{\#}, H^{\#}).$ 
\end{enumerate}
\label{LemPairHom}
\end{lemma}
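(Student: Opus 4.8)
The plan is to compute the effect of the multiplication map directly on the level of Čech cocycles (or, equivalently, classifying maps), following the argument in Ref.\ \cite{BunkeS1}, proof of Thm.\ (2.1.6). First I would set up coordinates: the bundle $\pi:E\to X$ is classified by $c = \pi_!(H)$ up to the indeterminacy, but more precisely by its Euler class; I write $E$ locally as $U_i\times S^1$ with transition functions $g_{ij}:U_{ij}\to S^1$. The map $\tilde f:E\to E$ covering the identity is multiplication by $a:X\to S^1$, i.e.\ in a local chart $(x,t)\mapsto (x, a(x)t)$; this is a well-defined bundle automorphism since multiplication by an $S^1$-valued function commutes with the principal $S^1$-action. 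I would then pick a representative for $H\in H^3(E,\KZ)$ and compute $\tilde f^{\ast}H - H$. The key computation is that the difference $\tilde f^{\ast}H - H$ lies in the image of $\pi^{\ast}:H^3(X,\KZ)\to H^3(E,\KZ)$ because $\tilde f$ is a vertical automorphism (it is the identity modulo the fiber direction, so its action on the cohomology of $E$ is trivial modulo classes pulled back from $X$). The explicit identification of that pulled-back class as $\pi^{\ast}(c\cup a)$ comes from the Leray--Hirsch / Gysin description of $H^{\ast}(E)$: writing $H = \pi^{\ast}\beta + \pi^{\ast}\gamma\cup\xi$ where $\xi$ is the fiber generator and $\pi_!(H) = \gamma = -c$, one checks that $\tilde f^{\ast}\xi = \xi + \pi^{\ast}(a)$ (the fiber generator shifts by the class of the twisting function $a$), so $\tilde f^{\ast}H - H = \pi^{\ast}(\gamma\cup a) = -\pi^{\ast}(c\cup a) = \pi^{\ast}(\pi_!(H)\cup a)$, which is exactly Eq.\ (\ref{EqOrbitH1}).

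For part (2), I would argue structurally rather than by further computation. The orbit of $(E,H)$ under all vertical bundle automorphisms consists of the pairs $(E, H + \pi^{\ast}(\pi_!(H)\cup a))$ as $a$ ranges over $H^1(X,\KZ)$. The content of the claim is that this orbit is governed by the T-dual data: the class $c^{\#} := \pi_!(H)$ is precisely the Euler class of the T-dual bundle $E^{\#}$, and the ambiguity $\pi^{\ast}(c^{\#}\cup a)$ is exactly the indeterminacy in the $H$-flux that the Bunke--Schick correspondence records on the dual side — i.e.\ shifting $H$ by $\pi^{\ast}(c^{\#}\cup a)$ corresponds, under Topological T-duality, to shifting $H^{\#}$ by a term built from $c = $ Euler class of $E$ and $a$. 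So I would invoke the description of the T-duality transformation from Ref.\ \cite{BunkeS1} (and the discussion of the correspondence space / Diamond Diagram) to conclude that the orbit of $(E,H)$ is parametrized by, and only by, data visible from $(E^{\#},H^{\#})$.

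The main obstacle I expect is making the cocycle-level identification $\tilde f^{\ast}\xi = \xi + \pi^{\ast}(a)$ fully rigorous: one must be careful about what "the fiber generator $\xi$" means when $E$ is nontrivial (it is not a genuine global class in $H^1(E,\KZ)$, only a class in the Leray spectral sequence / a relative class), and about the fact that $H^{\ast}(E,\KZ)$ is not literally a free module over $H^{\ast}(X,\KZ)$ in the presence of torsion. The clean way around this is to do the computation with the classifying map into the Bunke--Schick space $R$: a vertical automorphism of $E$ by $a$ changes the classifying map by a homotopy whose effect on the pulled-back universal $H$-flux is computed once and for all on $R$, and then Eq.\ (\ref{EqOrbitH1}) follows by naturality. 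I would also need Lem.\ (\ref{LemHotIso}) to know that only the homotopy class of $a$ matters, which reduces the problem to $a\in H^1(X,\KZ) = [X,S^1]$ and makes the statement of part (2) well-posed.
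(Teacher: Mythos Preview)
Your proposal is correct and follows essentially the same route as the paper, which also invokes the proof of Thm.\ (2.1.6) of Ref.\ \cite{BunkeS1}. The only difference is presentational: the paper's proof is much terser than yours, writing down the factorization
\[
E \xrightarrow{\pi\times\id} X\times E \xrightarrow{a\times\id} S^1\times E \xrightarrow{m} E
\]
and then simply citing Bunke--Schick for the resulting formula, whereas you unpack the underlying Leray--Hirsch computation ($\tilde f^{\ast}\xi=\xi+\pi^{\ast}a$). For part~(2) the paper does even less than you: it just observes that $\pi_!(H)$ is by definition the Euler class of $E^{\#}$ and rewrites Eq.~(\ref{EqOrbitH1}) accordingly, without the additional discussion of how the orbit transforms under T-duality.
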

\begin{proof}
\leavevmode
\begin{enumerate}
\item As in proof of Thm.\ (2.1.6) of Ref.\ \cite{BunkeS1}, we may let $a$ act on
$E$ by multiplication so we have a factorization of $\tilde{f}$ as:
\begin{gather}
E \stackrel{\pi \times id}{\to} B \times E \stackrel{a \times id}{\to} S^1 \times E 
\stackrel{m}{\to} E \label{DiamES1}
\end{gather}

where $m: S^1 \times E \to E$ is the multiplication map.

The result now follows from that Theorem.
\item Note that in the above expression,
$\pi_!(H)$ is the characteristic class of the T-dual bundle, say $[\pi^{\#}].$
Thus, Eq.\ (\ref{EqOrbitH1}) may also be written as:
\begin{equation}
\tilde{f}^{\ast}(H) = H + \pi^{\ast}( [\pi^{\#}] \cup a) \label{EqOrbitH2}
\end{equation}
\end{enumerate}
\end{proof}

In the above we have characterized isomorphisms of pairs over$X$ covering
the identity map $\id:X \to X.$ 
\subsubsection{Isomorphisms of pairs covering a nontrivial homeomorphism}
\label{SSecMorPXXf}
Consider a nontrivial isomorphism of pairs in the sense of 
Def.\ (\ref{DefCCat2}) between two pairs over $X$ covering
a nontrivial map $f:X \to X.$ Since this is an isomorphism of pairs,
it induces isomorphisms of the underlying bundles
and a self-homeomorphism $f:X \to X.$ 
By the arguments after 
Def.\ (\ref{DefCCat2}) we may factorize this morphism as an
isomorphism of pairs over $X$ composed with a pullback.
We have classified all isomorphisms of pairs over $X$ in
SubSubsec.\ (\ref{SSecMorPXXid}) above. 
Thus, to characterize such a morphism of pairs over $X$ it
 is enough to determine the pullback of a pair over $X$ by a 
map $f:X \to X$ which is a homeomorphism which is not the identity.
We will determine this in this subsubsection.

We show in SubSec.\ (\ref{SSecBXFX}) below that
we may form a category $\F_X$ whose objects are pairs over $X$ 
and whose morphisms are isomorphisms of pairs over $X$
covering a nontrivial homeomorphism $f:X \to X.$ The results
in this subsubsection and the previous one
will be used to define the category $\F_X$
in SubSec.\ (\ref{SSecBXFX}).

Note that pulling back a pair over $X$ via two homotopic
maps $f:X \to X$ and $g:X \to X$ gives isomorphic pullback
pairs over $X,$ see Lemma (\ref{LemHotIso}) below.
Thus given a pair over $X$ by Lem.\ (\ref{LemHotIso}) we can pick representative elements $h_{\alpha}$ 
from the group $Homeo(X)$ of self-homeomorphisms of $X$ 
and replace $f$ by a representative $h_{\alpha}$ homotopy equivalent to $f$
without changing the isomorphism class of the pullback pair. 
We pick the identity map $\id:X \to X$ as one of the representative maps and require the
remaining maps $h_{\alpha}$ to be nontrivial homeomorphisms from $X$ to itself
not homotopy equivalent to the identity or to each other.

Now $f \circ h_{\alpha}^{-1}$ is homotopic to the identity map and hence may be replaced by the identity map without disturbing the isomorphism
class of the pullback pair by Lem.\ (\ref{LemHotIso}) and 
Lem.\ (\ref{LemPairHom}).  
We write $f$ as $(f \circ h_{\alpha}^{-1}) \circ h_{\alpha}$
and the pullback of $(E,H)$ by $f$ can now be calculated 
by the previous lemma and the pullback square induced 
by $h_{\alpha}.$ Hence, 
$P(f) = P( (f \circ h^{-1}_{\alpha}) \circ h_{\alpha}) = P(h_{\alpha}).$
Thus it is enough to determine $P(h_{\alpha})$ for all $\alpha$
to determine $P(f)$ in the case of isomorphisms of pairs covering
a self homeomorphism $f:X \to X.$

Note that isomorphisms of pairs 
covering a nontrivial homeomorphism $f:X \to X$ can `connect' bundles
with different characteristic classes. That is if we are
given an isomorphism of pairs
(in the sense of Def.\ \ref{DefCCat2})
covering a nontrivial homeomorphism $h_{\alpha}:X \to X$ the
two principal bundles on either side of the morphism can be of
{\em different} characteristic classes in $H^2(X,\KZ).$

Consider the map induced on $H^2(X,\KZ)$ by
the map $h_{\alpha}.$ Since $h_{\alpha}$ is a nontrivial
homeomorphism $h_{\alpha}:X \to X,$ the induced map
must be an automorphism of the group $H^2(X,\KZ).$
If this map is a nontrivial automorphism
then the topological type of the bundle $h^{\ast}_{\alpha}(E)$
could be different from $E.$ In this case we would get a
morphism of pairs covering $h_{\alpha}:X \to X$
(as in Def.\ (\ref{DefCCat2})) in which the two
bundles on either side of the morphism possessed two 
different Chern classes.

For example if $X \simeq S^2$ and $h_{\alpha}$ was the antipodal
map we could consider the morphism of pairs over $X$ covering
$h_{\alpha}$ induced by pullback along $h_{\alpha}.$
Clearly the induced morphism on $H^2(S^2, \KZ) \simeq \KZ$ would
be $a \mapsto (-a)$ and a bundle with characteristic class $[a]$
would get mapped to a bundle with class $-[a]$ under pullback.

In the presence of torsion, for example for $X \simeq L(p,q)$ a lens space,
there could be many such morphisms given by Frobenius-like
automorphisms of $H^2(X,\KZ).$ We will use this fact 
further in the later sections of this paper.

\subsubsection{Arbitrary Morphisms of Pairs}
\label{SSecMorPXY}
Suppose we had an arbitrary morphism of pairs 
in the sense of Def.\ (\ref{DefCCat2}) covering an arbitrary
map of spaces $f:Y \to X$ which was not the identity and 
not a homeomorphism.  By the argument after Def.\ (\ref{DefCCat2}),
we may factorize the morphism as a composition of a pullback
map along $f:Y \to X$ together with an isomorphism of pairs over
$X.$ Since we have determined all isomorphisms of pairs over $X$
for every $X$ in SubSubSec.\ (\ref{SSecMorPXXid}) it is enough
to determine the pullback of a pair over $X$ along a map
$f: Y  \to X$ which is not the identity and not a homeomorphism.
We will use the arguments in this
subsubsection to motivate the definition of the category $\G$
in SubSec.\ (\ref{SSecGXCX}) below.

We now determine the pullback of a pair over $X$ along
a map $f:Y \to X$ where $f$ is an arbitrary continuous map
which is not the identity map and
not a homeomorphism. It is well-known that we may factor $f$ 
as a composite of a cofibration followed by a homotopy equivalence
(see Ref.\ \cite{May}). The pullback map $\tilde{f}$ in the pullback 
square induced by $f$ is then completely determined by the cofibration 
$h:X \to M_f$ as follows:

We may factor a map $f:Y \to X,$
possibly after homotopy, as a composite $f = g \circ h$ 
as $$Y \overset{h}{\to} M_f \overset{g}{\simeq} X.$$ 
Here $M_f$ is the mapping cylinder of $f,$ namely
$M_f \simeq X \cup_f (Y \times I),$
$h:Y \to M_f$ is a cofibration which is the inclusion
$h:Y \hookrightarrow M_f$ with $h(y) = (x,1)$
and $g:M_f \simeq X$ is a homotopy equivalence 
(see May Ref.\ \cite{May} Ch.\ (3) Sec.\ (4) --- our map $h$ is his map $j$
there).

We have the following commutative diagram of base spaces:
\begin{equation}
\begin{CD}
Y @>f>>X \\
@VVhV   @VV{id}V \\
M_f @>g>{\simeq}> X \label{fFactorMf}\\
\end{CD}
\end{equation}
Thus by the results in Sec.\ (\ref{SSecMorPf1}) 
isomorphism classes of pairs on $X$ pullback via the homotopy
equivalence $g$ to unique isomorphism classes of pairs on $M_f.$
It is clear that isomorphism classes of pairs on $M_f$ pull back via
restriction along the inclusion $h$ to isomorphism classes of pairs on $Y.$
Hence the pullback map $\tilde{f}$ is determined by the cofibration $h$
which happens to be the inclusion $h(x) = (x,1) \in M_f.$

In Eq.\ (\ref{fFactorMf}) the map $g$ is a homotopy equivalence 
and hence the natural functor between the category of pairs over 
$X$ and $M_f$ induced by pulling back pairs along $g$ (see Sec.\ (\ref{SecGXHot})) is an equivalence of categories $\C_X  \simeq \C_{M_f}.$ 
Hence a pair on $X$ induces a unique pair on $M_f$ and
every pair on $M_f$ arises in this way. 

When $f:Y \to X$ is not the identity and not a
homeomorphism the natural functor
$\tilde{h}:\C_{M_f} \to \C_Y$ induced by pulling pairs over $M_f$ along
$h$  (since $g:M_f \to X$ is a homotopy equivalence)
determines the behaviour of pairs on $X$ under pullback by $f$ 
(i.e. the pullback functor $f:\C_Y \to \C_X$ ) and 
also determines the $S^1-$equivariant map $\tilde{f}$ in the
pullback square induced by $f$ (see Def.\ (\ref{DefCCat1}) part (b))
as we now explain.

By the commutativity of the Diagram (Eq.\ (\ref{fFactorMf})) above
the pullback of a pair $(E, H)$ over $X$ by $f$ is isomorphic to the
pullback of the unique pair over $M_f$ induced by $(E,H)$ to a pair
over $Y$ by pulling back the pair along $Y$ along the map $h.$ 
Hence we can pullback pairs along $f$ by pulling back the corresponding
pairs on $M_f$ along $h$ and isomorphic pairs
over $X$ and $M_f$ pullback to isomorphic pairs by
Lem.\ (\ref{LemHotIso}) above. In addition we can lift Eq.\ (\ref{fFactorMf})
above to equivariant maps of $S^1-$spaces and $H-$flux  and we must have
$\tilde{f} \simeq \tilde{g} \circ \tilde{h}$ and the equivariant map
$\tilde{f}$ covering $f$ is determined by the equivariant map 
$\tilde{h}$ covering $h$ since the map 
$\tilde{g}$ covering $g$ is an isomorphism.

In the paper of Bunke and Schick (see Ref.\ \cite{BunkeS1}) 
the map $f$ induces a map on isomorphism classes of pairs
which is the map of sets $P(f)$ where $P$ is the
Bunke-Schick functor.  By
the argument in the previous paragraph for $f:Y \to X$
an arbitrary map which is not the identity or a nontrivial
homeomorphism $P(f)$ is determined by $P(h).$ We show
in Sec.\ (3) of this paper that
the map $h$ defines a natural simplicial map 
$G(h):G(M_f) \to G(X)$ and
the argument in the previous paragraph shows that 
$G(h)$ determines $G(f)$ and hence also determines $P(f).$

\section{The Space $G(X)$ and its Homotopy Properties \label{SecGXHot}}

In the Introduction to this Section, we argue that the 
objects of $\C$ are SFT vacua which are
circle-fibered over a `base' spacetime
and which possess a fixed point free circle isometry i.e
they have the structure of a principal circle bundle with
$H-$flux over that spacetime. We explain this analogy in
some detail in this Introduction and show how the 
analogy relates to various subsections of this section. 
We discuss this proposed connection between
SFT and Topological T-duality in detail in 
SubSec.\ (\ref{SSecSFT-TTD}).

In SubSec.\ (\ref{SSecPCLift}) we
construct the category $\HC$ whose objects are
isomorphism classes of pairs in $\C$ under Bunke-Schick
isomorphism and whose morphisms are induced from $\C$
by the Bunke-Schick construction. We had noted
in Sec.\ (\ref{SecIntro}) above that $P$ was naturally
a presheaf of sets on $\mathbf{Top}.$ Using the
{\em category of elements} of this presheaf
(see \ref{SSecPCLift}) we prove that
Bunke-Schick's functor $P$ induces a natural
fibration $E: \HC \to \mathbf{Top}$ whose fibers
over a topological space $X$ are the set $P(X).$
We identify this fibration with the category
$\HC$ above, and show that $F \simeq E \circ \pi.$
We show that $E$ is a Grothendieck fibration and
hence also a Street fibration. We also show that we
may lift the functor $P$ to $\C$ such that the 
lifted functor (also denoted $P$) when evaluated
on a pair will give us the set $P(X).$ 

Physically, it is clear that the elements of the set 
$P(X)$ must label String Field 
Theory backgrounds with a free circle 
isometry which are circle compactified over $X$ i.e., 
isomorphism classes of objects in the subcategory 
$\C_X$ of  $\C$ --- see discussion at the beginning of 
Sec.\ (\ref{SecIntro}).

To construct the moduli space $G(X),$ we 
could use the subcategory
$\C_X$ consisting of all pairs over $X$ and
all morphisms of pairs but it is not 
possible to handle this category easily. We
restrict ourselves to considering the
subcategory $\pi(\F_X)$ of $\C_X$ defined
below. (Note that if we pass to the associated
classifying space of these categories,
$B \pi(\F_X)$ is the $1-$skeleton of
$\C_X$ and it is possible to show that the 
zeroth and first homotopy groups of
$B \pi(\F_X)$ are isomorphic to the zeroth and
first homotopy groups of $\C_X$ by the natural
inclusion $B \pi(\F_X) \hookrightarrow B\C_X.$ Thus,
$\pi(\F_X)$ should be viewed as a reasonable approximation
of $\C_X$ --- see SubSubSec. \ref{SSecNonTrivP_k} below.)

In SubSec.\ (\ref{SSecBXFX}) below we discuss some
properties of the subcategory $\pi(\F_X)$ of the category 
of pairs $\C$ whose objects we 
identified above with SFT
backgrounds with a free circle isometry which were 
circle compactified over $X$ and with 
gauge transformations induced from isomorphisms
of pairs as arrows. We note that $\pi(\F_X)$ is
a small groupoid.
Then we naturally construct a functor 
$\G: \mathbf{Top}^{op} \to \mathtt{Cat}$ which assigns
to any topological space $X$ the geometric
realization of the simplicial space
of the category $\pi(\F_X).$  We identify
$G(X)$ as the gauge moduli space of
SFT backgrounds which are principal
circle bundles over $X$ as in the
discussion at the end of SubSec.\ (\ref{SSecSFT-TTD}).
This functor is constructed  using $\C$ and the 
constructions in SubSec.\ (\ref{SSecPCLift}). 

Physically, it is clear that the functor $\G(X)$ 
assigns to any topological space $X$
the gauge moduli space of the SFT vacua in
the category $\pi(\F_X)$
which has as objects 
all the SFT backgrounds
which are circle-fibered over $X$ 
(with the circle fiber an isometry
direction) and has as morphisms 
SFT gauge transformations ---both background-preserving and
background-changing--- induced by isomorphisms
of pairs between these backgrounds. 
To any map $f:Y \to X$ this assignment assigns the pullbacks
of all such backgrounds along the map $f$ and corresponds
to the pullback functor $\G(f)$ discussed in 
SubSec.\ (\ref{SSecDefCX}). 
Physically this pullback functor should be interpreted 
as a gauge transformation of a 
string theory background corresponding to a principal 
circle bundle with $H-$flux over
$X$ to another such string theory background (possibly
with a different characteristic class) over $Y.$

In SubSec.\ (\ref{SSecBXFX}) we naturally construct two
subcategories of $\C$ which we term
$\B_X, \F_X$ and a subcategory of $\HC$ namely
$\pi(\F_X)$ which we will use throughout the paper.

The subcategory $\B_X$ of $\C$ has as objects pairs 
over $X$ and as morphisms Bunke-Schick 
isomorphisms between pairs over $X.$
By the discussion in SubSubSec.\ (\ref{SSecMorPXXid})
above, we showed that these may be identified 
gauge transformations of the underlying principal
circle bundle of a given pair $(E\to X, H)$ over $X.$

Physically, these gauge transformations of the principal circle
bundle underlying a pair must then be identified with certain
background-preserving gauge transformations 
(see SubSec.\ (\ref{SSecSFT-TTD}) for details) of the 
SFT spacetime $E$ associated with
the pair $(E \to X, H)$ which induce the identity map on the 
`base' spacetime $X$ of the SFT vacua in $\B_X.$ 

The subcategory $\F_X$ has as objects all pairs over $X$ and
as arrows all isomorphisms of pairs which cover a homeomorphism $f:X \to X.$ 

Physically, the objects
in $\F_X$ are still SFT 
spacetimes $E$ associated to
pairs $(E \to X, H)$ over $X.$ The morphisms in $\F_X$
correspond to
both background-preserving gauge
transformations (corresponding to morphisms of pairs
covering $\id:X \to X$) and background-changing gauge
transformations (corresponding to morphisms of pairs 
$f:X \to X$ covering a 
nontrivial self-homeomorphism of $X$) of any pair in $\C_X.$
As discussed in Sec.\ (\ref{SecIntro}) above, these
backgrounds and gauge transformations are an
approximation to SFT when the 
SFT gauge transformations do not 
distort the background too much (see 
Refs. \ \cite{SFTDiffeo, KZw}). In this
paper we study the category $\F_X$ by using
the subcategory $\pi(\F_X)$ of $\HC.$

The subcategory $\pi(\F_X)$ of $\HC$ is
the essential fiber of the functor $E:\HC \to \mathbf{Top}$
which was defined in SubSec.\ (\ref{SSecPCLift}) above.
It has as objects homotopy
classes of pairs over $X$ with morphisms between
them. We show that the arrows in $\pi(\F_X)$ are
images of pullback arrows in $\C$ under $\pi:\C \to \HC.$
This category is very important for the rest of this
paper.

In SubSec.\ (\ref{SSecNMFE}) we define a natural
presheaf of groupoids $\M$ on $\mathbf{Top}$
from the functor $E: \HC \to \C$ defined in
Thm.\ (\ref{ThmHCTEqv}) above using
categorical equivalence for {\em Street} fibrations.
We show that
the value of the functor $\M$ on a topological
space $X,$ namely $\M(X),$ 
is the groupoid $\pi(\F_X)$ above. 
We argue in SubSec.\ (\ref{SecP_i}) below that 
$\M(X)$ is a `categorification' of Bunke-Schick's
functor $P(X).$

In SubSec.\ (\ref{SSecGXCX}) we study the functor
$\M(X)$ by studying the homotopy theory of the
geometric realization of the groupoid $\M(X)$ as
$X$ varies. We naturally construct a functor 
$G: \mathbf{Top}^{op} \to \mathbf{Simp}$
from the functor $\M$ defined in SubSec.\ (\ref{SSecNMFE})
which sends any topological space $X$ to
the simplicial realization (see Refs.\ \cite{Segal1, BRichter})
of the small groupoid $\M(X).$ 
We also show that the set of $0-$simplices of $G(X)$ is
$P(X).$ This supports our claim that $\M(X)$ is
a categorification of Bunke-Schick's funtor $P(X).$

In SubSec.\ (\ref{SSecHotG}) we discuss properties 
of the homotopy groups of $G(X)$ for any topological
space $X.$ We discuss the relation of these 
homotopy groups with $P(X).$ We show that we may 
define a generalization of $P(X)$ from these homotopy
groups.

In SubSec.\ (\ref{SSecSFT-TTD})
we relate the above
(especially SubSec.\ (\ref{SSecGXCX}) and SubSec.\ (\ref{SSecHotG}))
to closed SFT and
justify the construction of the space
$G(X)$ from $X$ based on arguments from 
SFT.

\subsection{The Lift of $P$ to $\C$ \label{SSecPCLift}}
Bunke-Schick's functor $P$ assigns to any topological space
$X$ the set of isomorphism classes of pairs  over $X$ and assigns
to a map of topological spaces $f:Y \to X$ a map of sets from the set of 
isomorphism classes of pairs over $X$ to the set of
isomorphism classes of pairs over $Y$ \cite{BunkeS1}. 
We would like to lift this functor to the category $\C.$

We had noted above that there was a natural forgetful functor
$F: \C \to \mathbf{Top}$ which sends a pair $(E \to X, H)$ to 
$X$ and sends morphisms of pairs (in the sense of Def.\ (\ref{DefCCat2}))
to maps between spaceds. Thus, it should be possible to lift
$P$ to $\C$ by forming the composite functor $P \circ F.$

However it is not clear that $F$ is the only such forgetful functor, and
hence that this lift is unique. For example, sending the target of 
the assignment $F:(E \to X, H) \mapsto X$ to a homotopy equivalent space (say $(E \to X, H) \mapsto X \times \KR$) would result in a seemingly
equivalent `forgetful' functor which, when composed with $P$ would result in the same functor $\C \to \mathbf{Set}$ (since $P(X)$ only depends on
the homotopy type of $X$ --- see Ref.\ \cite{BunkeS1} or Ref.\ \cite{Pan2}).

We now prove that the functor $F$ above may be expressed as 
a composite of two natural maps one from $\C$ to the 
category of homotopy equivalence classes of pairs 
$\HC$ (defined below) which we write as
$\pi:\C \to \HC$ and another a
natural functor $\rho:\HC \to \mathbf{Top}$ which we construct
below and as a result $F$ must be of the form $F:(E \to X, H) \mapsto X.$

This gives us a unique lift the Bunke-Schick
functor $P:\mathbf{Top}^{op} \to \mathbf{Set}$ to $\C$ namely 
$P \circ F: \C \to \mathbf{Set}.$
The lift $P \circ F$ acts on a pair $(E \to X, H)$ 
and sends it to the set of all isomorphism classes of pairs on $X.$
We will use this lift throughout this paper.

We now define a new category $\HC$ which is the quotient of
$\C$ by the equivalence relation of Bunke-Schick Equivalence.
\begin{theorem}
\label{ThmCSimCat}
\leavevmode
\begin{enumerate}
\item The class $\HC$ has homotopy equivalence classes of 
pairs as objects. Every morphism between two pairs in 
$\C$ naturally induces a morphism between the corresponding
objects in $\HC.$ This gives $\HC$ the structure of a category. 
There is a natural functor $\pi: \C \to \HC.$
\item Let $f: e \to e'$ be any morphism of pairs between two
pairs $e,e'$ in $\C.$ Then $\pi(f) = \pi(p)$ for some pullback
map $p: e \to e'.$
\item Assume we are given any morphism $\eta$ between two equivalence
classes of pairs $\pi(a), \pi(b)$ in $\HC$ with $\eta: \pi(a) \to \pi(b)$ in
$\HC$ where $a = (E_X \to X, H_X)$ and $b = (E_Y \to Y, H_Y)$ are
pairs in $\C.$ Then, we may change the pairs $a,b \in \C$
in their respective homotopy equivalence
classes in $\C$ to pairs $a', b' \in \C$ 
such that there is a morphism of pairs
$f:a' \to b'$ in $\C$ covering $\eta$ i.e., $\eta = \pi(f),$
and $f$ is always a pullback morphism in $\C.$ 
\end{enumerate}
\end{theorem}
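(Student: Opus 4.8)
The plan is to prove the three parts of Theorem \ref{ThmCSimCat} in order, since part (2) is essentially a restatement of the factorization result established after Definition \ref{DefCCat2} and part (3) builds on parts (1) and (2).

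\textit{Part (1).} First I would define the objects of $\HC$ as equivalence classes of pairs under the relation ``homotopy equivalence of pairs'' in the sense of Ref.\ \cite{BunkeS1}; by Lem.\ (2.1)--(2.2) of that reference this coincides with the relation ``isomorphic over the same base,'' so two pairs over $X$ lie in the same class iff they are Bunke-Schick isomorphic, and moreover (invoking Lemma \ref{LemHotIso}) a pair over $X$ and its pullback along a homotopy equivalence $X \simeq Y$ lie in the same class. To give $\HC$ the structure of a category I must check that any morphism of pairs $f : e \to e'$ in $\C$ descends to a well-defined morphism $[e] \to [e']$ independent of the representatives chosen. The key input here is Lemma \ref{LemPairIso} (pullbacks of isomorphic pairs along a fixed map are isomorphic) together with Lemma \ref{LemHotIso} (homotopic maps induce isomorphic pullbacks); combining these shows that replacing $e, e'$ by isomorphic pairs and/or pre/post-composing with isomorphisms of pairs changes $f$ only by isomorphisms on either side, hence leaves the induced arrow in $\HC$ unchanged. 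Functoriality of $\pi : \C \to \HC$ (identities to identities, composites to composites) is then formal, and $\pi$ is the identity on underlying data so surjective on objects.

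\textit{Part (2).} Given a morphism $f : e \to e'$ in $\C$ covering a map $g : Y \to X$ of base spaces, I would invoke the factorization established in the paragraph after Def.\ (\ref{DefCCat2}): $f$ factors uniquely as a pullback square of the bundle over $g$ followed by an isomorphism of pairs over $Y$. That isomorphism of pairs over $Y$ maps to the identity in $\HC$ (its source and target are isomorphic pairs, hence equal objects of $\HC$), so $\pi(f)$ equals $\pi$ of the pullback morphism $p : e \to e'$; strictly, to make $p$ a genuine morphism in $\C$ one replaces its target by the canonical pullback $g^{\ast}E$, which is isomorphic to $E'$ and hence equal in $\HC$. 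This gives exactly the asserted statement.

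\textit{Part (3).} This is the substantive step, and the main obstacle: we are given an abstract arrow $\eta : \pi(a) \to \pi(b)$ in $\HC$ and must realize it by an honest pullback morphism in $\C$ after adjusting $a, b$ within their $\HC$-classes. The strategy is to unwind the construction of morphisms in $\HC$ from part (1): by construction every arrow of $\HC$ is of the form $\pi(f)$ for \emph{some} morphism $f$ in $\C$ between \emph{some} representatives of the classes $\pi(a)$ and $\pi(b)$ --- call them $a'$ and $b'$ --- since the hom-sets of $\HC$ are defined as the images under $\pi$ of the hom-sets of $\C$ between all representatives. Then apply part (2) to $f : a' \to b'$ to replace $f$ by a pullback morphism $p : a' \to b''$ with $\pi(p) = \pi(f) = \eta$, where $b''$ is the canonical pullback bundle, isomorphic to $b'$ and hence still a representative of $\pi(b)$. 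The delicate point I expect to need care with is that ``$\eta$ is an arrow of $\HC$'' must be read exactly as ``$\eta$ lies in the image of $\pi$ on some hom-set of $\C$'' --- i.e.\ the category structure on $\HC$ from part (1) does not introduce formal composites that are not already of this form. I would verify this by noting that $\HC$ is a quotient category of $\C$ by a congruence (the congruence generated by identifying isomorphisms of pairs with identities), and in such a quotient every arrow is the image of a single arrow of $\C$, composites being handled by composing in $\C$ first; if a given $\eta$ is presented as $\pi(f_2)\circ\pi(f_1)$ with the target of $f_1$ and source of $f_2$ only isomorphic (not equal) in $\C$, one bridges them by the isomorphism of pairs (which $\pi$ kills) to form $f_2 \circ \iota \circ f_1$ in $\C$, whose $\pi$-image is $\eta$. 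Once $\eta = \pi(f)$ for a single $f$ in $\C$, part (2) finishes the job.
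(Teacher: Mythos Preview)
Your proof is correct and follows essentially the same approach as the paper: well-definedness of $\HC$-arrows via pre/post-composition with isomorphisms for part (1), the pullback--isomorphism factorization established after Def.~\ref{DefCCat2} for part (2), and fullness of $\pi$ together with that factorization for part (3); your extra care in part (3) about composites in the quotient category spells out what the paper compresses into ``$\pi$ is full by construction.'' One small caveat: in part (1) you assert that a pair over $X$ and its pullback along a homotopy equivalence $X\simeq Y$ lie in the same $\HC$-class, but the paper does not identify pairs over distinct bases (this becomes explicit later when $\HC$ is identified with $P\backslash\mathbf{Top}$, whose objects are pairs $(X,s)$); drop that clause, as it is not needed for the argument and would conflict with the subsequent use of $\HC$.
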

\begin{proof}
\leavevmode
\begin{enumerate}
\item We only need to check that each morphism in $\C$
induces a morphism between objects in $\HC.$ 
This is clear since if we have a morphism between two pairs
in $\C$ in the sense of Def.\ (\ref{DefCCat2}) then,
precomposing and postcomposing the morphism by
isomorphisms of the corresponding pairs over $X$ will give
another morphism of pairs in $\C$ by Def.\ (\ref{DefCCat1})
(since the composition of an isomorphism of a pair with an arbitrary
morphism of a pair gives a valid morphism of a pair by
Def.\ (\ref{DefCCat1}) and the two definitions of a morphism of a pair
in $\C$ are equivalent). Thus each morphism
of pairs in $\C$ gives rise to a well-defined morphism of objects
in $\HC.$

The identity morphism in $\C$ gives the identity morphism
on homotopy equivalence classes of pairs in $\HC$ and clearly
$\HC$ is a category.

From the above it is clear that the assignment which
sends a pair to its homotopy equivalence class and a morphism
of pairs to the induced morphisms on the homotopy equivalence
classes of the domain and range pairs (see above) is a functor from
$\pi: \C \to \HC.$

\item Any map $f:e \to e'$ in $\C$ may be written
as a product of a unique Bunke-Schick isomorphism
$\iota:e \to e$ and a unique pullback map $p:e \to e'$
by the discussion after Thm.\ (\ref{DefCCat2}) above.
Hence, the image of $f$ by 
the functor $\pi$ is just $\pi(p),$ i.e. $\pi(f) = \pi(p).$

\item Assume that we are given two objects $\pi(a)$ and $\pi(b)$ in $\HC$
together with a map $\eta: \pi(a) \to \pi(b)$ in $\HC.$
Let $a = (E_X \to X, H_X)$ and $b = (E_Y \to Y, H_Y)$ be 
pairs in $\C$  covering $\pi(a), \pi(b)$ respectively.
Then we may lift $\eta$ to a map of pairs $h:a \to b$ in $\C$ since,
by construction, the functor $\pi$ is full.

Now consider the map of pairs $h:a \to b.$ We suppose this map of pairs
covered a map $f:Y \to X.$ By the discussion after Def.\ (\ref{DefCCat2}) 
$h$ may be written as a composition of a unique 
pullback $\tilde{f}$ along $f$ together with
a unique isomorphism of pairs $\iota$ over $Y$.
The image of $h$ under the map $\pi$ is clearly $\pi(h) = \eta$ 
and $\pi(h) = \pi(\iota) \circ \pi(\tilde{f})$ 
but due to the definition of $\pi,$ and $\HC,$ we have
$\pi(\iota) = \id$ and hence $\pi(h) = \pi(\tilde{f}).$

Thus shifting the given pair over $X$ and using 
Lem.\ (\ref{LemPairIso}) shows that the lift of $\eta$ to $\C$ may
always to be a pullback map up to isomorphism of 
pairs over a fixed base --- or equivalently up to homotopy of pairs in the sense of Ref.\ \cite{BunkeS1}. 
\end{enumerate}
\end{proof}

Recall that there is the natural functor 
$F: \C \to \mathbf{Top}$ which
sends a pair $(E \to X, H)$ to $X.$ Also, the
natural quotient map 
$\pi:\C \to \HC$ in Thm.\ (\ref{ThmCSimCat})
sends a pair $(E \to X, H)$ to $[(E \to X),H]$ where
$[(E \to X, H)]$ denotes the Bunke-Schick isomorphism
class over $X$ 
(see Def.\ (\ref{DefCCat1} for a def of Bunke-Schick isomorphism)
of the pair $(E \to X, H).$

Given a pair $(E \to X, H)$ over $X,$ we claim the functor $F$ naturally
induces a functor $E: \HC \to \mathbf{Top}$
which sends the set of pairs over $X$
related to $(E \to X, H)$ by an isomorphism
of pairs over $X$ (see Ref.\ \cite{BunkeS1} and Def.\ (\ref{DefCCat1}))
to the underlying topological space $X.$ The functor
$E$ lifts any map between two isomorphism
classes of pairs say $\phi: [(E_X \to X, H_X)] \to [(E_Y \to Y, H_Y)]$ in
$\HC$ to a pullback map $\tilde{\phi}$ in $\C$ between $(E_X \to X, H_X)$
and $(E_Y \to Y, H_Y)$ covering $f:Y \to X$
as in Thm.\ (\ref{ThmCSimCat}) above and then defines
$E(\phi): \phi \mapsto (f:Y \to X).$
Thus, if the functor $E$ exists and is unique, we would
have that $F \simeq E \circ \pi.$

We now show that the functor
$E$ exists and is a unique.  We further identify
$E$ with the functor $\rho$ (see Sec.\ (5.1.1) of Ref.\ 
 \cite{BRichter}) which we construct naturally below
from $P$, $\mathbf{Top}$ and the values of $P.$
We will need the following definition from 
Sec.\ (5.1.1) of Ref.\  \cite{BRichter}:
Given a contravariant functor 
$F: \C^{op} \to \mathbf{Set},$ we can 
build a new category $F \backslash \C$ out of $F,\C$
and the values of $F$ by viewing $F$ as a presheaf
of sets on $\C^{op}$ and then using the Grothendieck
construction to construct a category with 
$\mathbf{Set}-$valued fibers
over $\C$ (see nCatLab page on Category of Elements
Ref.\ \cite{nLab-CatEl} and Sec.\ (2) of nCatLab page on
Grothendieck Construction Ref.\ \cite{nLab-GCons} 
for the Grothendieck Construction for a 
contravariant functor).
\begin{definition}
For a given contravariant functor 
$F:\C^{op} \to \mathbf{Set},$ 
let $F \backslash \C$ be the category whose objects
are pairs $(C,x)$ with $C$ an object of $\C$ and 
$x \in F(C).$ A morphism
from $(C,x)$ to $(C',x')$ is an $f \in \C(C,C')$ 
with $F(f)(x') = x.$
\end{definition}

The category $F \backslash \C$ is termed 
the category of elements
of $\C$ (also termed $\mbox{El}_F(\C)$ or $\int F$). It is a
functorial pullback (see nCatLab page on $2-$pullback,
Ref.\ \cite{nLab-2Pbck}, the nCatLab page on Category of
Elements, Sec.\ (2) of Ref.\ \cite{nLab-CatEl} and
the nCatLab page on Grothendieck Construction 
Ref.\ \cite{nLab-GCons} esp. Sec.\ (2) of that page 
on Grothendieck Construction for a contravariant functor) 
and fits into a pullback square 
\begin{equation}
\begin{CD}
\int F \simeq \mbox{El}_F(\C) @>>> \mathbf{Set_{\ast}}^{op} \\
@VV{\pi_F}V   @VV{U^{op}}V \\
\C @>>F^{op}> \mathbf{Set}^{op}. \label{FSlashCDef} \\
\end{CD}
\end{equation}
Here $F^{op}$ is the opposite of the functor
$F:\C \to \mathbf{Set}^{op}$ and 
$U: \mathbf{Set_{\ast}} \to \mathbf{Set}$
is the forgetful functor from the category of
pointed sets to the category of sets and is the
universal bundle with $\mathbf{Set}-$fibers over 
$\mathbf{Set}.$ 
By Ref.\ \cite{nLab-CatEl}, Sec.\ (1)
the category $\mbox{El}_F(\C) \simeq F\backslash \C$ 
is a category fibered over $\C$ by the natural 
functor $\pi_F:F \backslash \C \to \C$ whose
fiber over an object $c \in \C$ is the set $F(c).$ 

Note that the functor defined by Bunke-Schick in 
Ref.\ \cite{BunkeS1}
is the contravariant functor 
$P: \mathbf{Top}^{op} \to \mathbf{Set}.$ 
Thus we may construct the category 
$P \backslash \mathbf{Top}$ consisting of pairs
$(X, s)$ with $X$ a topological space and $s \in P(X).$

A morphism from $(Y, s')$ to $(X, s)$ consists of a map of
topological spaces $f:Y \to X$ which induces a map 
$P(f)(s) = s'$ (since $P$ is contravariant). 
Thus $P(f)$ must consist of a 
morphism of pairs from a pair in the isomorphism class
$s \in P(X)$ to another pair in the isomorphism class 
$s' \in P(Y).$
By Thm.\ (\ref{ThmCSimCat}) above we may assume 
that this morphism
is a pullback map from a pair in the isomorphism class $s$ to another
pair in the isomorphism class $s'.$ 

There is a forgetful functor (see Ref.\ \cite{BRichter}) 
$\rho: P \backslash \mathbf{Top} \to \mathbf{Top}$ that
sends the ordered pair $(X,s)$ --- with $s \in P(X)$ an isomorphism
class of pairs over $X$ to $X$ --- and sends a morphism from
$(X,s)$ to $(Y,s')$ to the map $f:Y \to X.$ 
This will fit into the following pullback square
\begin{equation}
\begin{CD}
\int P \simeq P \backslash \mathbf{Top} \simeq \mbox{El}_P(\mathbf{Top})  @>>> \mathbf{Set_{\ast}}^{op} \\
@VV{\rho}V   @VV{U}V \\
\mathbf{Top} @>>P^{op}> \mathbf{Set}^{op}. \label{PSlashTopDef} \\
\end{CD}
\end{equation}
Here, the functor $P^{op}:\mathbf{Top}^{op} \to \mathbf{Set}^{op}$ 
is the opposite functor defined from the functor 
$P:\mathbf{Top}^{op} \to \mathbf{Set}.$

Also, the map in the upper row of the commuative diagram
Eq.\ (\ref{PSlashTopDef}) is the assignment
$\pi(e) \mapsto (P(F(e)), \pi(e))$ for any pair $e$ in $\C.$

However, $(X, s)$ with $s \in P(X)$ is the set of all 
homotopy equivalence classes of pairs over $X$ which
are homotopy equivalent in the sense of 
Ref.\ \cite{BunkeS1} to the given pair $s \in P(X)$ and
hence the category $P \backslash \mathbf{Top}$ is
actually the category of homotopy equivalence classes of 
pairs $\HC$ defined above. 

We may also identify $\rho$ with the functor
$E:\HC \to \mathbf{Top}$ defined above in this subsection
since $E$ and $\rho$ agree on objects and on morphism
$\rho$ sends a morphism $\phi$ between $(X,s)$ to $(Y,s')$ to
the map $f:Y \to X$ which agrees with the action of $E(\phi)$
described just after Thm.\ (\ref{ThmCSimCat}) above.

Hence we have proved the following theorem:
\begin{theorem}
Let $\HC$ be the category of homotopy equivalence classes of pairs
defined above. Let $P: \mathbf{Top}^{op} \to \mathbf{Sets}$ be the
Bunke-Schick functor. Let $P \backslash \mathbf{Top}$ be the
category defined before this theorem.
\leavevmode
\begin{enumerate}
\item We have an equivalence of categories between
the categories $\HC$ and $P \backslash \mathbf{Top}.$
\item The functor 
$\rho: P \backslash \mathbf{Top} \to \mathbf{Top}$
defined above is exactly the functor $E: \HC \to \mathbf{Top}$
above.
\end{enumerate}
\label{ThmHCTEqv}
\end{theorem}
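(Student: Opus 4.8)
The plan is to prove the two claims of Theorem \ref{ThmHCTEqv} by reconciling the two descriptions of the objects and morphisms involved: the ``synthetic'' category $\HC$ built in Theorem \ref{ThmCSimCat} as a quotient of $\C$, and the ``Grothendieck-theoretic'' category $P\backslash\mathbf{Top}$ built from the Bunke--Schick presheaf $P:\mathbf{Top}^{op}\to\mathbf{Set}$ by the category-of-elements construction in Eq.\ (\ref{PSlashTopDef}). First I would spell out the comparison functor $\Phi:\HC\to P\backslash\mathbf{Top}$ explicitly on objects: an object of $\HC$ is a homotopy-equivalence class $[(E\to X,H)]$ of pairs, and by Ref.\ \cite{BunkeS1} (cited via Def.\ (\ref{DefCCat1})) homotopy of pairs over a fixed base is the same as Bunke--Schick isomorphism of pairs, so such a class determines and is determined by the data of the base $X$ together with the isomorphism class $s\in P(X)$ it represents. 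Hence set $\Phi([(E\to X,H)]) = (X,s)$, which is well defined and a bijection on objects. This already establishes that $\Phi$ is essentially surjective (in fact bijective on objects), reducing part (1) to full faithfulness.

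For morphisms, the key input is part (2) and part (3) of Theorem \ref{ThmCSimCat}: every morphism of $\HC$ lifts to a \emph{pullback} morphism in $\C$ (after adjusting representatives in their homotopy classes), and conversely every pullback morphism in $\C$ descends. Given objects $[(E_X\to X,H_X)]$ and $[(E_Y\to Y,H_Y)]$ of $\HC$, I would show $\Hom_{\HC}$ between them is in natural bijection with $\{\,f\in\mathbf{Top}(Y,X) : P(f)(s)=s'\,\}$, which is by definition $\Hom_{P\backslash\mathbf{Top}}((Y,s'),(X,s))$. The ``$\to$'' direction uses Theorem \ref{ThmCSimCat}(3) to represent a given $\HC$-morphism by a pullback square over some $f:Y\to X$, together with Lem.\ (\ref{LemPairIso}) and Lem.\ (\ref{LemHotIso}) to see that the underlying $f$ is well defined up to homotopy and that $P(f)(s)=s'$ automatically holds because pullback of a pair in class $s$ lands in class $s'$. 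The ``$\leftarrow$'' direction: given $f$ with $P(f)(s)=s'$, the pullback square of Eq.\ (\ref{PairPbck}) gives a morphism of pairs in $\C$ whose image under $\pi$ is the desired $\HC$-morphism; uniqueness of the induced $\HC$-morphism follows since any two pullback morphisms covering homotopic maps become equal in $\HC$ (Lem.\ (\ref{LemHotIso})). Checking compatibility with composition is then routine: composites of pullback squares are pullback squares (up to the canonical comparison), and $\pi$ is a functor. This gives full faithfulness and hence part (1).

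For part (2), I would observe that the functor $E:\HC\to\mathbf{Top}$ defined just after Theorem \ref{ThmCSimCat} and the forgetful functor $\rho:P\backslash\mathbf{Top}\to\mathbf{Top}$ from Ref.\ \cite{BRichter} agree on objects (both send the relevant object to the base space $X$) and on morphisms (both send a morphism to the underlying map $f:Y\to X$), by the very definition of $E(\phi)$ recalled in the text: $E$ lifts $\phi$ to a pullback morphism in $\C$ covering $f$ and records $f$. Thus $\rho\circ\Phi = E$ strictly, and since $\Phi$ is an equivalence this identifies $\rho$ with $E$ through the equivalence of part (1); in particular the fibre of $E$ over $X$ is the set $P(X)$ with only identity arrows, matching the fibre of $\pi_P$ in Eq.\ (\ref{FSlashCDef}). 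Finally I would note $F\simeq E\circ\pi = \rho\circ\Phi\circ\pi$, tying the diagram together.

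The main obstacle I anticipate is the bookkeeping in the full-faithfulness step: one must be careful that the passage ``$\HC$-morphism $\leadsto$ pullback representative in $\C$ $\leadsto$ underlying map $f$'' is genuinely well defined (independent of the choice of representatives and of the factorization in the discussion after Def.\ (\ref{DefCCat2})), and that it is inverse to ``$f$ with $P(f)(s)=s'$ $\leadsto$ pullback square $\leadsto$ $\HC$-morphism''. Both directions rest on the same fact --- that homotopic maps induce the same map on isomorphism classes of pairs (Lem.\ (\ref{LemHotIso})) and that isomorphic pairs pull back to isomorphic pairs (Lem.\ (\ref{LemPairIso})) --- so once those are invoked correctly the argument closes, but the naturality-in-all-variables verification is where the real work lies; everything else (essential surjectivity, the identification $\rho=E$, functoriality) is essentially formal.
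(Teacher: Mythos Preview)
Your approach is essentially the same as the paper's: the paper's ``proof'' is the informal discussion in the paragraphs immediately preceding the theorem statement, which identifies objects of $P\backslash\mathbf{Top}$ with Bunke--Schick isomorphism classes of pairs and observes that $\rho$ and $E$ agree on objects and morphisms; your proposal is a more careful execution of exactly this identification via an explicit functor $\Phi$ and a full-faithfulness check.

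One clarification: you write that the underlying map $f$ is ``well defined up to homotopy'' and that ``any two pullback morphisms covering homotopic maps become equal in $\HC$.'' The second claim is not correct as stated, and fortunately you do not need it. The equivalence relation defining $\HC$ only quotients by Bunke--Schick isomorphisms, which cover the \emph{identity} on the base; hence the underlying base map $f$ of an $\HC$-morphism is determined \emph{exactly}, not merely up to homotopy, and distinct (even homotopic) maps $f,g:Y\to X$ yield distinct $\HC$-morphisms. This matches $P\backslash\mathbf{Top}$, whose morphisms are genuine continuous maps, and makes the Hom-set bijection cleaner than you anticipated: no appeal to Lem.\ (\ref{LemHotIso}) is required for well-definedness or uniqueness in either direction.
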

{\flushleft{\bf{Note:}}} 
Thm.\ (\ref{ThmHCTEqv}) also
shows that the fibration
$E:\HC \to \mathbf{Top}$ should be viewed as the 
`geometrization' of Bunke-Schick's functor $P$
by the category of elements construction.

As described above, the category
$P\backslash \mathbf{Top}$
is obtained from the presheaf of sets defined by $P$ 
on $\mathbf{Top}$ by the Grothendieck Construction
(see Grothendieck Construction for contravariant
functors Sec.\ (2) of Ref.\ \cite{nLab-GCons}) 
and is a category fibered in sets over $\mathbf{Top}$
by the functor $\rho$ in the previous paragraph.

We show in Thm.\ (\ref{ThmPiFXDef}) below that the
fiber of $E$ over any topological space $X$
is a subcategory of $\HC$ consisting of 
a set of objects with the identity automorphism
for each object as expected. We also show that this
set of objects is in bijection with the set $P(X).$
We also prove in Thm.\ (\ref{ThmMGFMM}) below that 
$E$ is a Grothendieck Fibration with this subcategory
as a fiber, as expected. 

We show in Thm.\ (\ref{ThmPiFXDef}) that
the essential fiber of $E$ over a topological space
$X$ is a small groupoid 
(namely the groupoid  $\pi(\F_X)$ defined in  
Lem.\ (\ref{LemPiFXGrpd})) whose set of objects
is the subcategory in the previous paragraph, hence $\HC$ 
may also be viewed as
a category fibered in (small) groupoids 
over $\mathbf{Top}$ (for a definition of a category 
fibered in groupoids, see Ref.\ \cite{nLab-CatFibG}).
When viewed like this, $E:\HC \to \mathbf{Top}$ is
a {\em Street Fibration} over $\mathbf{Top}$
(see Ref.\ \cite{nLab-StrFib}) and
we will use this property in SubSec.\ (\ref{SSecNMFE})
below. 

Since $\HC$ is the category of elements of
$P:\mathbf{Top}^{op} \to \mathbf{Sets}$ 
this suggests that objects in the category
$P \backslash \mathbf{Top}$ or equivalently in
the category $\HC$ should be viewed as generalizations
of topological spaces constructed using 
Bunke-Schick isomorphism of pairs. 

We suggest that the category 
$\HC$ extends the idea of a topological space by
representing a given topological space $X$ by
the category of homotopy equivalence classes of pairs and 
morphisms of pairs over $X$ i.e., the category
$\pi(\F_X)$ should
be viewed as a 'categorification' of $X.$ Since the fiber
of $E$ is $P(X),$ there are no more than $|P(X)|$
pairs over a given topological space $X.$ 

Since $X$ is usually the underlying
topological space of a Riemannian manifold, 
each pair is trivial when restricted to a sufficiently
small open set. Thus, the fibration 
$E: \HC \to \mathbf{Top}$ should be
viewed being constructed from the set of
covering spaces of topological spaces in $\mathbf{Top}.$

We have a unique natural functor 
$\C \to \mathbf{Top}$ given 
by composing the quotient map $\pi:\C \to \HC$ with the 
map $\rho$ above that is 
$$
\rho \circ \pi: \C \to \HC \simeq (P \backslash \mathbf{Top}) \to
\mathbf{Top}.
$$ 
This sends any pair $(E \to X, H)$ to its homotopy class
$[(E \to X, H)]$ and then sends this class to its underlying
topological space $X$ that is:
$$
\rho \circ \pi: (E \to X, H) \mapsto [(E \to X, H)] \mapsto X.
$$
It is clear that, by uniqueness,
$F \simeq  \rho \circ \pi \simeq E \circ \pi.$
When composed with the functor $P$ the map $E \circ \pi$
gives a unique functor $P\circ E \circ \pi:\C \to \mathbf{Set}$ 
which sends a pair $(E \to X, H)$ to the isomorphism class of pairs
over $X$ and hence must be the required unique lift of 
$P$ to $\C.$

\subsection{The subcategory $\C_X$ \label{SSecDefCX}}
Just before Def.\ (\ref{DefCCat1}) above we had defined the
subcategory $\C_X$ of $\C.$ The objects of $\C_X$ consists of
all pairs over $X$ and the arrows in $\C_X$ consist of
morphism of pairs over $X$ between two pairs over $X$ 
which cover some map (trivial or nontrivial) $f:X \to X.$

Note that the functor $F$ defined in 
SubSec.\ (\ref{SSecPCLift}) restricts to $\C_X$ and sends any pair
in $\C_X$ to $X$ and any morphism of pairs in $\C_X$ to a map from
$X$ to $X.$ Clearly this restriction of the functor $F$ 
is full and might not be faithful.

It is easy to show that $\C_X$ is a small category for every topological space
$X.$ This is because the isomorphism classes of pairs over $X$ form a 
set (by Ref.\ \cite{BunkeS1}) and by SubSec.\ (\ref{SSecMorC}) 
these isomorphism classes consist of homotopy classes of elements
of $Homeo(X)$ together with multiplication by the set of circle-valued 
maps on $X$ and both these are sets as well. 
In addition, by the discussion in SubSec.\ (\ref{SSecMorC}) above, 
the collection of mappings between pairs over $X$ 
(i.e. the $Hom-$set in $\C_X$) is also a set.

Let $\mathtt{Cat}$ be the category of all small categories with
functors between them as arrows (see Ref.\ \cite{BRichter}).
The following theorem shows that we may construct a 
functor $\G: \mathbf{Top}^{op} \to \mathtt{Cat}$ from $\C_X.$
This is presheaf of small categories on $\mathbf{Top}$ and we
will use it in the following sections.
\begin{theorem}
Let $\G$ be an assignment which sends 
a topological space $X$ to the small category $\C_X$
and which sends maps between topological spaces
$f: Y \to X$ to functors $G(f): \C_X \to \C_Y.$
The assignment $G(f)$ assigns to an object $(E \to X, H_X)$ its
pullback along $f$ as an element of $\C_Y.$ 
Then $\G$ is a functor $\G: \mathbf{Top}^{op} \to \mathtt{Cat}.$
\label{ThmGFunc}
\end{theorem}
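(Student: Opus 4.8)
The plan is to verify the three axioms that make $\G$ a contravariant functor with values in $\mathtt{Cat}$: that for every map $f\colon Y \to X$ the assignment $G(f)$ is a well-defined functor $\C_X \to \C_Y$, that $G(\id_X) = \id_{\C_X}$, and that $G(g\circ f) = G(f)\circ G(g)$ for every composable pair $Z \xrightarrow{g} Y \xrightarrow{f} X$. That the target is the category $\mathtt{Cat}$ of \emph{small} categories is already handled by the discussion just before the statement, where $\C_X$ was shown to be small for every $X$.

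First I would make $G(f)$ precise on objects: given $f\colon Y \to X$ and a pair $(E \to X, H_X)$ in $\C_X$, put $G(f)(E \to X, H_X) = (f^{\ast}E \to Y,\ \tilde{f}^{\ast}H_X)$, where $f^{\ast}E = E \times_X Y$ is the pullback principal circle bundle, $\tilde f\colon f^{\ast}E \to E$ is the canonical $S^1$-equivariant bundle map, and $\tilde f^{\ast}H_X \in H^3(f^{\ast}E,\KZ)$ is the image of $H_X$; this is a pair over $Y$, hence an object of $\C_Y$. On morphisms I would use the structure results of Sec.\ (\ref{SSecMorC}): by the equivalence of Def.\ (\ref{DefCCat1}) and Def.\ (\ref{DefCCat2}) established after Def.\ (\ref{DefCCat2}), every arrow of $\C_X$ factors canonically into an isomorphism of pairs over $X$ followed by a pullback square, so it suffices to define $G(f)$ on each kind of factor and to check the two types of factor interact correctly. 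On a pullback square $G(f)$ is forced by the universal property of pullback bundles (the relevant iterated pullbacks over $Y$ are canonically identified and assemble into a commuting tower of pullback squares); on an isomorphism of pairs over $X$, Lem.\ (\ref{LemPairIso}) guarantees that the induced map of pullbacks over $Y$ is again an isomorphism of pairs, and compatibility with the $H$-fluxes follows from naturality of the induced maps on $H^3(-,\KZ)$. Preservation of identities and of composites of arrows in $\C_X$ is then a routine diagram chase with the universal property of the pullback.

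For the unit axiom, $\id_X^{\ast}E$ is canonically $E$ itself with $\tilde{\id}_X$ the identity, so $G(\id_X)$ is (canonically) the identity functor. For contravariance, the canonical isomorphism $g^{\ast}f^{\ast}E \cong (f\circ g)^{\ast}E$ of iterated pullbacks, together with Lem.\ (\ref{LemHotIso}) and Lem.\ (\ref{LemPairIso}) applied at the level of arrows, yields $G(g)\circ G(f) \cong G(f\circ g)$. The step I expect to be the main obstacle is that pullback bundles are only defined up to canonical isomorphism, so the naive recipe above a priori produces a \emph{pseudofunctor} rather than a strict one. I would remove this by fixing once and for all a cleavage for the forgetful fibration $F\colon \C \to \mathbf{Top}$ --- equivalently, lifting to $\C$ the cleavage of the Grothendieck fibration $E\colon \HC \to \mathbf{Top}$ of Thm.\ (\ref{ThmHCTEqv}) --- normalized so that $\id_X^{\ast}E := E$; with such a coherent system of pullbacks the associativity and unit comparison isomorphisms become identities and $\G\colon \mathbf{Top}^{op} \to \mathtt{Cat}$ is a strict functor. (If one prefers to read $\mathtt{Cat}$ as the $2$-category of small categories, the coherence data already assembled exhibits $\G$ as a pseudofunctor and the statement holds verbatim.)
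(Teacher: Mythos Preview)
Your proposal is correct and follows essentially the same approach as the paper: factor morphisms of $\C_X$ into an isomorphism of pairs over $X$ composed with a pullback square, define $G(f)$ on each factor separately, and then check contravariance via the iterated-pullback identification. The paper's proof is somewhat more informal (it writes $G(f\circ g)\simeq G(g)\circ G(f)$ without comment), whereas you explicitly flag the pseudofunctor-versus-strict-functor issue and resolve it with a cleavage; this is a genuine improvement in rigor, though your invocation of Lem.~(\ref{LemHotIso}) in the contravariance step is unnecessary --- the canonical isomorphism $g^\ast f^\ast E\cong (f\circ g)^\ast E$ comes directly from the universal property of pullbacks, not from any homotopy argument.
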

\begin{proof}
First, for every map $f: Y \to X,$ the assignment $G(f)$ is
a functor as follows:
Suppose we had a morphism of pairs in $\C_X$ as follows
\begin{equation}
\begin{CD}
(E_2, H_2) @>\tilde{h}>>(E_1,H_1) \\
@VVV   @VVV \\
X @>h>> X. \nonumber \\
\end{CD}
\end{equation}
Then, as discussed in the paragraph after Def.\ (\ref{DefCCat2}) 
above, we may factor this morphism into a composition of 
a pullback morphism followed by an isomorphism of pairs over
$X.$ 

If we pick a circle valued function $w \in C(X,S^1),$ we may
view the isomorphism of pairs over $X$ as a gauge transformation
of the bundle $E_2$ over $X$ induced by multiplication 
by $w.$ 

Pulling this function back to $Y$ by composing with the morphism 
$f$ gives an isomorphism of pairs over $Y.$
In particular the identity morphism of pairs over $X$ pulls
back to the identity morphism of pairs over $Y.$

Pullback squares with base a map $h':X \to X$ may be pulled back
along $f$ as well. 

Composing both these results above we can see that the pullback of an
arbitrary morphism of pairs over $X$ along $f:Y \to X$
is a morphism of pairs over $Y.$

Also the pullback of two composable morphisms of pairs over $X$ 
(in the sense of Def.\ \ref{DefCCat2} above) are composable morphisms
of pairs over $Y$ and the entire diagram of pullbacks and morphisms
commutes. 

Hence $G(f)$ is a well-defined functor from $\C_X$ to $\C_Y.$

From the definition of pullback and the above two results we can show
that $G(f \circ g) \simeq G(g) \circ G(f).$
The remaining properties of a functor follow from this.
Hence we have a contravariant functor 
$\G: \mathbf{Top} \to \mathtt{Cat}.$
\end{proof}

We now argue that, in a very precise sense, $\C_X$ is determined
by the isomorphisms in it together with some extra data:
In SubSubSec.\ (\ref{SSecMorPXY}) we had shown 
that the pullback of pairs in $\C_X$ under an arbitrary map 
$f:Y \to X$ could be calculated using pullback along a
cofibration map $h: Y \to M_f$ where $M_f$ is the mapping
cone of the map $f$ (see Ref.\ \cite{May}).

We showed in SubSubSec.\ (\ref{SSecMorPXY})
that the behaviour of an arbitrary pair over 
$X$ up to isomorphism of pairs under an arbitrary map $f:Y \to X$ 
(up to homotopy of $f$) was determined by pulling back a pair on
$M_f$ induced by the given
pair on $X$ along a natural map $h:Y \to M_f.$ 
Thus, for any $f \in [Y,X]$
the isomorphisms in the category $\C_X$ 
together with the construction of the mapping cone of $f$ determine the
pullback of pairs up to isomorphism. Hence, we can study
$C_X$ together with its natural map to 
$C_{M_f}$ for every $f \in [Y,X]$ which is not
a homeomorphism and obtain the behaviour of Topological T-duality
pairs under pullback along the map $f$
(up to isomorphism of pairs over a fixed base).

We will use this argument in SubSec.\ (\ref{SSecGXCX}) with $Y$
replaced by $X$ to motivate the construction of the functor $\G.$

\subsection{The subcategories $\B_X,\F_X,\pi(\F_X)$ \label{SSecBXFX}}
After Thm.\ (\ref{ThmHCTEqv}) above we had argued that
the functor $E: \HC \to \mathbf{Top}$ was a category
fibered in groupoids over the site $\mathbf{Top}.$ 
In particular we claimed that the fiber of $E$ over
a topological space $X$ should, in a sense, be viewed
a generalization of $X.$

We will see in SubSec.\ (\ref{SSecNMFE})
below that the functor $E$ is equivalent to a
groupoid-valued presheaf $\M$
on $\mathbf{Top}.$ We will show that
the value of this groupoid
on a topological space is the essential fiber of
the functor $E,$ namely the small groupoid 
$\pi(\F_X)$ defined
in Lem.\ (\ref{LemPiFXGrpd}) below.

We will see in SubSec.\ (\ref{SSecHotG}) below
that the set of objects of the small groupoid $\pi(\F_X)$
is the set $P(X)$ where $P$ is Bunke-Schick's functor. 
Thus, the presheaf $\M$ is a generalization of Bunke-Schick's
functor $P(X).$ In addition we show in Sec.\ (\ref{SecP_i})
that we can use this presheaf to define higher topological 
T-duality functors.

In this subsection, for each topological space $X,$ 
we first naturally construct two subcategories 
$\B_X, \F_X$ of $\C_X$ from Bunke-Schick's functor 
$P$ and its lift to the category $\C$ 
defined in SubSec.\ (\ref{SSecPCLift}) above. We then 
define the category $\pi(\F_X)$ which is a subcategory of 
$\HC$ and prove that it is a groupoid.

The subcategory $\B_X$ of $\C_X$ has as objects all pairs over the topological space  $X$ and as morphisms all isomorphisms of any pair
over $X$ as in part (a) of Def.\ (\ref{DefCCat1}). 
Due to this, all the arrows in the category 
$\B_X$ cover the identity arrow $\id:X \to X.$ We had 
calculated the behaviour of morphisms in the subcategory $\B_X$ in SubSubSec.\ (\ref{SSecMorPXXid}) above.

The subcategory $\F_X$ of $\C_X$ has as objects all pairs over the topological space $X$ and as arrows all isomorphisms between pairs over $X$ covering self-homeomorphisms $f:X \to X.$ We had calculated
the behaviour of morphisms in the subcategory $\F_X$ in
SubSubSec.\ (\ref{SSecMorPXXf}) above.
We will discuss the subcategory $\pi(\F_X)$ of $\HC$ at the end
of this subsection. 

We now show that both the above subcategories of
$\C_X$ namely $\B_X$ and $\F_X$
may be naturally constructed from the functors 
$P\circ F$ and $F$ as the essential fiber of these 
functors over an object of their respective target categories. 
We will use these subcategories which we have constructed in this subsection throughout the paper.

We will need the definition of the essential fiber of a Functor over
an object in the target category of the functor
(from the nCatLab `Essential
Fiber' Webpage---see Ref.\ \cite{nLab-EssFib}). The 
essential fiber of a Functor $p: \E \to \B$ between two
categories $\E$ and $\B$ is an analogue of the homotopy
fiber of a map between two spaces (see Ref.\ \cite{May} Ch.\ (8) Sec.\ (6)
and Ref.\ \cite{nLab-EssFib} Remark (2.2)). 
By Ref.\ \cite{nLab-EssFib}, the essential fiber of a functor $p$ 
may be viewed as the category which represents the
possible ways by which we may obtain 
$b$ in $\B$ by applying $p$ to an object
of $\E.$
\begin{definition}
Let $p: \E \to \B$ be a functor and $b \in \B$ an object, the
essential fiber of $p$ over $b$ is the category whose:
\begin{enumerate}
\item objects are pairs $(e, \phi)$ where $e \in \E$ is an object
and $\phi: p(e) \simeq b$ is an isomorphism in $\B.$
\item morphisms $(e, \phi) \to (e', \phi')$ are morphisms
$f:e \to e'$ in $\E$ such that $\phi' \circ p(f) = \phi,$
\item the composition operation is the obvious one.
\end{enumerate}
\end{definition}

We now show that the essential fiber of the forgetful
functor $F: \C \to \mathbf{Top}$ over a topological 
space $X$ is the subcategory $\F_X$ of $\C_X$ defined
at the beginning of this subsection.
\begin{theorem}
The essential fiber of $F: \C \to \mathbf{Top}$ over a topological
space $X$ is the category $\F_X.$ This category is a groupoid.
\label{ThmFXDef}
\end{theorem}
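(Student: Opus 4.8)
The plan is to show that the essential fiber of $F:\C\to\mathbf{Top}$ over $X$ and the category $\F_X$ defined at the start of this subsection literally have the same objects and the same morphisms (up to a harmless bijection of object data), and then separately verify that $\F_X$ is a groupoid. First I would unwind the definition of the essential fiber: an object is a pair $(e,\phi)$ with $e=(E\to Z,H)$ an object of $\C$ and $\phi: F(e)=Z \xrightarrow{\ \simeq\ } X$ an isomorphism in $\mathbf{Top}$, i.e.\ a homeomorphism $Z\cong X$. Using $\phi$ to transport $E$ along $Z\cong X$, each such object is canonically identified with a pair genuinely over $X$; this gives a bijection between objects of the essential fiber and objects of $\F_X$ (all pairs over $X$). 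One should remark that strictly the essential fiber remembers the extra datum $\phi$, but since any two choices are related by an isomorphism in the essential fiber, the essential fiber is equivalent to its full subcategory on objects with $Z=X$ and $\phi=\id_X$, which is exactly how $\F_X$ was set up; I would state this equivalence rather than pretend the categories are equal on the nose.

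Next I would match morphisms. A morphism $(e,\phi)\to(e',\phi')$ in the essential fiber is a morphism $f:e\to e'$ in $\C$ with $\phi'\circ F(f)=\phi$. Taking $\phi=\phi'=\id_X$, the condition $F(f)=\id_X$ says precisely that the underlying map of base spaces induced by $f$ is a homeomorphism $X\to X$ (by the remark after Def.\ (\ref{DefCCat2}), an isomorphism of pairs covering an arbitrary map of spaces forces that map to be a homeomorphism). But these are exactly the arrows of $\F_X$: isomorphisms of pairs over $X$ covering a self-homeomorphism $f:X\to X$. So the two morphism-classes coincide, and composition is the composition in $\C$ on both sides, which is the ``obvious one'' appearing in the definition of essential fiber. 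This establishes that the essential fiber of $F$ over $X$ is (equivalent to, and on the natural representatives equal to) $\F_X$.

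It remains to show $\F_X$ is a groupoid. Here the key observation is that \emph{every} arrow of $\F_X$ is, by definition, an isomorphism of pairs, hence has a two-sided inverse in $\C$: given a commutative square as in Eq.\ (\ref{MorP}) in which $\tilde f$ is a bundle isomorphism and the base map $f$ is a homeomorphism with $\tilde f^{\ast}(H)=H'$, the inverse square has $\tilde f^{-1}$ on top and $f^{-1}$ on the bottom, and the flux condition transports as $(\tilde f^{-1})^{\ast}(H')=H$ since $(\tilde f^{-1})^{\ast}=(\tilde f^{\ast})^{-1}$. One must check this inverse square is again an arrow of $\F_X$: indeed $f^{-1}:X\to X$ is again a self-homeomorphism, and $\tilde f^{-1}$ is again an $S^1$-equivariant bundle isomorphism, so the inverse lies in $\F_X$. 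Therefore every morphism of $\F_X$ is invertible and $\F_X$ is a groupoid. Since equivalence of categories preserves being a groupoid, the essential fiber is a groupoid as well.

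The main obstacle I anticipate is purely bookkeeping rather than conceptual: being careful about the distinction between the essential fiber as literally defined (objects carrying the extra isomorphism datum $\phi$, objects lying over spaces merely homeomorphic to $X$) and the concrete model $\F_X$ (objects strictly over $X$). The clean way around this is to invoke the standard fact that the full subcategory on the ``rigid'' objects $(e,\id_X)$ is equivalent to the whole essential fiber, and to note that under the factorization of morphisms from the discussion after Def.\ (\ref{DefCCat2}) every arrow decomposes as an isomorphism of pairs over $X$ followed by a pullback along the covering homeomorphism, both of which are already accounted for in $\F_X$. With that identification in hand, the groupoid claim is immediate from the invertibility argument above.
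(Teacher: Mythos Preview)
There is a genuine gap in your identification of morphisms. Once you restrict to the full subcategory on objects $(e,\id_X)$, the compatibility condition $\phi'\circ F(f)=\phi$ reads $\id_X\circ F(f)=\id_X$, i.e.\ $F(f)=\id_X$: the base map is forced to be the \emph{identity}, not merely ``a homeomorphism $X\to X$'' as you write. The parenthetical appeal to the remark after Def.~(\ref{DefCCat2}) does not help, since that remark only says an isomorphism of pairs forces the base map to be a homeomorphism, which is already implied by (and strictly weaker than) $F(f)=\id_X$. So the full subcategory you have singled out is $\B_X$ (isomorphisms of pairs covering $\id_X$), not $\F_X$. Relatedly, your earlier assertion that ``any two choices [of $\phi$] are related by an isomorphism in the essential fiber'' is false in general: for $X=S^2$ and $e$ the Chern-class-$1$ bundle, $(e,\id)$ and $(e,a)$ (with $a$ the antipodal map) are \emph{not} isomorphic in the essential fiber, since that would require a bundle isomorphism $e\to e$ covering $a$, i.e.\ $e\cong a^{\ast}e$, which fails.

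The paper's proof avoids this trap by \emph{not} restricting to $\phi=\id_X$. It keeps $\phi:Y\to X$ and $\phi':W\to X$ arbitrary, so the compatibility condition yields $F(f)=(\phi')^{-1}\circ\phi$, a genuinely arbitrary homeomorphism; only then does it use $\phi,\phi'$ to transport both pairs to $X$, whereupon $f$ becomes an isomorphism of pairs over $X$ covering that self-homeomorphism --- exactly an arrow of $\F_X$. In short, the extra datum $\phi$ is what allows the essential fiber to see nontrivial self-homeomorphisms of the base; by discarding $\phi$ at the outset you lose them. If you want to salvage your approach, do not take a full subcategory; instead build a functor from the essential fiber that records how $\phi$ and $\phi'$ differ, as the paper implicitly does. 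Your verification that $\F_X$ is a groupoid (explicit inverse via $\tilde f^{-1}$ over $f^{-1}$) is fine and matches the paper's one-line observation.
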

\begin{proof}
By definition, $F$ sends the pair $(E \to X, H)$ to $X.$
The essential fiber of $F$ over $X$ is a category
whose objects are ordered pairs $(e, \phi)$ with
$e = (E \to Y, H_Y)$ a pair in $\C$ together
with a homeomorphism $\phi: Y = F(e) \to X $ in $\mathbf{Top}.$
Thus the objects of the essential fiber of $F$ may be taken as
all pairs over $X.$

Let $(e, \phi)$ be in the essential fiber with $e = (E \to Y, H_Y)$ with
$\phi:Y \to X$ a homemorphism.Let $(e',\phi')$ be another element
of the essential fiber with $e' = (E' \to W, H_W)$ with $\phi': W \to X.$
A morphism from $(e,\phi)$ to $(e', \phi')$ is a morphism of pairs 
$f:e \to e'$ in $\C$ such that 
$\phi' \circ F(f) = \phi.$ That is, morphisms between $(e,\phi)$ and
$(e', \phi')$ in the essential fiber are a morphism of pairs in $\C$ covering
a map $F(f): Y \to W$ commuting with the homeomorphisms
$\phi, \phi'$ above. Hence, $F(f)$ must be a homeomorphism in 
$\mathbf{Top}$ and $f$ must be a morphism of pairs in the
sense of Def.\ (\ref{DefCCat2}) covering
a self-homeomorphism of the base which we had argued in
the previous paragraph may be taken as $X.$ 
Thus any morphism in the essential fiber category
must be an isomorphism of pairs in $\C$ over $X$ covering a
self-homeomorphism of the base space $X$ (see discussion
after Def.\ (\ref{DefCCat2}) near the end of that subsubsection with
$Y \simeq X$).
This is just the category $\F_X$ defined above.

Since every arrow in $\F_X$ is invertible by definition, it is
a groupoid, in fact, since $\F_X$ is a small category it 
is a small groupoid (see Remark (2.4) and 
Def.\ (2.6) of nCatLab page on Groupoids, Ref.\ \cite{nLab-Grpd}).
\end{proof}

The essential fiber of the functor $F: \C \to \mathbf{Top}$ 
over a topological space $X$ may be viewed as the category of all
possible ways of obtaining $X$ by applying $F$ to a pair in $\C.$
Also, note that any self-homeomorphism of $X$ acts on 
the category $\F_X$ by pulling back each pair and each morphism
and hence yields a new category isomorphic to $\F_X$ i.e. 
every self-homeomorphism of $X$ induces 
an automorphism of $\F_X$ by pullback (see Remark (4.5) of
Ref.\ \cite{nLab-EssFib}).  

We may study the lift of Bunke-Schick's functor $P$ to $\C$ by
restricting it to $\C_X$ for every topological space $X.$
Consider the assignment $(E \to X, H) \mapsto [(E \to X, H)].$
This is the functor $P \circ F$ restricted to $\C_X.$ Its image
is the set $S = P(X).$ We restrict $P \circ F$ to $\C_X$ by making the
target category (denoted $\T$) a single set $S$ with morphisms only 
the identity $f: S \to S.$

We now show that the essential fiber of $P \circ F$ restricted to $\C_X$ is
the subcategory $\B_X$ of $\C_X$ defined at the beginning of this
subsection. 
\begin{theorem}
Let $\T$ be a set $S$ viewed as a
category consisting of a single set $S$ as an object with
morphism only the identity function on $S.$ 
The essential fiber of $P \circ F: \C_X \to \T$ is 
the subcategory $\B_X$ of $\C_X.$ This category is a groupoid.
\label{ThmBXDef}
\end{theorem}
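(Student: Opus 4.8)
The plan is to mirror the proof of Theorem~(\ref{ThmFXDef}) almost verbatim, adjusting for the fact that the target category $\T$ here is a discrete set $S = P(X)$ rather than $\mathbf{Top}$. First I would unwind the definition of the essential fiber of $P \circ F : \C_X \to \T$: an object is a pair $(e, \phi)$ where $e = (E \to X, H)$ is an object of $\C_X$ (so already a pair over $X$) and $\phi : (P\circ F)(e) \simeq S$ is an isomorphism in $\T$. Since $\T$ has a single object $S$ and only the identity morphism, the only available isomorphism $\phi$ is $\id_S$, so the objects of the essential fiber are exactly the objects of $\C_X$, i.e.\ all pairs over $X$. This already matches the object class of $\B_X$ as defined at the start of SubSec.~(\ref{SSecBXFX}).

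Next I would analyze the morphisms. A morphism $(e,\phi) \to (e',\phi')$ in the essential fiber is a morphism $f : e \to e'$ in $\C_X$ with $\phi' \circ (P\circ F)(f) = \phi$; since $\phi = \phi' = \id_S$, this condition says $(P\circ F)(f) = \id_S$, which is automatic because $\T$ has only the identity endomorphism of $S$. Hence the morphisms in the essential fiber are \emph{all} morphisms of $\C_X$ --- but I must be careful: the condition $(P\circ F)(f)=\id_S$ does impose that $f$ preserves the isomorphism class of the pair, i.e.\ $f$ carries a pair in a given Bunke--Schick isomorphism class to a pair in the \emph{same} class. By the analysis of SubSubSec.~(\ref{SSecMorPXXid}) and the factorization after Def.~(\ref{DefCCat2}), any morphism of pairs over $X$ covering $\id_X$ is precisely an isomorphism of pairs in the sense of part~(a) of Def.~(\ref{DefCCat1}) --- equivalently, a gauge transformation of the underlying circle bundle --- and these are exactly the morphisms that fix the Bunke--Schick isomorphism class. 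Morphisms covering a \emph{nontrivial} self-homeomorphism $f : X \to X$, or more general maps, can change the isomorphism class (see SubSubSec.~(\ref{SSecMorPXXf})), so they do not lie in the essential fiber. Therefore the morphisms of the essential fiber are exactly the isomorphisms of pairs over $X$ covering $\id_X$, which is precisely the morphism class of $\B_X$.

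Then I would observe that composition in the essential fiber is ``the obvious one'', which agrees with composition in $\B_X$, and the identity objects match, so the essential fiber of $P \circ F$ over $S$ is literally the category $\B_X$. Finally, for the groupoid claim: every isomorphism of pairs over $X$ in the sense of Def.~(\ref{DefCCat1})(a) is given by a bundle isomorphism $h : E' \to E$ over $\id_X$ with $h^\ast H = H'$ (Diagram~(\ref{PairIso})), and such $h$ is invertible with $h^{-1}$ again satisfying the $H$-flux compatibility, so every arrow in $\B_X$ is invertible. Since $\B_X$ is a small category (it is a subcategory of the small category $\C_X$, which was shown small just before Thm.~(\ref{ThmGFunc})), it is a small groupoid, citing Ref.~\cite{nLab-Grpd} as in the proof of Thm.~(\ref{ThmFXDef}).

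The main obstacle --- really the only substantive point rather than bookkeeping --- is making precise the claim that the constraint $(P\circ F)(f)=\id_S$ cuts the morphisms down to exactly those covering $\id_X$ (up to homotopy), rather than to all morphisms covering \emph{some} self-homeomorphism that happens to act trivially on isomorphism classes. One must use the classification of SubSec.~(\ref{SSecMorC}): a morphism of pairs over $X$ that fixes every Bunke--Schick isomorphism class is, after the canonical factorization, a pullback along a self-homeomorphism $h_\alpha$ composed with a bundle isomorphism, and $P(h_\alpha)=\id$ forces $h_\alpha$ to be homotopic to $\id_X$ (so the pullback factor is itself an isomorphism of pairs over $X$ covering $\id_X$ by Lem.~(\ref{LemHotIso}) and Lem.~(\ref{LemPairHom})); composing, $f$ is an isomorphism of pairs over $X$ covering $\id_X$. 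I would spell this reduction out carefully and leave the remaining verifications (objects, composition, invertibility, smallness) as the short routine arguments they are.
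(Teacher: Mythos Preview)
Your proof follows essentially the same outline as the paper's: unpack the essential-fiber definition, note that $\T$ has only $\id_S$ so $\phi=\phi'=\id_S$, read off the morphism condition $(P\circ F)(f)=\id_S$, and conclude that the admissible $f$ are the Bunke--Schick isomorphisms covering $\id_X$. The paper's own proof is terser: it simply asserts the last implication (``the underlying map \ldots must induce the identity map on the set $P(X)$ \ldots Thus \ldots morphisms between pairs over $X$ covering the identity map $\id:X\to X$'') without further argument, and then notes invertibility and smallness exactly as you do.

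Where you differ is in explicitly isolating ``the main obstacle'': that $(P\circ F)(f)=\id_S$ a priori only says $F(f)$ acts trivially on isomorphism classes of pairs, not that $F(f)=\id_X$. Your proposed fix is to factor $f$ and argue that $P(h_\alpha)=\id_{P(X)}$ forces $h_\alpha$ homotopic to $\id_X$. That last claim, however, is not supplied by Lem.~(\ref{LemHotIso}) or Lem.~(\ref{LemPairHom}) --- those results only say that homotopic maps give the same $P(f)$, not the converse. In general a nontrivial mapping class can act trivially on all of $P(X)$; the paper itself later records (Thm.~(\ref{ThmPairMCG})) that the relevant group is only a \emph{quotient} of $MCG(X)$. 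So your repair does not close the gap. That said, the paper's proof does not close it either --- it simply passes over it --- so your version is at least as rigorous as the original and has the merit of naming the issue rather than hiding it.
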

\begin{proof}
By definition $P \circ F$ sends the pair $e = (E \to X, H)$ in $ \C $ to 
$P(X).$  The essential fiber of $P \circ F$ over $S$ in $\T$
is a category whose objects are ordered pairs $(e, \phi)$ where
$e =(E \to X, H)$ is a pair in $\C$ and 
$\phi: P \circ F(e) \simeq S$ is an isomorphism.

Let $e' = (E' \to X, H')$ be another pair in $\C_X$ such that we have a bijection of sets $\phi':P\circ F(e') \simeq S.$ Note that $F(e') = X$
as well. Hence $\phi': P(X) \simeq P(X)$ must be the identity.

A morphism between $(e, \phi)$ and $(e', \phi')$ is a morphism
of pairs in $\C$ namely $f: e \to e'$ such that 
$\phi' \circ P \circ F(f) = \phi.$ Thus, the underlying map of
topological spaces $F(f): X \to X$ must induce the identity map
on the set $P(X)$ by the above.
Thus, the essential fiber of $P \circ F$ is a subcategory of
$\C$ consisting of pairs over $X$ and morphisms between pairs over
$X$ covering the identity map $id:X \to X.$ This is exactly
the definition of Bunke-Schick isomorphism of pairs over $X$
in Def.\ (\ref{DefCCat1}) above and is the category
$\B_X.$

Every arrow in $\B_X$ is invertible and hence it is a groupoid,
since $\B_X$ is a small category, it is actually a small
groupoid (see Remark (2.4) and Def.\ (2.6) of Ref.\  \cite{nLab-Grpd}).
\end{proof}

The essential fiber of $P \circ F: \C_X  \to \T$ may be viewed as
the category of possible ways of obtaining $S$ by applying $P \circ F$ to 
pairs $e$ over $X.$ Note that
there are no automorphisms in $\T,$ so despite Remark (4.5) of
Ref.\ \cite{nLab-EssFib}, Thm.\ (\ref{ThmBXDef}) does not give us any 
natural automorphisms of $\B_X.$
However, this is only an artefact of the construction above, any
map $f:Y \to X$ induces a functor $\B_X \to \B_Y$ as we now
argue:
\begin{theorem}
Any map $f:Y \to X$ induces a functor $\B_X \to \B_Y.$
\end{theorem}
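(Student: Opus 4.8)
The plan is to realize the functor $\B_X \to \B_Y$ as the restriction to $\B_X$ of the pullback functor $G(f)\colon \C_X \to \C_Y$ supplied by Thm.\ (\ref{ThmGFunc}). Since $\B_X$ is the subcategory of $\C_X$ having all pairs over $X$ as objects and only the Bunke-Schick isomorphisms of pairs over $X$ covering $\id_X$ (Def.\ (\ref{DefCCat1}), part (a)) as arrows, it is enough to check that $G(f)$ carries the objects and arrows of $\B_X$ into $\B_Y$; functoriality, together with preservation of identities and composition, is then inherited directly from Thm.\ (\ref{ThmGFunc}). On objects there is nothing to do: $G(f)$ sends a pair $(E \to X, H)$ to its pullback $(f^{\ast}E \to Y, f^{\ast}H)$, which is an object of $\B_Y$.

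First I would verify that $G(f)$ sends an arrow of $\B_X$ to an arrow of $\B_Y$. Let $h$ be such an arrow, i.e.\ an isomorphism of pairs over $X$ covering $\id_X$; by the analysis of SubSubSec.\ (\ref{SSecMorPXXid}) its underlying bundle map is multiplication by some $w \in C(X, S^1)$, and its effect on the flux is the shift recorded in Lem.\ (\ref{LemPairHom}). Applying $G(f)$ amounts to pulling $h$ back along $f$: the function $w$ becomes $f^{\ast}w = w \circ f \in C(Y, S^1)$, which induces a gauge transformation of $f^{\ast}E$ over $Y$ covering $\id_Y$, and, as already observed in the proof of Thm.\ (\ref{ThmGFunc}), such a pulled-back gauge transformation is again an isomorphism of pairs, now over $Y$. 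Compatibility with the flux is automatic: the shift formula of Lem.\ (\ref{LemPairHom}), of the form $H \mapsto H + \pi^{\ast}(\pi_!(H) \cup a)$, is natural under pullback, because $f^{\ast}$ commutes with the cup product and with the Gysin map $\pi_!$ under the canonical identifications, so the image arrow shifts $f^{\ast}H$ by the class over $Y$ obtained from $a$ and $f$. That the image is genuinely an \emph{isomorphism} of pairs over $Y$, and not merely a morphism of pairs, is guaranteed by Lem.\ (\ref{LemPairIso}), which builds an isomorphism of the pullback pairs over $Y$ out of $h$ and $f$.

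Then I would check the remaining functor axioms. A constant $w$ pulls back to a constant, so the identity of $\B_X$ maps to the identity of $\B_Y$; and since $f^{\ast}(w_1 w_2) = (f^{\ast}w_1)(f^{\ast}w_2)$, composition in $\B_X$ is carried to composition in $\B_Y$ --- alternatively, both facts are immediate once one knows $G(f)$ is itself a functor. One can moreover note that $X \mapsto \B_X$, $f \mapsto G(f)|_{\B_X}$ assembles into a presheaf $\mathbf{Top}^{op} \to \mathtt{Cat}$ in parallel with $\G$, using $G(f \circ g) \simeq G(g) \circ G(f)$ from Thm.\ (\ref{ThmGFunc}).

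The hard part --- really the only point needing genuine care, the rest being bookkeeping --- is the first step: making sure that the pullback of a Bunke-Schick isomorphism over $X$ covering $\id_X$ is again a Bunke-Schick isomorphism over $Y$ covering $\id_Y$, rather than some arbitrary morphism of pairs over $Y$ covering a nontrivial self-map of $Y$. This is exactly where the combination of Lem.\ (\ref{LemPairIso}), Lem.\ (\ref{LemPairHom}) and the explicit circle-valued-function description of SubSubSec.\ (\ref{SSecMorPXXid}) does the work, and I would organize the write-up around that verification.
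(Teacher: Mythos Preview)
Your proposal is correct and follows essentially the same route as the paper: both arguments use the circle-valued-function description of Bunke--Schick isomorphisms from SubSubSec.\ (\ref{SSecMorPXXid}), pull back $w \in C(X,S^1)$ to $w \circ f \in C(Y,S^1)$, and verify the functor axioms via the fact that identities correspond to constant maps and composition to pointwise products. Your framing via restriction of $G(f)$ from Thm.\ (\ref{ThmGFunc}) is a nice organizational device that the paper does not make explicit, and your invocations of Lem.\ (\ref{LemPairIso}) and Lem.\ (\ref{LemPairHom}) for flux compatibility are more careful than the paper's treatment, but the substance of the argument is the same.
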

\begin{proof}
We may lift any isomorphism of pairs in $\B_X$ which covers
the identity on $X$ to a map $g:X \to S^1$ (see 
SubSubSec.\ (\ref{SSecMorPXXid})). This will pullback by composition
with $f$ to give $g \circ f: Y \to S^1$ and hence give an  isomorphism of the pullback pairs on $Y.$ 
Homotopic maps $g_1, g_2: X \to S^1$ will pullback
to homotopic maps. Hence we obtain an assignment sending objects
in $\B_X$ to objects in $\B_Y$ and similarly for arrows.
Also, the identity arrow on a pair in $\B_X$ lifts to the
constant map on $X$ with value $1 \in S^1.$ Under composition by
$f$ this yields the constant map on $Y$ with value $1 \in S^1$ and
hence the identity arrow on the pullback pair.

All the requirements for a functor are satisfied except that
we need to check that composition of automorphisms in $\B_X$ correspond to pointwise products of maps $g_1g_2$ on $X.$
However pointwise products of maps $g_1g_2$ will
pullback to pointwise products of maps on $Y$ namely 
$(g_1 \circ f) (g_2 \circ f).$ 

Due to this we obtain the required functor.
\end{proof}

Recall we have a functor $\pi: \C \to \HC$ which assigns to
any pair over $X$ the homotopy equivalence class of the pair
over $X.$ Clearly, the image of the subcategory 
$\F_X$ above under $\pi$ is a subcategory of $\HC.$ 

We now prove that $\pi(\F_X)$ is a groupoid.
\begin{lemma}
Let $\F_X$ be the subcategory of $\C$ defined above.
Let $\pi: \C \to \HC$ be the functor defined in 
SubSec.\ (\ref{SSecPCLift}) above.
\leavevmode
\begin{enumerate}
\item The subcategory $\pi(\F_X)$ of $\HC$ is a groupoid.
\item Let $f:e \to e'$ be any isomorphism of pairs 
in $\F_X,$ then there exists a pullback isomorphism of pairs
$p$ in $\F_X$ between two pairs homotopy equivalent to
$e,e'$ respectively such that $\pi(f) = \pi(p).$
\end{enumerate}
\label{LemPiFXGrpd}
\end{lemma}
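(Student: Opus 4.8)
The plan is to deduce Lemma~(\ref{LemPiFXGrpd}) from the two previously established facts about $\F_X$ and $\pi$: namely that $\F_X$ is a small groupoid (Thm.~(\ref{ThmFXDef})) and that every morphism in $\C$ has image under $\pi$ equal to the image of a pullback morphism (Thm.~(\ref{ThmCSimCat}), part (2)). For part (1), I would first observe that $\pi(\F_X)$ is, by construction, a subcategory of $\HC$ whose objects are the homotopy equivalence classes of pairs over $X$ and whose arrows are images under $\pi$ of the arrows of $\F_X$. Since $\pi$ is a functor, it carries identities to identities and respects composition, so to see that $\pi(\F_X)$ is a groupoid it suffices to check that the image of every arrow is invertible. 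But if $f:e\to e'$ is an arrow of $\F_X$, then $f$ is an isomorphism of pairs (Thm.~(\ref{ThmFXDef})), so there is $f^{-1}:e'\to e$ in $\F_X$ with $f\circ f^{-1}=\id_{e'}$, $f^{-1}\circ f=\id_e$; applying $\pi$ gives $\pi(f)\circ\pi(f^{-1})=\id_{\pi(e')}$ and $\pi(f^{-1})\circ\pi(f)=\id_{\pi(e)}$, so $\pi(f)$ is invertible in $\HC$ with inverse lying in $\pi(\F_X)$. Hence every arrow of $\pi(\F_X)$ is invertible, and since $\F_X$ is small and $\pi$ does not enlarge hom-sets, $\pi(\F_X)$ is a small groupoid. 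I would also note here that $\pi(\F_X)$ is exactly the image under $\pi$ of the essential fiber of $F$, which is the fact used later to identify it with the essential fiber of $E$.

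For part (2), let $f:e\to e'$ be an isomorphism of pairs in $\F_X$, say covering a self-homeomorphism $g:X\to X$. By the factorization discussion after Def.~(\ref{DefCCat2}) (the argument that $\C$ and $\C'$ are equivalent), $f$ factors uniquely as $f=\iota\circ\tilde g$ where $\tilde g$ is a pullback morphism along $g$ and $\iota$ is an isomorphism of pairs over $X$ (covering $\id_X$); moreover, as remarked just after that discussion, when $f$ is an isomorphism covering a homeomorphism the pullback factor $\tilde g$ is itself an isomorphism of pairs covering $g$, hence lies in $\F_X$. Applying $\pi$ and using that $\pi(\iota)=\id$ in $\HC$ (since $\iota$ is a homotopy equivalence of pairs over a fixed base, by Bunke--Schick Lem.~(2.1)/Lem.~(2.2) as cited), we get $\pi(f)=\pi(\iota)\circ\pi(\tilde g)=\pi(\tilde g)$. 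This is essentially the content of Thm.~(\ref{ThmCSimCat}) part (2) specialized to arrows of $\F_X$, but I would spell it out to make clear the pullback isomorphism $p:=\tilde g$ can be taken between pairs homotopy equivalent to $e$ and $e'$: the domain of $\tilde g$ is literally $e$ in the factorization, and if one additionally wishes (as in Thm.~(\ref{ThmCSimCat}) part (3)) to have a strict pullback of a chosen representative, one invokes Lem.~(\ref{LemPairIso}) to shift $e,e'$ within their homotopy classes so that the remaining isomorphism factor disappears. Either way $\pi(f)=\pi(p)$ with $p$ a pullback isomorphism in $\F_X$.

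The only mild subtlety — and the step I would be most careful about — is making precise that $\pi(\tilde g)$ really equals $\pi(f)$, i.e. that the "isomorphism over a fixed base" factor $\iota$ is killed by $\pi$. This is where the identification of $\HC$ with homotopy equivalence classes of pairs is doing the work: $\iota$ covers $\id_X$ and, by the cited Bunke--Schick lemma, an isomorphism of pairs over $X$ is the same datum as a homotopy of pairs, so $\iota$ becomes an identity in $\HC$. Everything else is formal functoriality. I would therefore present part (1) first (three short lines using that $\F_X$ is a groupoid and $\pi$ a functor) and then part (2) as an application of the canonical factorization plus $\pi(\iota)=\id$, citing Thm.~(\ref{ThmCSimCat}), Thm.~(\ref{ThmFXDef}), and Lem.~(\ref{LemPairIso}) rather than reproving anything.
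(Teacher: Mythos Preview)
Your proposal is correct and follows essentially the same approach as the paper's proof: for part (1) you use that $\F_X$ is a groupoid (Thm.~(\ref{ThmFXDef})) and that $\pi$ is a functor to transport inverses, and for part (2) you invoke the factorization after Def.~(\ref{DefCCat2}) together with Thm.~(\ref{ThmCSimCat}) part (2) to replace $f$ by a pullback morphism with the same image under $\pi$. The paper's argument is slightly terser but uses exactly these ingredients; your explicit identification of the subtlety that $\pi(\iota)=\id$ is the same point the paper makes when it writes $\pi(f)=\pi(p)$.
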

\begin{proof}
\leavevmode
\begin{enumerate}
\item The objects of $\pi(\F_X)$ are the objects of $\HC$
and hence any object in $\pi(\F_X)$ 
is a collection of homotopy equivalent
pairs over $X.$ Any morphism $f:\pi(e) \to \pi(e')$ in
$\pi(\F_X)$ (with $e, e'$ pairs in $\C$) is of the form
$f = \pi(\tilde{f})$ where $\tilde{f}: e \to e'$ is a morphism
in $\F_X$ which is a subcategory of $\C.$

Hence $\tilde{f}$ must be an isomorphism of
pairs, by definition. Thus, there must be an inverse for
$\tilde{f}$ in $\F_X$ which, without loss of generality,
may be taken as $\tilde{g}:e' \to e$ such that 
$\pi(\tilde{g}) = g$ is a morphism in $\HC.$

Clearly, $g: \pi(e') \to \pi(e)$ and $g \circ f = 1_e$ and
$f \circ g = 1_{e'}.$ Thus $f$ is an iso arrow in $\pi(\F_X).$

Note that picking a different lift of $f$ in $\F_X$ 
will give the same morphism $g$ by uniqueness of $g.$
Since $f$ was arbitrary, $\pi(\F_X)$ must be a groupoid.
\item Note that by the discussion after 
Def.\ (\ref{DefCCat2}) near the
end of that subsection and by part (2) of
Thm.\ (\ref{ThmCSimCat}) if we are given an 
arbitrary isomorphism $f$ between two pairs in $\F_X,$
there is a pullback isomorphism $p$ between those two pairs
(which must be a morphism between those two pairs in 
$\F_X$ as well) such that $\pi(f) = \pi(p).$
Thus, the image in $\pi(\F_X)$ of an arbitrary 
isomorphism of pairs in $\F_X$ under the functor
$\pi$ is the same as the image of some pullback
isomorphism of pairs in $\F_X$ under the functor $\pi.$ 
\end{enumerate}
\end{proof}

We now calculate the essential fiber of the functor $E:\HC \to \mathbf{Top}$
defined in SubSec.\ (\ref{SSecPCLift}) over a topological
space $X.$ Recall that the functor
$F: \C \to  \mathbf{Top}$ above factorized naturally
as $F \simeq E \circ \pi$ where $E: \HC \to \mathbf{Top}$ was formally
defined in Thm.\ (\ref{ThmHCTEqv}) above and $\pi: \C \to \HC$ was the
functor which on acted on objects by sending
pairs in $\C$ to their Bunke-Schick isomorphisms of pairs.

We first show that the fiber of the functor $E$ over
a topological space $X$ is the subcategory of 
$\HC$ consisting of the set of Bunke-Schick 
isomorphism classes of pairs over
$X$ with only identity arrows as morphisms. This is
expected since, as we argued in the paragraph
after Thm.\ (\ref{ThmHCTEqv}) above, the functor 
$E$ was obtained by the Grothendieck construction from
the functor $P: \mathbf{Top}^{op} \to \mathbf{Set}$
(since $P$ is a contravariant functor from $\mathbf{Top}$
to $\mathbf{Set}$, we use the Grothendieck 
construction for a contravariant functor described
in Sec.\ (2) of Ref.\ (\cite{nLab-GCons})).

However, $E$ has more structure than the above
over any topological space $X.$
We now show that the essential fiber of $E$ over a topological space
$X$ is the groupoid $\pi(\F_X)$ discussed above.
We also prove a result about pullback isomorphisms
induced on $\pi(\F_X)$ by maps $f:Y \to X.$
\begin{theorem}
Let $E:\HC \to \mathbf{Top}$ be the functor
defined in SubSec.\ (\ref{SSecPCLift}), 
Thm.\ (\ref{ThmHCTEqv}) above. 
\leavevmode
\begin{enumerate}
\item The fiber of $E$ over a topological space $X$ is
a set of points naturally bijective with $P(X).$
\item The essential fiber of $E$ over a topological space $X$
is the groupoid $\pi(\F_X).$
\item Let $\pi(\F_X)$ be as in Thm.\ (\ref{ThmPiFXDef})
above. Then, any map $f: Y \to X$ induces a natural
functor $\pi(f): \pi(\F_X) \to \pi(\F_Y).$
Changing $f$ in its homotopy class does not change
$\pi(f).$
\end{enumerate}
\label{ThmPiFXDef}
\end{theorem}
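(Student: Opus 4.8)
The statement has three parts; the plan is to reduce each to results already in hand and to keep the combinatorics of morphisms under tight control.

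\emph{Part (1).} I would read the fiber off from the category-of-elements picture established in Thm.\ (\ref{ThmHCTEqv}): there $E$ is identified with the projection $\rho\colon P\backslash\mathbf{Top}\to\mathbf{Top}$ sitting in the pullback square (\ref{PSlashTopDef}), and for any presheaf $F$ the projection $\int F\to\mathcal C$ has, by the general fact recalled before Def.\ (\ref{FSlashCDef}), the \emph{set} $F(c)$ (with only identities) as its fiber over $c$. So the fiber of $E$ over $X$ is $P(X)$ as a discrete category. Equivalently, and without invoking the equivalence: an object of $\HC$ over $X$ is a homotopy class of a pair over $X$, which by ``homotopy of pairs over a fixed base $=$ isomorphism over that base'' (Ref.\ \cite{BunkeS1}) is exactly an element of $P(X)$; and a morphism over $\id_X$ is, by Thm.\ (\ref{ThmCSimCat})(2), the image under $\pi$ of a pullback map along $\id_X$, hence the identity. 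This gives the natural bijection with $P(X)$.

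\emph{Part (2).} Here I would unwind the definition of the essential fiber of $E$ over $X$ and match it term-by-term with $\pi(\F_X)$, piggy-backing on Thm.\ (\ref{ThmFXDef}) via the factorization $F\simeq E\circ\pi$. An object of the essential fiber is a pair $(c,\phi)$ with $c\in\HC$ and $\phi\colon E(c)\to X$ a homeomorphism; picking a representing pair over $E(c)$ and transporting along $\phi$ exactly as in the proof of Thm.\ (\ref{ThmFXDef}) identifies these with the homotopy classes of pairs over $X$, i.e.\ with the objects of $\pi(\F_X)$. A morphism $(c,\phi)\to(c',\phi')$ is a morphism $g\colon c\to c'$ of $\HC$ with $\phi'\circ E(g)=\phi$; since $\phi,\phi'$ are homeomorphisms this forces $E(g)$ to be a homeomorphism, so by Thm.\ (\ref{ThmCSimCat})(2) $g=\pi(\tilde g)$ for a pullback morphism of pairs $\tilde g$ over a self-homeomorphism of $X$ --- precisely an arrow of $\F_X$ --- and conversely every arrow of $\F_X$ descends to such a $g$. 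Thus the essential fiber \emph{is} $\pi(\F_X)$; it is a groupoid by Lem.\ (\ref{LemPiFXGrpd})(1).

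\emph{Part (3).} For the induced functor the cleanest route is to use that $E$ is a Street (indeed Grothendieck) fibration over $\mathbf{Top}$, as argued in the discussion following Thm.\ (\ref{ThmHCTEqv}) and made precise in Thm.\ (\ref{ThmMGFMM}): the essential fibers of a Street fibration base-change along any morphism of the base. Concretely, given $f\colon Y\to X$ and $(c,\phi)\in\pi(\F_X)$, I would form a Cartesian lift of $\phi^{-1}\circ f$ in $\HC$ and take its source together with $\id_Y$; on objects this amounts to pulling back the representing pair along $f$, so $\pi(f)$ restricts on object sets to $P(f)\colon P(X)\to P(Y)$, which is well defined on isomorphism (equivalently homotopy) classes by Lem.\ (\ref{LemPairIso}). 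On morphisms, the universal property of the chosen Cartesian lifts produces the image arrow in $\pi(\F_Y)$ and makes the assignment functorial, $\pi(f\circ g)\simeq\pi(g)\circ\pi(f)$. Finally, $P(f)$ --- and, by the same token, the Cartesian transport, which depends only on $P$ applied to the homotopy class of the base map --- is unchanged when $f$ is deformed within its homotopy class, by Lem.\ (\ref{LemHotIso}); hence $\pi(f)=\pi(f')$ for $f\simeq f'$.

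\emph{Where the difficulty lies.} Parts (1) and (3) are essentially bookkeeping once Thm.\ (\ref{ThmHCTEqv}) and the fibration property are available; the real work is Part (2), and within it the morphism count. One must be sure that passing from $\C$ to $\HC$ through $\pi$ neither loses nor creates arrows: that the essential-fiber constraint $\phi'\circ E(g)=\phi$ corresponds, after the homeomorphism transport, exactly to ``$g$ is the image of an $\F_X$-arrow read inside $\HC$'', and hence that the automorphism group of a class $[p]$ in $\pi(\F_X)$ is the stabiliser of $p$ under the action of the mapping class group of $X$ on $P(X)$, rather than something larger or smaller. Getting this identification on the nose --- as opposed to merely an equivalence of categories --- is the step I would spend the most care on, leaning on Thm.\ (\ref{ThmCSimCat})(2), Lem.\ (\ref{LemPiFXGrpd})(2) and the fact (established after Def.\ (\ref{DefCCat2})) that isomorphisms of pairs covering homeomorphisms are the generic arrows of $\F_X$.
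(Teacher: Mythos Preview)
Your proposal is correct and, for Parts (1) and (2), tracks the paper's proof almost step-for-step: both unwind the (essential) fiber definition, invoke Thm.\ (\ref{ThmCSimCat}) to replace arbitrary $\HC$-morphisms by images of pullback morphisms, and then read off that the fiber is the discrete set $P(X)$ while the essential fiber is $\pi(\F_X)$.

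The only substantive divergence is in Part (3). The paper works entirely concretely: it lifts an arrow of $\pi(\F_X)$ to a pullback isomorphism in $\C$ over a self-homeomorphism $g:X\to X$, pulls that morphism of pairs back along $f:Y\to X$ inside $\C$, and then re-projects by $\pi$; functoriality and homotopy-invariance come directly from Lem.\ (\ref{LemPairIso}) and Lem.\ (\ref{LemHotIso}). You instead appeal to the Street-fibration structure and Cartesian transport, citing Thm.\ (\ref{ThmMGFMM}). This is cleaner conceptually but is a forward reference in the paper's logical order --- Thm.\ (\ref{ThmMGFMM}) appears in the next subsection --- so if you want to keep the paper's ordering you should either argue the fibration property inline (it only needs Thm.\ (\ref{ThmCSimCat}), not the present theorem, so there is no circularity) or fall back to the paper's hands-on pullback argument. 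Either route yields the same functor $\pi(f)$; the paper's buys self-containment, yours buys a uniform explanation of why essential fibers transport.

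One small caution in your closing remarks: at this point in the paper the arrows of $\pi(\F_X)$ are indexed by self-homeomorphisms of $X$, not yet by isotopy classes; the passage to $MCG(X)$ and the ``stabiliser'' description only emerges later in Thm.\ (\ref{ThmPairMCG}), so that gloss is premature here even if it is where the story ends up.
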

\begin{proof}
\leavevmode
\begin{enumerate}
\item Consider the fiber of the functor
$E:\HC \to \mathbf{Top}.$ By definition
this consists of pairs $\pi(e), \pi(e')$ in $\HC$ with
$e,e'$ pairs in $\C$ such that $E(\pi(e)) \simeq X$
and $E(\pi(e')) \simeq X$ and all morphisms $f: \pi(e) \to \pi(e')$
such that $E(f) \simeq \id_X.$ This implies that
$E(\pi(e)) \simeq E(\pi(e')) \simeq X.$ By 
Thm.\ (\ref{ThmCSimCat}) above,
$f = \pi (\phi)$ where $\phi: e \to e'$ is a morphism in $\C.$
We have that 
$E(f)  \simeq \id_X,$ $E \circ \pi \circ \phi \simeq \id_X$
hence $F(\phi) \simeq \id_X.$ Thus, $\phi$ must cover the
identity map on $X$ and hence must be a Bunke-Schick
isomorphism over $X$ between $e$ and $e'.$ 
Thus the image of
$\phi$ under $\pi$ must be the identity map in $\HC,$
by definition and $\pi(e) \simeq \pi(e')$ in $\HC.$
 
Due to this the fiber of the functor $E$ is the 
subcategory of $\HC$
whose objects are Bunke-Schick
isomorphism classes of pairs over $X$
and which has no morphisms except the identity morphism
on each object. 

Thus, the fiber of the functor $E$ is
a small subcategory of $\HC$ and its underlying
set is naturally bijective with $P(X).$
\item Let $\pi(a)$ be the image of a pair $a = (E_X \to X, H_X) \in \C$ 
in the category $\HC.$ The essential fiber of the functor $E$ above
over a topological space $X$ 
is a category whose objects are ordered pairs $(\pi(a), \phi)$ with
$a = (E_X \to X, H_X)$ a pair in $\C$ and $\phi:Y  = E(\pi(a)) = F(a)  \to X$
a homeomorphism in $\mathbf{Top}.$ Thus the objects in the
essential fiber category over a topological space $X$
may be taken as homotopy equivalence
classes of pairs over $X.$

Let $(\pi(e), \phi)$ be an object in the essential fiber of the functor 
$E$ above over a topological space $X$
with $e = (E_Y \to Y, H_Y)$ a pair in $\C$
and with $\phi: Y \to X$ a homeomorphism. Let $(\pi(e'), \phi')$ be another
object in the essential fiber of the functor $E$
over $X$ with $e' = (E_W \to W, H_W)$ another
pair in $\C$ and with $\phi': W \to X$ a homeomorphism. 
A morphism from $(\pi(e),\phi)$ to $(\pi(e'), \phi')$
is a morphism of homotopy classes of 
pairs $\tilde{f}: \pi(e) \to \pi(e')$ such that
$\phi' \circ E(\tilde{f}) = \phi.$ 
By Thm.\ (\ref{ThmCSimCat}) above,
we may lift $\tilde{f}: \pi(e) \to \pi(e')$ to a morphism of
pairs $f: e \to e'$ in $\C$ which covers a map of spaces
$F(f): F(e) \simeq Y \to F(e') \simeq W$ such that the map of pairs
$f$ in $\C$ is a {\em pullback} of the pair $e'$ along $F(f).$
Note that $F(f) = E(\tilde{f}) = E(\pi(f))$ since $E \circ \pi \simeq F.$

Thus morphisms between two objects
$(\pi(e), \phi)$ and $(\pi(e'), \phi')$ in the 
essential fiber of the functor $E:\HC \to \mathbf{Top}$ 
over a topological space $X$
may be identified with a morphism of pairs in $\C$ covering
a map of topological spaces $F(f):Y \to W$ such that 
$\phi' \circ F(f) = \phi$ with $\phi', \phi$ homeomorphisms in 
$\mathbf{Top}.$ As a result, $F(f)$ must be a homeomorphism in 
$\mathbf{Top}$ and $f$ above must be a morphism of pairs in $\C$ which
is also a pullback in $\C$ of a pair over $X$
along a self-homeomorphism of the base 
(which we had argued above can be taken as $X$). In particular $f$ must
be an isomorphism of pairs.

Thus, any morphism in the essential fiber category of the functor $E$
over a topological space $X$ may be taken as
an image under $\pi: \C \to \HC$ of an isomorphism of pairs in 
$\C$ over $X$ covering pairs in $\HC$ --- this isomorphism of pairs
in $\C$ must be a {\em pullback} of a pair over $X$ 
along a self-homeomorphism of $X.$ 
This is the category $\pi(\F_X)$ by Thm.\ (\ref{ThmCSimCat}) above.

\item As in the previous part, any morphism between two 
elements of $\pi(\F_X)$ may be lifted to a morphism of pairs
in $\C$ which is a {\em pullback} morphism over a nontrivial 
map $g:X \to X.$ Each of these pairs and the pullback morphisms
between them may then be pulled back along $f:Y \to X$ in 
$\C.$ It is clear that the image of the pullbacks under $\pi$ is still
$\pi(\F_X).$ Further, the pullback of the composite of two maps
must be the composite of their pullbacks since each map lifts, the lifts
must be pullbacks and the composite of the lifts maps to the
composite of the maps under $\pi$ by definition.
Thus, any map $f: Y \to X$ induces a natural functor
$\pi(f): \pi(\F_X) \to \pi(\F_Y).$ 

Changing $f$ in its homotopy class induces a Bunke-Schick
isomorphism on pullback of pairs in $\C$ 
by Lemma (\ref{LemHotIso}) above.
Thus, the image of pairs in $\HC$ does not change.  
Also $\pi(f)$ does not change when $f$ changes
in its homotopy class.
\end{enumerate}
\end{proof}

\subsection{Construction of the functors $\N$ and $\M$ \label{SSecNMFE}}
In this subsection we first naturally construct a presheaf
of groupoids
$\N: \mathbf{Top}^{op} \to \mathbf{Gpd}$ 
from the Bunke-Schick functor
$P$ and the functor $F$ defined above --- see SubSec.\ \ref{SSecPCLift}).
The functor $\N$ assigns to each topological space $X$ 
the category $\F_X$ defined in SubSec.\ (\ref{SSecBXFX}) above.
That is, this functor assigns the groupoid $\N(X)$ 
(see Def.\ (2.6) of Ref.\ \cite{nLab-Grpd}) to any
topological space $X.$ 

We then use the construction of the
functor $\N$ above to construct a natural
presheaf of groupoids
$\M: \mathbf{Top}^{op} \to \mathbf{Gpd}$ using
the Bunke-Schick functor $P$ and the functor
$E$ defined above in SubSec.\ (\ref{SSecPCLift}). 
This functor $\M$ assigns to each topological space
$X$ the small groupoid $\pi(\F_X)$ defined in 
Lem.\ (\ref{LemPiFXGrpd}) above. Later in this subsection
we construct a space $G(X)$ from $\M(X)$ by passing
to the classifying space of $\M(X).$

We argue about the relation of the above with string field
theory in SubSec.\ (\ref{SSecSFT-TTD}) below.
In the later sections of this paper (see for example
SubSec.\ (\ref{SSecHotG}) or Sec.\ (\ref{SecP_i}) below)
we will use the groupoid $\M(X)$ and the space
$G(X)$ to define a higher Topological T-duality functor.

In SubSec.\ (\ref{SSecDefCX}) we had defined the subcategory
$\C_X$ of $\C$ and discussed some of its properties. In this
subsection we will construct a groupoid
from $\C_X$ by ignoring all maps in $\C_X$ 
which are not isomorphisms. This groupoid is the category
$\F_X$ defined in SubSec.\ (\ref{SSecBXFX}) above.

It is not clear why this is natural, since $\C_X$ has
many more maps in it than just isomorphisms of pairs. 
To make this construction plausible we argue as follows:
We had shown in Sec.\ (\ref{SecIntro}) just after 
Lemma (\ref{LemHomTf}) that pullback of pairs in 
$\C_X$ under an arbitrary map $f:Y \to X$ could be calculated 
using pullback along a cofibration map $h: Y \to M_f$ 
(where $M_f$ is the mapping cone of $f$ constructed
from $X, Y$ and $f$ --- see Ref. \cite{May}).

We showed in that Section that the behaviour of an arbitrary pair over 
$X$ up to isomorphism of pairs under an arbitrary map $f:Y \to X$ 
was determined by pulling back a pair on $M_f$ induced by the given
pair on $X$ along a natural map $h:Y \to M_f.$ 

Thus, given $f:Y \to X$ the isomorphisms in the category $\C_X$ 
together with the construction of the mapping cone of $f$ determine the
pullback of pairs up to isomorphism.
This will apply if we replace $Y$ by $X$ and as a result we can
study the behaviour of morphisms of pairs over $X$ covering
an arbitrary map $f:X \to X$ which is not the identity or a homeomorphism
(we had studied these last two cases in SubSec.\ (\ref{SSecMorC}) above).

Motivated by this, for any topological space $X$ 
we first construct a groupoid (see Ref.\ \cite{nLab-Grpd})
from $\C_X$ (this groupoid is termed the core groupoid
of $\C_X$ and there is a well defined functor $\Core$
which yields a groupoid denoted $\Core(\C_X)$ when
it acts on $\C_X$ --- we will discuss this in detail below) by 
ignoring all maps $f:X \to X$ which are not homeomorphisms. 
Pullback of pairs along an arbitrary map $f:Y \to X$
induces well-defined morphisms on this groupoid since isomorphic
pairs pullback to isomorphic pairs by Lemma (\ref{LemPairIso}).

It is clear that this will give us the groupoid
$\F_X$ of $\C_X$ which we had defined above in 
SubSec.\ (\ref{SSecBXFX}). If we repeat the construction of
the contravariant functor $\G$ in Thm.\ (\ref{ThmGFunc}) above
but with $\F_X$ instead of $\C_X$ we would obtain a contravariant
functor $\N:\mathbf{Top}^{op} \to \mathbf{Gpd}.$ This functor
would send each topological space $X$ to the groupoid $\F_X$
and send a map between spaces $f:Y \to X$ to a groupoid 
morphism $\F_X \to \F_Y.$

However, we prefer another way to construct the functor $\N.$
We now construct the above contravariant functor $\N$ naturally 
from $\C$ and the functor $F$ in SubSec.\ (\ref{SSecPCLift}) using
the idea of a Grothendieck Fibration (see nCatLab page 
Ref.\ \cite{nLab-GFib}):
\begin{theorem}
Let $F: \C \to \mathbf{Top}$ be the functor defined in
(\ref{SSecPCLift}) above. Let $\M:\mathbf{Top}^{op} \to \mathbf{Gpd}$ 
be the functor defined in the previous paragraph.
\begin{enumerate}
\item The functor $F$ is a Grothendieck Fibration in which the
cartesian arrows are pullback maps in $\C$ and hence a Street Fibration. 
\item The functor $F$ is categorically equivalent to a functor 
$\NN:\mathbf{Top}^{op} \to \mathbf{Gpd}$
whose value at a topological space $X$ is the essential
fiber of $F$ over that topological space.
\item The functor $\NN:\mathbf{Top}^{op} \to \mathbf{Gpd}$ is
exactly the functor $\N: \mathbf{Top}^{op} \to \mathbf{Gpd}$
described in the paragraph before this theorem.
\end{enumerate}
\label{ThmNGFNN}
\end{theorem}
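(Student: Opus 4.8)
\textbf{Proof proposal for Theorem \ref{ThmNGFNN}.}

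The plan is to verify the three claims in order, leaning on the structural facts about morphisms of pairs already established. For part (1), I would first recall from the discussion after Def.~(\ref{DefCCat2}) that every morphism of pairs factors uniquely as a pullback morphism followed by an isomorphism of pairs over a fixed base. This is exactly the shape one needs to exhibit cartesian arrows: given a pair $(E\to X,H)$ over $X$ and a map $f:Y\to X$ in $\mathbf{Top}$, the pullback square $(f^\ast E\to Y, f^\ast H)\to (E\to X,H)$ should be shown to be a cartesian lift of $f$. Concretely, given any morphism of pairs $g:(E'\to Z,H')\to (E\to X,H)$ covering a map $Z\to X$ that factors through $f$, the universal property of the pullback bundle (used already in the paragraph after Def.~(\ref{DefCCat2}) to produce the factoring map $\nu$) gives the required unique factorization of $g$ through the chosen pullback square. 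Since every map in $\mathbf{Top}$ admits such a lift and composites of cartesian arrows are cartesian, $F$ is a Grothendieck fibration; being a Grothendieck fibration it is in particular a Street fibration (every Grothendieck fibration is a Street fibration; cf.\ Ref.~\cite{nLab-StrFib}). I would also note that the fibers of $F$ strictly are the categories $\C_X$, and that the core/essential-fiber discussion of SubSec.~(\ref{SSecBXFX}) is what will make the passage to groupoids in part (2) work.

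For part (2), I would invoke the Grothendieck construction in its fibration-to-pseudofunctor direction: a Grothendieck fibration $F:\C\to\mathbf{Top}$ is classified by (equivalently, is categorically equivalent to the total category of) a pseudofunctor $\mathbf{Top}^{op}\to\mathtt{Cat}$ sending $X$ to the fiber over $X$. The content specific to this situation is that we do not want the full fiber $\C_X$ but its essential fiber, which by Thm.~(\ref{ThmFXDef}) is the groupoid $\F_X$; the equivalence between the essential fiber and a chosen strict fiber is standard (the essential fiber is equivalent to the strict fiber when $F$ is a fibration, because every object in the essential fiber is connected by a cartesian iso to an object of the strict fiber). Thus the classifying pseudofunctor can be taken to land in $\mathbf{Gpd}$, sending $X\mapsto\F_X$ and $f:Y\to X\mapsto \F_X\to\F_Y$ via the cartesian pullback functor of part (1); this is the functor $\NN$, and the Grothendieck construction applied to $\NN$ recovers $F$ up to equivalence. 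Functoriality (pseudo or strict after a standard rectification) follows exactly as in Thm.~(\ref{ThmGFunc}), using that pullback of composable morphisms is composable and $G(f\circ g)\simeq G(g)\circ G(f)$.

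For part (3), I would simply match definitions: the functor $\N$ described in the paragraph before the theorem is, by construction, the assignment $X\mapsto\Core(\C_X)=\F_X$ with $f:Y\to X$ acting by pullback, which is visibly the same assignment on objects as $\NN$; on morphisms both send $f$ to the pullback functor $\F_X\to\F_Y$, which agrees because in both constructions the pullback functor is the one induced by cartesian lifts, well-defined on isomorphism classes by Lem.~(\ref{LemPairIso}) and independent of the homotopy class of $f$ by Lem.~(\ref{LemHotIso}). Hence $\NN\simeq\N$ as functors $\mathbf{Top}^{op}\to\mathbf{Gpd}$.

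The main obstacle I anticipate is part (2): one must be careful about the difference between the strict fiber $\C_X$ (which is not a groupoid) and the essential fiber $\F_X$ (which is), and about whether the Grothendieck construction is being used in its strict or pseudo form. The cleanest route is to first observe that the cartesian arrows of $F$ are closed under composition and contain all isomorphisms, so the sub-fibration of $F$ on cartesian arrows is a fibration in groupoids whose (essential) fibers are the $\F_X$; then the equivalence of $F$'s cartesian part with the Grothendieck construction of $X\mapsto\F_X$ is the statement one wants, and no strictification headaches beyond the usual are needed. Everything else — checking cartesianness, checking functoriality of pullback — is routine given the factorization results already proved in Sec.~(\ref{SecIntro}).
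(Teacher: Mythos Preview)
Your approach to parts (1) and (3) matches the paper's: both verify cartesianness of pullback arrows by the unique factorization of morphisms of pairs established after Def.~(\ref{DefCCat2}), and both identify $\NN$ with $\N$ by matching the assignment $X\mapsto \F_X$ and $f\mapsto$ (pullback functor).

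Part (2) is where you diverge from the paper and where there is a genuine muddle. You invoke the Grothendieck-fibration form of the construction, which assigns to $X$ the \emph{strict} fiber of $F$; you then call this fiber $\C_X$ and say you want to replace it by the essential fiber $\F_X$, invoking the standard equivalence ``essential fiber $\simeq$ strict fiber for fibrations.'' But the strict fiber of $F$ over $X$ is not $\C_X$ --- it is $\B_X$ (pairs over $X$ with morphisms covering $\id_X$), since $\C_X$ by definition allows morphisms over arbitrary self-maps of $X$. So your chain of identifications does not land on $\F_X$ the way you want. The paper sidesteps this by working with the \emph{Street}-fibration version of the Grothendieck construction and citing Item (4) of Ref.~\cite{nLab-StrFib}, which states directly that for a Street fibration the associated pseudofunctor takes as value the \emph{essential} fiber; together with Thm.~(\ref{ThmFXDef}) this gives $X\mapsto\F_X$ without passing through strict fibers at all. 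Your alternative route in the last paragraph (restricting to cartesian arrows to get a fibration in groupoids) is closer in spirit, but you should note that here the strict fibers are already groupoids ($\B_X$), so every arrow in $\C$ is cartesian and the ``sub-fibration on cartesian arrows'' is all of $F$ --- which brings you back to needing the Street-fibration citation the paper uses.
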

\begin{proof}
\leavevmode
\begin{enumerate}
\item We argue that $F: \C \to \mathbf{Top}$ is a Grothendieck
Fibration (see Ref.\ \cite{BRichter} Def.\ (11.6.8) or see nCatLab page
on Grothendieck Fibrations Ref.\ \cite{nLab-GFib}). The cartesian 
arrows (see also Ref.\ \cite{LR} Def.\ (2.21) and Thm.\ (2.2.3))
in this Grothendieck Fibration are the pullback maps in $\C$ 
covering an arbitrary map $f:Y \to X.$ 

Suppose $f: u \to v$ was a pullback map in $\C$ and 
$ g: x \to v$ was an arbitrary morphism of pairs in $\C.$
Suppose we had any map $\hat{h}: F(u) \to F(x)$ such
that $F(g) \circ \hat{h} = F(f).$ 

Then, we could factor $g$ as in the discussion after Def.\ (\ref{DefCCat2})
into a composite of an isomorphism of pairs over $F(x)$ 
namely $\iota:x \to x$ and 
a pullback map $p: x \to v,$ that is $g = p \circ \iota.$ By the
discussion after Def.\ (\ref{DefCCat2}) the arrow $\iota$ is obtained
from the universal property of pullback and hence must be unique.

We define $\tilde{h}: u \to x$ by pulling back the pair $p^{\ast}(v)$
along $\hat{h}: F(u) \to F(x).$ This is clearly unique.
We define the lift $w$ of $\hat{h}$ to $\C$ by picking
$h = (\iota)^{-1} \circ \tilde{h}.$ By the discussion above $h$
is unique.
Then, 
$$
g \circ h = (p \circ \iota) \circ ( (\iota)^{-1} \circ \tilde{h}) = p \circ \tilde{h}.
$$
However, $p \circ \tilde{h}$ is a composition of pullback squares
along the composite map $F(g) \circ \hat{h}: F(u) \to F(v)$ and hence
must equal the pullback along $F(f)$ since we have
$F(g) \circ \hat{h} = F(f).$ 

Thus $F: \C \to \mathbf{Top}$ is a Grothendieck Fibration.
Hence, $F$ is also a Street Fibration since every
Grothendieck Fibration is automatically a Street Fibration
(see nCatLab page on Street Fibrations 
Ref.\ \cite{nLab-StrFib} Items (1) and (4)).
\item Since $F$ is a Street Fibration, the value of the functor 
$\mathbf{Top}^{op} \to \mathtt{Cat}$ obtained
from $F$ using the Grothendieck Construction when
evaluated at a topological space $X$ 
is the essential fiber of $F$ at $X$ i.e., $\F_X$ (see Item (4) of
nCatLab page on Street Fibrations Ref.\ \cite{nLab-StrFib}). 
Hence, the Grothendieck Construction gives us a presheaf on 
$\mathbf{Top},$ i.e. a functor $(\mathbf{Top})^{op} \to \F$ sending 
$X \mapsto \F_X.$ By Item (4) of the nCatLab page on Street Fibrations
the functor sends $f:Y \to X$ to a pullback functor $\F_X \to \F_Y.$
\item We use the Grothendieck Construction for Street Fibrations
(see nCatLab page on Grothendieck Construction Item (1) 
under discussion of categorical equivalence between fibrations on $\B$ and
presheaves on $\B^{op}$ in Ref.\ \cite{nLab-GCons} and also the
nCatLab page on Street Fibrations Ref.\ \cite{nLab-StrFib} Item (4)) and 
the equivalence between the Grothendieck Fibration
$F: \C \to \mathbf{Top}$ and a functor 
$\mathbf{Top}^{op} \to \mathtt{Cat}$ where $\mathtt{Cat}$ is the
category of small categories with functors between them as arrows.
This functor, by the argument in the previous item, must assign
$X \to \F_X$ and send $f:Y \to X$ to the pullback functor
$\F_X \to \F_Y.$ As a result it must be the functor 
$\M:\mathbf{Top}^{op} \to \mathbf{Gpd}$ described in Part (2) above.
\end{enumerate}
\end{proof}

In Thm.\ (\ref{ThmNGFNN}) we had defined the functor 
$\N$ from the functor $F$ above (see SubSec.\ (\ref{SSecPCLift}) 
by categorical equivalence. In Thm.\ (\ref{ThmCSimCat}) above
we had constructed a functor $E: \HC \to \mathbf{Top}$ from 
$F: \C \to \mathbf{Top}$ which satisfied 
$E \circ \pi \simeq F.$ We now 
construct a functor $\M: \mathbf{Top}^{op} \to \mathbf{Gpd}$
from $E$ using categorical equivalence. 

We would expect that
$\M(X) \simeq \pi(\F_X)$ where $\pi: \C \to \HC$ is the
quotient functor from the category of pairs to the category
of pairs up to homotopy. By Thm.\ (\ref{ThmPiFXDef}) this is
the essential fiber of the functor $E:\HC \to \mathbf{Top}$
defined in SubSec.\ (\ref{SSecPCLift}) above.
\begin{theorem}
Let $E: \HC \to \mathbf{Top}$ be the functor defined after Thm.\ 
(\ref{ThmCSimCat}) above and let 
$\M: \mathbf{Top}^{op} \to \mathbf{Gpd}$ be the functor defined
in the previous paragraph.
\leavevmode
\begin{enumerate}
\item The functor $E$ is a Grothendieck Fibration in which every 
arrow is cartesian. Hence it is a Street Fibration.
\item The functor $E$ is categorically equivalent to a functor
$\MM: \mathbf{Top}^{op} \to \mathbf{Gpd}$ whose value at
a topological space $X$ is the essential fiber of $E$ over that topological
space.
\item The functor $\MM: \mathbf{Top}^{op} \to \mathbf{Gpd}$
is exactly the functor $\M: \mathbf{Top}^{op} \to \mathbf{Gpd}$
described in the paragraph before this theorem.
\label{ThmMGFMM}
\end{enumerate}
\end{theorem}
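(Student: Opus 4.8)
The plan is to mirror, almost verbatim, the three-part argument given for $F$ in Thm.\ (\ref{ThmNGFNN}), transporting it across the quotient functor $\pi:\C\to\HC$ of Thm.\ (\ref{ThmCSimCat}) and feeding in the fiber computations of Thm.\ (\ref{ThmPiFXDef}).

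\emph{Part (1).} First I would show that $E:\HC\to\mathbf{Top}$ is a Grothendieck Fibration whose cartesian arrows are the images under $\pi$ of the pullback morphisms of $\C$, and that in fact \emph{every} arrow of $\HC$ is cartesian. The inputs are that $F=E\circ\pi$ is already a Grothendieck Fibration with the pullback morphisms as its cartesian arrows (Thm.\ (\ref{ThmNGFNN}), part (1)), that $\pi$ is full, and --- crucially --- that by Thm.\ (\ref{ThmCSimCat}), part (2), every morphism of $\HC$ is of the form $\pi(p)$ for a pullback morphism $p$ of $\C$, while by Thm.\ (\ref{ThmCSimCat}), part (3), such a representative can always be arranged after replacing the two objects by homotopy-equivalent pairs. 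So, given a morphism $\eta$ of $\HC$, a competing morphism $\psi$ with the same codomain, and a map in $\mathbf{Top}$ compatible with $E(\eta)$ and $E(\psi)$, I would (i) lift $\eta$ and $\psi$ to pullback morphisms of $\C$, (ii) invoke the cartesian lifting property of $F$ to produce the required lift in $\C$, and (iii) push it down by $\pi$; uniqueness in $\HC$ holds because the only ambiguity in the $\C$-lift is by an isomorphism of pairs over a fixed base, and those have been collapsed to identities in $\HC$. (This is also why the strict fiber of $E$ over $X$ is discrete, compatibly with Thm.\ (\ref{ThmPiFXDef}), part (1).) Since this covers every morphism of $\HC$, every arrow is cartesian; in particular $E$ is a Grothendieck Fibration, hence a Street Fibration by the same nLab fact cited for $F$.

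\emph{Parts (2) and (3).} With $E$ a Street Fibration, I would apply the Grothendieck Construction for Street Fibrations --- the same reference used in Thm.\ (\ref{ThmNGFNN}), part (3) --- to obtain a functor $\MM:\mathbf{Top}^{op}\to\mathbf{Gpd}$ categorically equivalent to $E$ and sending a space $X$ to the essential fiber of $E$ over $X$. By Thm.\ (\ref{ThmPiFXDef}), part (2), that essential fiber is the small groupoid $\pi(\F_X)$ --- a groupoid by Lem.\ (\ref{LemPiFXGrpd}) --- and by Thm.\ (\ref{ThmPiFXDef}), part (3), the induced functor sends $f:Y\to X$ to $\pi(f):\pi(\F_X)\to\pi(\F_Y)$, which depends only on the homotopy class of $f$; this makes $\MM$ a well-defined presheaf of groupoids and proves (2). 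For (3), the functor $\M$ of the paragraph preceding the theorem was \emph{defined} to be the functor produced from $E$ by exactly this categorical equivalence, and Thm.\ (\ref{ThmPiFXDef}) identifies both its object value ($X\mapsto\pi(\F_X)$) and its arrow value ($f\mapsto\pi(f)$) with those of $\MM$; hence $\MM$ and $\M$ coincide.

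\emph{The main obstacle.} I expect essentially all the real work to be in Part (1): transporting the cartesian lifting property across $\pi$. The universal property of a pullback square is strict in $\C$, but the objects of $\HC$ are homotopy classes of pairs, so a lift chosen in $\C$ is well-defined only up to an isomorphism of pairs over a fixed base; one must check that this ambiguity exactly disappears in $\HC$. This is precisely the content of the normalization built into Thm.\ (\ref{ThmCSimCat}) (every morphism of $\HC$ has a pullback representative, and Bunke-Schick isomorphisms over a fixed base go to identities), together with Lem.\ (\ref{LemPairIso}) and Lem.\ (\ref{LemHotIso}), which guarantee that shifting pairs within their homotopy classes --- as one must when selecting pullback representatives --- changes neither pullbacks nor their isomorphism classes. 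Everything else (functoriality of $X\mapsto\pi(\F_X)$, the relation $\pi(f\circ g)\simeq\pi(g)\circ\pi(f)$, discreteness of the strict fibers) is routine and already packaged in Thms.\ (\ref{ThmPiFXDef}) and (\ref{ThmNGFNN}).
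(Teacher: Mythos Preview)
Your proposal is correct and follows essentially the same approach as the paper: for Part (1) the paper also lifts the competing morphisms $f,g$ in $\HC$ to pullback representatives in $\C$ via Thm.\ (\ref{ThmCSimCat}), constructs the cartesian lift as a pullback in $\C$, and observes that the ambiguity vanishes in $\HC$; for Parts (2) and (3) the paper explicitly defers to the corresponding parts of Thm.\ (\ref{ThmNGFNN}), just as you do. The only cosmetic difference is that you frame Part (1) as transporting the already-established cartesian property of $F$ across $\pi$, whereas the paper re-derives the lift directly in $\HC$ without naming $F$; these are the same computation.
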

\begin{proof}
\leavevmode
\begin{enumerate}
\item We argue that any arrow in $\HC$ is a cartesian arrow
for $\E.$ 
Suppose $u,v, w \in \C$ and we were given an arrow $f:\pi(u) \to \pi(v)$
in $\HC$ and an arbitrary arrow $g:\pi(w) \to \pi(v).$  Suppose we
had a map $\tilde{h}:E(u) \to E(w)$ such that $E(g) \circ \tilde{h} = E(f).$

Note that using Thm.\ (\ref{ThmCSimCat}) above we can assume
that the lifts of $f,g$ to $\C$ induce pullback maps
$f': u \to v$ and $g': w \to v$ perhaps after changing $u,v,w$ in
their Bunke-Schick isomorphism classes.

Here $\tilde{h}, E(u), E(w)$ are maps of topological spaces.
Also, the pullback of pairs $\tilde{h}^{\ast} (E(g))^{\ast} v$ must be 
Bunke-Schick isomorphic to the pair $u$ since both are pairs over $E(u)$
and $E(g) \circ \tilde{h} = E(f)$ and the lift $E(g)^{\ast}$ of $E(g)$ to $\C$
may be taken to be $g$ itself and similarly for $f.$

Thus, the pullback map $h$ in $\C$
which consists of pullback along $\tilde{h}$ in $\C$  is such
that $\pi(h)  = \tilde{h}.$ 

This is the required cartesian lift of $\tilde{h}.$

Thus $E: \HC \to \mathbf{Top}$ is a Grothendieck Fibration. Hence,
$E$ is also a Street Fibration.
\item This follows from the previous part and the 
corresponding parts of the proof of 
Thm.\ (\ref{ThmNGFNN}) with $\NN$ there replaced by
$\MM$ above.
\item This follows from the previous two parts and the
corresponding parts of the proof of 
Thm.\ (\ref{ThmNGFNN}) with $\N$ there replaced by
$\M$ above.
\end{enumerate}
\end{proof}

\subsection{Construction of $G(X)$ from $\C_X$ \label{SSecGXCX}}
Recall that we defined a natural functor
$E: \HC \to \mathbf{Top}$ in Thm.\ (\ref{ThmHCTEqv}) above.
We proved in Thm.\ (\ref{ThmMGFMM}) above
that the functor $E$ endowed the category $\HC$ with the
structure of a Grothendieck fibration over $\mathbf{Top}.$ 
As a result, we argued in that subsection
that $E$ was also a Street fibration over $\mathbf{Top}.$
We also proved that the fiber of this fibration was the
small groupoid $\pi(\F_X).$ In SubSec.\ (\ref{SSecNMFE}) 
we constructed a functor
$\M: \mathbf{Top}^{op} \to \mathbf{Gpd}$ 
categorically equivalent (see Ref.\ \cite{nLab-GFib}) to 
the functor $E$ under categorical equivalence for Street fibrations.
We proved in that subsection that
the functor $\M$ assigned to any topological space $X$
the small groupoid $\pi(\F_X).$

In SubSec.\ (\ref{SSecPCLift}) after 
Thm.\ (\ref{ThmHCTEqv})
we argued that objects of the category $\HC$
should be viewed as generalizations of topological
spaces. In particular the value of the functor $\M$ on
a topological space $X,$ namely, the small groupoid 
$\M(X) \simeq \pi(\F_X),$ should be viewed 
as a `generalized space' corresponding to $X$
(see discussion after Thm.\ (\ref{ThmHCTEqv}) above ---
recall that $\pi(\F_X)$ is a subcategory of $\HC$
by Thm.\ (\ref{ThmMGFMM}) and 
Thm.\ (\ref{LemPiFXGrpd}) above) whose objects
are all pairs over $X.$ Thus using the groupoid $\M(X)$
instead of the space $X$ corresponds to studying $X$
using all pairs over $X.$

We show in Sec.\ (\ref{SecP_i}) that, for any topological space $X,$ 
the groupoid $\M(X)$ has two main types of invariants, the fundamental
group of its object space $(\M(X))_0$ and the automorphism
group of any object in $(\M(X))_0.$ We
discuss the fundamental group of $(\M(X))_0$
later in this section and show that Bunke-Schick's
functor $P(X)$ maps onto this fundamental group.
In Sec.\ (\ref{SecP_i}) we define a higher version of Bunke-Schick's
functor $P$ by calculating the automorphism group of an 
object in $(\M(X))_0$ using the fundamental
groupoid of the classifying space $B\M(X)$
of the small category $\M(X)$ (see 
Ref.\ \cite{BRichter}).  

More precisely, we construct
a new functor which we term
$G: \mathbf{Top}^{op} \to \mathbf{Top}$ which
assigns a topological space $G(X) \simeq B\M(X)$ to any
topological space $X.$ This space $G(X)$ is the
geometric realization of the simplicial space associated
to the groupoid $\M(X).$ We identify $G(X)$
with the moduli space of SFT vacua over 
$X$ whose points are equivalence classes of
SFT backgrounds which
are pairs over $X$ quotiented out by SFT gauge transformations
which are induced from the bundle gauge
transformations of the underlying bundle of
each pair. We see in Sec.\ (\ref{SecP_i})
that we may define a higher topological
T-duality functor from the homotopy groups of 
$G(X)$ for any topological space $X.$

We begin by defining the $\Core$ functor for any
category $\A.$ We then discuss the construction of
the functor $G$ from the functor
$\M$ above. We show that $G(X)$ is the
classifying space of the category 
$\pi(\F_X) \simeq \pi(\Core(\C_X)).$
At the end of this subsection we
define a natural choice of basepoint for each
connected component of $G(X)$ and also prove that
$G(X)$ has at most countably many components.
In Sec.\ (\ref{SecP_i}) below
we show that we may naturally define higher 
Topological T-duality functors using the functor $G.$

For any category $\A$ there is a functor 
$\Core: \A \to \Grpd$
which sends any category to the category with the same
objects but with only iso arrows between them (recall that
a category with only iso arrows between its objects is
termed a groupoid --- see Ref.\ \cite{BRichter}). 
This category is termed the `core groupoid' of the 
category $\A$ (see Ref.\ \cite{nLab-CoreGpd}). 

We had already defined the functor $\M$ in
SubSec.\ (\ref{SSecNMFE}) above. 
Recall that $\M: \mathbf{Top}^{op} \to \mathtt{Cat}$
is a functor which takes values in the category of
all small categories. We may compose this with
the classifying space functor $B$ of Ref.\ \cite{BRichter}
$B: \mathtt{Cat} \to \mathbf{Top}$ to obtain a new functor
denoted $G \simeq B \circ \M.$ 

It is clear that for any topological space $X,$
the value $G(X)$ is actually the classifying space 
associated to the small category $\pi (\Core(\C_X)) 
\simeq \pi(\F_X)$
that is, for a given topological space $X,$ we have that
$B(\pi \circ\Core(\C_X)) \simeq B(\pi (\F_X))$ 
(where the nerve or simplicial realization of a category
and the classifying space of a category are defined in 
Sec.\ (2) of Ref.\ \cite{Segal1}, 
see also Ref.\ \cite{BRichter}).

Given any map $f:Y \to X$ the corresponding 
morphism $G(f)$ in $\mathbf{Top}$ is the map 
$G(f) \simeq B\M(f):B\M(X) \to B\M(Y)$ where 
$B\M(f)$ is the map on classifying spaces induced
by the map $\M(f): \M(X) \to \M(Y)$ 
associated to the functor $\M$ above. This map
exists since $\M(X)$ is always
a small category for any topological space $X$
(see Ref.\ \cite{BRichter} for details
of the classifying space of a small category).
This map is simplicial since, by definition,
$B\M(f)$ is $|\Simp \circ \Core \circ \G(f)|$ 
where $|.|$ is the geometric realization functor of
Refs.\ \cite{Segal1, BRichter}, $\Simp$ is the simpicial 
realization functor of Refs.\ \cite{Segal1, BRichter}
and hence the composite map 
$\Simp \circ \Core \circ \G(f)$ is simplicial by 
definition.

It is also clear from the definition 
of the nerve that $B\Core(\C_X) \simeq B \F_X$ maps to 
$B \C_X$ since $\C_X$ has arrows in it which are not 
iso-arrows and hence $B\C_X$ has more relations and 
hence more cells than $B \Core(\C_X).$

For any groupoid $\G$ with set of objects $Y$ we declare
$x,y \in Y$ to be equivalent if there is an arrow in $\G$
from $x$ to $y.$ Due to the definition of a groupoid, 
this is an equivalence relation on the set $Y,$ and the
set of equivalence classes are termed the set of 
connnected components of the groupoid denoted $\pi_0(\G)$
(see Ref.\ \cite{nLab-Grpd}, Def.\ (2.15) or 
Ref.\ \cite{BRichter}). 

Using the construction of the groupoid $\F_X$ above, 
we had argued at the end of 
SubSubSec.\ (\ref{SSecMorPXY}) that 
there might be pullback squares in $\F_X$ in 
which correspond to maps between the underlying
bundles which change the characteristic class of the
bundles. If there are such maps, then 
the set of connected components 
of the classifying space of the
groupoid $\pi(\F_X)$ i.e., the set $\pi_0(B\M(X))$ 
(see Ref.\ \cite{nLab-Grpd}, Def.\ (2.15)) 
might have $0-$simplices corresponding to 
pairs of differing characteristic class.

For every $CW-$complex $X,$ we now prove that 
$\pi_0(B\M(X))$ is at most countable by using
the space $G(X)$ defined above. To do this we construct
a subset $A(X) \subset G(X)$ using the observation in the
previous paragraph and estimate the cardinality of $A(X).$

For any $X$ define the basepoint $p_X$ of $G(X)$ as
the image under $\pi$ of the trivial pair 
$(X \times S^1 \to X, 0)$ in $\C$
corresponding to the trivial bundle over $X$ with trivial 
$H-$flux. In addition for any fixed $CW-$complex $X$ 
and for each element $\alpha \in H^2(X,\KZ)$ we pick
a fixed principal circle bundle $E_{\alpha} \to X,$ a 
$H-$flux $[H_{\alpha}] \in H^3(E_{\alpha}, \KZ)$ and an 
associated pair $(E_{\alpha}, [H_{\alpha}]).$ 

We consider the orbit of a pair $(E_{\alpha}, [H_{\alpha}])$
above under all possible isomorphisms of pairs. 
This is the same as considering
the orbit of a fixed pair under the action of all morphisms of
pairs induced by composing pullback maps of pairs along an arbitrary
homeomorphism $h_{\alpha}: X \to X$ and all Bunke-Schick isomorphisms
of pairs over $X.$ (These were defined in Def.\ (\ref{DefCCat1}) above --- 
also see Lem.\ (\ref{LemPairHom}) and Lem.\ (\ref{LemHotIso}).) 

All arrows in $\F_X$ are isomorphisms of pairs 
over $X$ covering homeomorphisms $f:X \to X,$ and 
every isomorphism of pairs
may be uniquely factored as a composition
of the two types of morphism of pairs as
discussed after Def.\ \ref{DefCCat2}.
Hence the above orbit exhausts one
connected component of the groupoid $\F_X$ 
(in the sense of Ref.\ \cite{nLab-Grpd}, 
Def.\ (2.15)). Thus, it exhausts one connected component
of the groupoid $\pi(\F_X)$ (recall this is the
subcategory of $\HC$ defined in Thm.\ (\ref{ThmPiFXDef}) above).

From the above argument there could be morphisms in 
$\F_X$ which could join $(E_{\alpha}, [H_{\alpha}])$ 
and $(E_{\beta}, [H_{\beta}])$
with $\alpha \neq \beta.$ As a result the orbits in 
$\F_X$ might have pairs whose underlying bundles
are of {\em different} characteristic class. Hence, the
same would be true of orbits in $\pi(\F_X)$ and hence
in $G(X).$

We define a set $A(X) \subseteq G(X)$ by picking one element
from each orbit above. 
Clearly $|A(X)| \simeq |\pi_0(B\M(X))| \simeq |\pi_0(G(X))|.$
We now show that the set $\pi_0(G(X)) \simeq \pi_0(B\M(X))$ is at 
most countable using the definition of $A(X).$
\begin{lemma}
For any $CW-$complex $X,$
$A(X)$ is at most countable and $G(X)$ has at most
countably many components. Hence the set
$\pi_0(\pi(\F_X)) \simeq \pi_0(G(X))$ is
at most countable.
\label{LemPi0AG}
\end{lemma}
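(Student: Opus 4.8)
The plan is to bound the cardinality of $A(X)$ by bounding the cardinality of the set of all pairs over $X$ up to Bunke-Schick isomorphism, which is exactly $P(X)$, and then to observe that $\pi_0(G(X))$ is a quotient of a set which injects into $P(X)$. First I would recall that $A(X)$ was constructed by choosing, for each $\alpha \in H^2(X,\KZ)$, a fixed pair $(E_\alpha, [H_\alpha])$ and then picking one element from each orbit of the $\F_X$-action; thus $|A(X)| \simeq |\pi_0(B\M(X))|$. Since each orbit in $\pi(\F_X)$ is in particular a union of Bunke-Schick isomorphism classes of pairs over $X$, the set $A(X)$ injects into the set of connected components of $\B_X$, hence $|A(X)| \le |P(X)|$ (in fact $A(X)$ is a quotient of $P(X)$ by the relation of being joined by an arrow of $\pi(\F_X)$). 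So it suffices to show $P(X)$ is at most countable when $X$ is a $CW$-complex.

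Next I would prove $|P(X)| \le \aleph_0$ directly from the structure of pairs. A pair over $X$ consists of a principal circle bundle $E \to X$ together with $H \in H^3(E,\KZ)$. Up to isomorphism the bundle is classified by its characteristic class in $H^2(X,\KZ)$; since $X$ is a $CW$-complex, $H^2(X,\KZ)$ is a countable abelian group (it is a subquotient of the countably-generated cellular cochain groups), so there are at most countably many choices of $E$ up to isomorphism. For each fixed $E$, the total space $E$ is again (homotopy equivalent to) a $CW$-complex — it is the total space of a fibration over a $CW$-complex with fiber $S^1$ — so $H^3(E,\KZ)$ is also countable, giving at most countably many choices of $[H]$. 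Therefore the set of isomorphism classes of pairs over $X$, which is $P(X)$ by Ref.\ \cite{BunkeS1}, is a countable union of countable sets, hence countable.

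Finally I would assemble the pieces: $\pi_0(G(X)) \simeq \pi_0(B\M(X)) \simeq \pi_0(\pi(\F_X))$ by the identifications established earlier in SubSec.\ (\ref{SSecGXCX}), and this set is in bijection with $A(X)$; since $A(X)$ is a quotient of (a subset of) $P(X)$, which is countable, $A(X)$ is at most countable, so $G(X)$ has at most countably many connected components. I expect the main obstacle to be the bookkeeping step showing $H^3(E,\KZ)$ is countable: one must argue that the total space $E$ of the circle bundle has the homotopy type of a $CW$-complex (e.g.\ by Milnor's theorem on the homotopy type of fibrations over $CW$-bases, or by choosing a $CW$-structure adapted to the bundle), since a priori $E$ is only guaranteed to be a topological space. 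Once that point is granted, the rest is routine cardinal arithmetic.
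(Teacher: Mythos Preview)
Your proposal is correct and follows essentially the same route as the paper: both arguments bound $|A(X)|$ by bounding the number of isomorphism classes of pairs over $X$, counting first the bundle via $H^2(X,\KZ)$ and then the flux via $H^3(E,\KZ)$, and using that these cohomology groups are countable for a $CW$-complex. You are in fact more careful than the paper on two points: you explicitly flag that the total space $E$ must have the homotopy type of a $CW$-complex for $H^3(E,\KZ)$ to be countable, and you spell out the relation $|A(X)| \le |P(X)|$ as a quotient map, whereas the paper's proof writes the bound as a product and appeals directly to $X$ having countably many cells.
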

\begin{proof}
Note that there are only at most countably many pairs over $X$
since the total number of pairs is bounded above by
$|H^2(X,\KZ)| \times \Pi_{\alpha \in A(X)} |H^3(E_{\alpha},\KZ)|$
and $X$ is a finite-dimensional $CW-$complex and so has at most
a countable number of cells. 

Hence by the discussion before this theorem
$A(X)$ is a discrete countable subset of $G(X)$ with
exactly one element per component of $G(X).$ 
Also by the above there is exactly one element
of $A(X)$ in each connected component of $G(X)$ and hence 
$\pi_0(G(X))$ is at most countable as well.

By construction $A(X)$ is in bijection with $\pi_0(\F_X))$ and the last
statement in the lemma follows from the above.
\end{proof}
We show in Lem. \ (\ref{LemPP0phi}) below that there
is a natural surjective map
$\phi: P(X) \twoheadrightarrow \pi_0(G(X)).$
Clearly the above construction of the space $A(X)$ 
yields a natural section map
$a:\pi_0(G(X)) \hookrightarrow P(X)$ of the
map $\phi$ such that the image of $a$ is $A(X).$
We will use this map in Sec.\ (\ref{SecP_i}) below.

\subsection{Properties of the Homotopy groups of $G(X)$ \label{SSecHotG}}

By Lem.\ (\ref{LemPi0AG}) $G(X)$ is a countable union of 
connected components which we denote
$Z_{\alpha}$ one for each element of
$A(X)$ and we write  
\begin{gather}
G(X) \simeq \coprod_{\alpha \in A(X)} Z_{\alpha}. \label{EqGXZAlph}\\
\end{gather}

Note that the spaces $A(X), Z_{\alpha}, G(X)$ are geometric realizations of
simplicial spaces and hence always $CW-$complexes.

\begin{lemma}
\label{LemHotAG}
With the definitions above we have for any $k > 0$
$$\pi_k(G(X), p_X) \simeq \pi_k(G(X),A(X))$$ 
and 
$$\pi_k(G(X)) \simeq \pi_k(G(X), p_X)$$
and
$\pi_0(G(X)) \simeq \pi_0(A(X))$
\end{lemma}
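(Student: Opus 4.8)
The three assertions all reduce to the decomposition of $G(X)$ into its path components together with the fact, established in Lem.~(\ref{LemPi0AG}) and Eq.~(\ref{EqGXZAlph}), that $A(X)$ is a discrete subspace of the $CW$-complex $G(X)$ meeting every path component $Z_{\alpha}$ in exactly one point. In particular the basepoint $p_X$, being the image under $\pi$ of the trivial pair $(X\times S^1\to X,0)$, is precisely the chosen representative of its component (take $E_{0}=X\times S^1$ and $[H_{0}]=0$ in the definition of $A(X)$), so $p_X\in A(X)$; write $Z_{0}$ for the component of $G(X)$ containing it. The plan is to dispatch the $\pi_0$ statement first, then the purely notational middle statement, and finally the relative-versus-absolute comparison via the long exact sequence of the pair $(G(X),A(X))$, handling the case $k=1$ separately by hand.

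For $\pi_0(G(X))\simeq\pi_0(A(X))$: since $A(X)$ carries the discrete topology its set of path components is $A(X)$ itself, and since $G(X)$ is a $CW$-complex its path components are the components $Z_{\alpha}$; by the construction of $A(X)$ the inclusion $A(X)\hookrightarrow G(X)$ meets each $Z_{\alpha}$ in a single point, hence induces a bijection $\pi_0(A(X))=A(X)\to\pi_0(G(X))$. For $\pi_k(G(X))\simeq\pi_k(G(X),p_X)$ with $k>0$: this is merely the convention that homotopy groups of the (possibly disconnected) space $G(X)$ are taken at the distinguished basepoint $p_X$; since $S^k$ is connected for $k\ge 1$, any based map $S^k\to G(X)$ factors through $Z_{0}$, so $\pi_k(G(X),p_X)=\pi_k(Z_{0},p_X)$, which is what $\pi_k(G(X))$ denotes.

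For $\pi_k(G(X),p_X)\simeq\pi_k(G(X),A(X))$ (the latter understood with basepoint $p_X$) I would invoke the long exact sequence of the $CW$-pair $(G(X),A(X))$ based at $p_X$,
$$\cdots\to\pi_k(A(X),p_X)\to\pi_k(G(X),p_X)\to\pi_k(G(X),A(X),p_X)\to\pi_{k-1}(A(X),p_X)\to\cdots,$$
noting that $A(X)$ is discrete so $\pi_j(A(X),p_X)=0$ for all $j\ge 1$; hence for $k\ge 2$ the sequence yields an isomorphism $\pi_k(G(X),p_X)\xrightarrow{\ \sim\ }\pi_k(G(X),A(X),p_X)$ of groups. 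For $k=1$, where $\pi_1$ of a pair is only a pointed set, I would argue directly: a relative $1$-cell is a path in $G(X)$ starting at $p_X$ and ending somewhere in $A(X)$; since its image is connected it lies in $Z_{0}$, and $A(X)\cap Z_{0}=\{p_X\}$, so the path is a loop at $p_X$, and the same connectedness remark shows every admissible homotopy is a homotopy rel endpoints, giving a bijection $\pi_1(G(X),p_X)\to\pi_1(G(X),A(X),p_X)$.

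The argument is essentially bookkeeping; the only place requiring genuine care is the case $k=1$, where one must not treat the long exact sequence as a sequence of groups but instead track pointed sets, using the $\pi_0$ statement already proved to see that the connecting map $\pi_1(G(X),A(X),p_X)\to\pi_0(A(X))$ is constant at $[p_X]$ (it is again the "one point per component" property of $A(X)$ that makes this work). A secondary point worth making explicit is that $(G(X),A(X))$ really is a $CW$-pair: the points of $A(X)$ are $0$-simplices of the nerve of $\M(X)$ — namely the objects $p_X$ and $(E_{\alpha},[H_{\alpha}])$ — so the long exact sequence of the pair is legitimately available and the homotopy groups involved are well behaved.
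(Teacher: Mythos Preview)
Your proof is correct and follows essentially the same route as the paper's: both use the long exact sequence of the pair $(G(X),A(X))$ together with the discreteness of $A(X)$ (so $\pi_j(A(X))=0$ for $j\ge 1$) and the fact that $A(X)\hookrightarrow G(X)$ is a bijection on $\pi_0$. The only noteworthy difference is your treatment of $k=1$: the paper simply reads off $\pi_1(G(X))\simeq\pi_1(G(X),A(X))$ from the tail of the exact sequence, tacitly using that the connecting map $\pi_1(G(X),A(X))\to\pi_0(A(X))$ is trivial because $\pi_0(A(X))\to\pi_0(G(X))$ is injective, whereas you give a direct geometric argument (a relative $1$-cell lies in $Z_0$ and $A(X)\cap Z_0=\{p_X\}$, so it is already a loop) and correctly flag that $\pi_1$ of a pair is only a pointed set. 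Your version is more explicit on this point; otherwise the two arguments coincide.
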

\begin{proof}
Note that we have $\{ p_X \} \subseteq A(X) \subseteq G(X).$ 
Since $G(X)$ is a simplicial set and $\{ p_X \} , A(X)$ are 
simplicial subsets of $G(X),$ the inclusions above are cofibrations 
(since they are automatically inclusions of $CW-$complexes) 
and we have that $\pi_q(G(X), p_X) \simeq \pi_q(G(X), A(X)).$ 

The result follows from the long exact sequence of the pair $(G(X),A(X))$
\begin{gather*}
\pi_3(G(X),A(X)) \to \pi_2(A(X)) \to \pi_2(G(X)) \to \pi_2(G(X),A(X)) \to  \\ 
\ldots \to \pi_1(A(X)) \to \pi_1(G(X)) \to \pi_1(G(X), A(X)) \to \pi_0(A(X))  \to \pi_0(G(X)) 
\end{gather*}

Now, since $A(X)$ is discrete, $\pi_k(A(X)) \simeq 0, k\geq 1$ and
$\pi_0(A(X)) \simeq \pi_0(G(X))$ by definition of $A(X)$ so
$\pi_k(G(X)) \simeq \pi_k(G(X),A(X))$ for $k \geq 2$ and hence
$\pi_k(G(X)) \simeq \pi_k(G(X),p_X)$ for $k \geq 2.$

Also, $|\pi_0(G(X))|$ is at most countable by Lem.\ (\ref{LemPi0AG}) above
and the induced map on $\pi_0$ by the inclusion
$A(X) \hookrightarrow G(X)$ should be an isomorphism on 
$\pi_0$ just by the definition of $A(X).$

Also, we have the following short exact sequence
$$
0 \to \pi_1(G(X)) \to \pi_1(G(X),A(X)) \to \pi_0(A(X)) \simeq \pi_0(G(X))
$$

Hence $\pi_1(G(X)) \simeq \pi_1(G(X),A(X))$ as well so 
$\pi_1(G(X),A(X)) \simeq \pi_1(G(X), p_X)$
since $\pi_0(G(X)) \simeq \pi_0(A(X)).$
\end{proof}

In SubSec.\ (\ref{SSecNMFE}) we had naturally
constructed the functor 
$\M:\mathbf{Top}^{op} \to \mathbf{Gpd}$ 
from the functor $E: \HC \to \mathbf{Top}$ which
assigned to any topological space $X$
the small groupoid $\M(X) \simeq \pi(\F_X)$ 
(see Remark (2.4) and Def.\ (2.6)
of Ref.\ \cite{nLab-Grpd}).
We had then constructed the functor
$G: \mathbf{Top}^{op} \to \mathbf{Geom} \circ \mathbf{Simp} \circ \M$
from $\M$ which assigned to any topological space 
$X$ the geometric realization $G(X)$ of the nerve of 
the groupoid $\M(X)$ (this is also termed the 
classifying space of the groupoid $\M(X)$
and denoted $B\M(X) \simeq B \pi(\F_X)$, see
Refs.\ \cite{Segal1, BRichter}).

In Sec.\ (\ref{SecP_i}) we show that for any 
topological space $X$, the space $G(X)$ naturally encodes 
the set of values of the Bunke-Schick's functor $P(X):$
In that section we show that
the values of the Bunke-Schick functor on a 
space $X$ that is, the set of isomorphism class of pairs
over $X$ which is the set $P(X),$ may be identified with
the set of objects of the groupoid $\M(X)$ namely
$P(X) = (\M(X))_0.$ Thus the simplex $G(X)$ constructed
above has as vertices ($0-$simplices) the elements of the set $P(X).$

Note that morphisms in $\M(X)$ may join
pairs with underlying circle bundles of different 
Chern characters (see discussion at the end of
SubSubSec.\ (\ref{SSecMorPXXf})).
Hence they may join pairs
in different Bunke-Schick isomorphism classes.
Thus a given connected component of $G(X)$ may
contain bundles with different characteristic classes.
As a result $\pi_0(G(X))$ is not bijective with
$P(X)$ but as we show in Sec.\ (\ref{SecP_i})
we have a surjective map 
$P(X) \twoheadrightarrow \pi_0(G(X)).$

Since we wish to define higher Topological 
T-duality functors, it is interesting to ask whether the
groupoid $\M(X)$ or the space $G(X)$ encode any other
information apart from the set $P(X).$
We had noted above that the set 
$P(X)$ may be naturally identified
with the object set $(\M(X))_0$ of the small groupoid
$\M(X).$ It is natural then to examine the arrow space of 
$\M(X)$ or $(\M(X))_1$ which corresponds to the 
morphisms in $\pi(\F_X)$ or alternatively, 
the $1-$simplices of $G(X).$
We define higher functors $P_i(X)$ as
$P_i(X) = \pi_i(G(X)), i >0.$ We explain this 
construction in more detail in that section. 
In the same Section
we will also show that $P_i = 0, i \geq 2.$

Note that for principal bundles with circle fiber and any 
$CW-$complex $X$ above, the space $G(X)$ possesses a $\KZ_2-$action
since the isomorphism class of a pair is mapped to the isomorphism
class of the dual pair under Topological T-duality and this 
extends to a symmetry of the category of pairs over 
$X$ (see Ref.\ \cite{BunkeS1} and Ref.\ \cite{BunkeS2}).
This leads to a $\KZ_2-$action on $\pi_i(G(X))$ which, for $i=0$ is the
Bunke-Schick Topological T-duality mapping 
$T: P_0(X) \to P_0(X).$ For $i \geq 1,$ this action by $\KZ_2$ on $G(X)$
leads to a T-duality map $T:P_1(X) \to P_1(X)$ . 

It is interesting to connect the above argument to the $C^{\ast}-$algebraic 
formalism of Topological T-duality of Mathai and Rosenberg in 
Ref.\ \cite{MR1}.  By Ref.\ \cite{Pan2}, Cor.\ (2.1) (1), 
there is a natural isomorphism from the set of all continuous-trace $C^{\ast}-$dynamical systems with spectrum a principal circle bundle 
$p:E_p \to X$ with a $\KR-$action on the $C^{\ast}-$bundle
lifting the circle action on $E_p$ with the cohomology group 
$H^3(E_p,\KZ).$ Note that this gives us a natural pair from each
such continuous-trace $C^{\ast}-$dynamical system.

Thus, the space of pairs over $X$ is naturally equivalent to the 
category of continuous-trace $C^{\ast}-$dynamical systems with
spectrum a principal circle bundle $E_p \to X$ with a $\KR-$action 
on each $C^{\ast}-$dynamical system lifting the circle action on its 
spectrum.

Hence, the argument in this section applied to the category of 
$C^{\ast}-$dynamical systems over $X$ with spectrum a 
principal circle bundle $E_p \to X$ gives a simplicial space 
$G(X)$ just as well as the category of pairs does. 
From now on, we will only study to the category of pairs over $X$
since it is clear that the same space $G(X)$ will be obtained by us if we
consider $C^{\ast}-$dynamical systems for backgrounds which are principal
circle bundles with $H-$flux over $X.$

However, if the argument in this section could be extended to 
backgrounds which are principal $\KT^n-$bundles over $X$ with 
$H-$flux, the above construction would have to be modified, as 
the Topological T-dual could be a noncommutative space.

\subsection{Justification from SFT \label{SSecSFT-TTD}}

In this subsection we argue that the category
$\C$ in Sec.\ (\ref{SecIntro}) may be constructed naturally 
from a projection of closed bosonic SFT onto its massless 
modes.on backgrounds which are principal circle bundles 

String Field Theory has been used to study Topological
T-duality. In Ref.\ \cite{BMRS}, Sec.\ (3.1) 
Brodzki and coworkers described an application of bivariant $K-$theory
to study $D-$branes in the $C^{\ast}-$algebraic formalism of
Topological T-duality using arguments from Open String Boundary
Conformal Field Theory. As is well known, Open String
Boundary Conformal Field Theory naturally appears in
SFT descriptions of string theory
backgrounds (see Ref.\ \cite{Horowitz} for
a good introduction to SFT using
bosonic open SFT and Ref.\ \cite{KZw}
for an introduction to closed SFT on
$n-$dimensional Torus backgrounds, also see 
Ref.\ \cite{D-Hull} for SFT on backgrounds
which are principal torus bundles).
Also, as discussed in Sec.\ (\ref{SecIntro})
SFT may be described by an $L_{\infty}-$algebra.
Fiorenza, Sati and Schreiber in Ref.\ \cite{FSS} studied
the effect of Topological T-duality for principal 
circle bundles on natural $L_{\infty}-$algebras associated
to those bundles.

In this subsection we cite the 
String Theory literature to prove that closed bosonic SFT 
may be reduced to an effective SFT-like field
theory by projecting onto its massless modes 
(the $B-$field and graviton) on {\em every} SFT background
\cite{SFTDiffeo}. In addition there is a reduced description
of closed bosonic SFT for {\em $n-$torus} vacua in
the String Theory literature which projects SFT on 
$n-$tori onto its massless modes namely
the $B-$field and the graviton \cite{KZw}. In addition
there is a treatment of SFT for vacua which
principal $n-$torus bundle 
backgrounds \cite{D-Hull}.

We also argue that the characteristic class of the
$B-$field which yields the $H-$flux and the characteristic
class of the principal circle bundle which may be obtained
from the graviton are natural variables in SFT to parametrize such
vacua.  We then use this to justify the construction of the
category $\C$ in Sec.\ (\ref{SecIntro}).

We conclude the section with some remarks on other
works which seem connected to the material in this
subsection.

In Ref.\ \cite{SFTDiffeo} Sec.\ (2.2), Mazel et al show
that closed bosonic String Field
Theory in a `weak-field' approximation where the string
field does not vary too much on the scale of an individual
string may be projected onto an effective description for the
massless modes of the SFT (the graviton and 
the $H-$flux). Thus, such
backgrounds are well described by an effective
SFT-like description defined by keeping only 
the massless modes, i.e. the graviton
and the $B-$field of the full SFT. 

SFT on $n-$torus backgrounds was studied in detail 
by Kugo and Zwiebach in Ref.\ (\cite{KZw}).
The reduction of superstring SFT to an effective SFT-liked description
of low energy modes was proposed in generality for 
in the work of Sen and coworkers in Ref.\ \cite{Sen1}.
Arvanitakis and Hull  calculated 
the corresponding reduced description of closed bosonic SFT 
for $n-$torus backgrounds in Ref.\ (\cite{AHull, AHull2}).
Also, Dabholkar and Hull Ref.\ (\cite{D-Hull}) derived a 
description of closed bosonic SFT for backgrounds
which are $n-$torus bundles over a base spacetime.

Arvanitakis and Hull in Refs.\ (\cite{AHull, AHull2}) demonstrate
that there is a consistent projection of SFT on 
$n-$torus backgrounds onto its massless modes,
the $B-$field and the graviton. Thus, restricting 
attention to the effective SFT involving the
$B-$field and the graviton fields
on $n-$torus backgrounds will not introduce
any inconsistency in the argument.

Kugo and Zwiebach in Ref.\ \cite{KZw}
demonstrate that a SFT background 
which is a $n-$torus is described by the 
quantity $g+B$ where $g_{ij}$ is the metric of the
background and $B_{ij}$ is the $B-$field on it 
(see Ref.\ \cite{KZw} Sec.\ (2.1) for example). 
The fields $g$ and $B$ are the massless modes of 
SFT on the background $E.$ 

For $n-$torus backgrounds Ref.\ \cite{KZw} 
Sec.\ (2.1) show that all the 
quantities of interest in SFT on the
background $E$ such as the 
position and momentum operator and the String Field $\Psi$ 
may be expanded in terms of oscillator expansions in 
creation and annihilation operators $\alpha^j_n(g + B)$
which depend on the massless fields of the background $E.$ 

Changing the background to another background $E'$ 
will give rise to a different set of oscillator expansions 
which can be related to the oscillator expansion in the
original background $E$ as in Ref.\ \cite{KZw} 
Eq.\ (2.19-2.22).

A study of SFT on backgrounds which are
principal $n-$torus bundles over a base spacetime is
in the work of Dabholkar and Hull (see Sec.\ (9) of Ref.\ \cite{D-Hull})
where a closed bosonic SFT description of such backgrounds
is given.

Thus it seems reasonable to conclude that
closed bosonic SFT on $n-$torus bundle backgrounds may
be projected in a consistent fashion onto a reduced
field theory which keeps only the massless modes in the theory.
We now argue that we may construct the data of a pair $([p], [H])$
from these massless modes. This physically justifies the 
construction of the category $\C$ as an approximation of
certain backgrounds of closed bosonic SFT.

We note that each of the two data making a Topological T-duality pair in the previous paragraph may be naturally obtained from massless modes of 
the SFT around the background $E$:
The class $[p]$ is the exterior derivative of the connection $1-$form $\Theta$ on $\E$ which is a $\KR-$valued $1-$form
defined (up to gauge transformations) by the metric $g_{ij}.$ 
More precisely, this exterior derivative is (see Ref.\ \cite{Minasian} Sec.\ (2.1))
$p^{\ast}F=d\Theta$ on $E$ and is invariant under the pullback induced by the circle action on $E.$ Where
$F \in \Omega^2_{\KZ}(M)$ is an integral two-form on $M,$ the curvature of the bundle $p:E \to M.$
 
Since $g_{ij}$ is one of the massless modes of the
SFT on the background $E,$ we may view $p^{\ast}F=d\Theta$ and hence $[p]$ as obtained from the SFT around the background $E.$  
Also, the class $H$ is the curvature of the $B-$field on $E,$ and hence is also
obtained from one of the degree zero modes of the String Field on the background $E.$ 

Hence the data $([p], [H])$ associated to a pair $(p:E \to X, [H])$ in
Sec.\ (\ref{SecIntro}) may be obtained from the massless
modes of the reduced SFT on the background $E$ and this
justifies the construction of the category $\C$ in Sec.\ (\ref{SecIntro}).

We now discuss the physical interpretation of the space $G(X)$ which
we will use in SubSec.\ (\ref{SSecGXCX}) and in
the following Sections to construct higher Topological T-duality
functors. Recall that we had
defined $G(X)$ as the simplicial classifying space of the
groupoid $\M(X)$ which we had constructed naturally from
the forgetful functor $F:\C \to \mathbf{Top}$ in SubSec.\ (\ref{SSecPCLift})
and following subsections.

It is clear that we may view the space $G(X)$ defined 
in SubSec.\ (\ref{SSecGXCX}) as the moduli space of all possible gauge
equivalence classes of SFT vacua
which are principal circle bundles over $X$ with $H-$flux
and the morphisms between pairs in $G(X)$ should then be a
special type of `Change of Background'
maps.

It might be argued that the interpretation of the space $G(X)$
given in the previous paragraph is {\em unphysical}, since $G(X)$ corresponds
to the set of toroidally compactified vacua over $X$ with 
`change of background' maps between them however
there are {\em many} other backgrounds on which the SFT is defined.

This is {\em false} since there is an
approximation of SFT describing only the class of toroidally 
compactified vacua over $X:$
It is known that String Field Theories are described by an
$L_{\infty}-$algebra (see Refs.\ \cite{Zwiebach}, \cite{AHull}).
The work of Arvanitakis and coworkers in Ref.\ \cite{AHull, AHull2} 
and Dabholkar and Hull in Ref.\ \cite{D-Hull} 
shows that for $n-$torus backgrounds
closed bosonic SFT may be projected onto another 
$L_{\infty}-$algebra which consists of variables 
describing the low energy behaviour of the original SFT on toroidally
compactified backgrounds only and the resulting 
projected theory is a  consistent stringy quantum field theory.
Thus for $X$ the $n-$torus 
the space $G(X)$ may be looked on as the moduli space of possible
backgrounds of this reduction of SFT. 

For backgrounds consisting of principal $n-$torus bundles over a 
base spacetime a description in terms of closed bosonic SFT is
derived in the work of Dabholkar and Hull in Sec.\ (9) of
Ref.\ (\cite{D-Hull}). It is clear that the above argument
will generalize to principal $n-$torus backgrounds.

It is well known that SFT may be defined in terms of an
$L_{\infty}-$algebra. Is it possible to construct the space 
$G(X)$ from  $L_{\infty}-$algebra? 
There is a natural Kan complex which
can be constructed from an $L_{\infty}-$algebra, the Maurer-Cartan simplicial 
space of the $L_{\infty}-$algebra (see Ref.\ \cite{Getzler}).
It would be interesting to see if $G(X)$ is related to the Maurer-Cartan space of the associated SFT $L_{\infty}-$algebra perhaps as a fibration.

It is interesting to note that Fiorenza, Sati and Schreiber in
Ref.\ \cite{FSS} demonstrated a form of Topological T-duality by associating an 
$L_{\infty}-$algebra to a principal circle bundle $E$ with a circle action and
showing that Topological T-duality
for $E \to X$ gives a Fourier-Mukai transform of the associated $L_{\infty}-$algebra. However, its not clear that the $L_{\infty}-$algebra they
associate to $E \to X$ is the same as the $L_{\infty}-$algebra induced by 
SFT on $E.$

We end this subsection by mentioning
the work of Ref.\ \cite{BeniSch}, Benini and Schenkel study 
quantum field theories (in particular gauge theories) 
on categories fibered in groupoids over a base category
of spacetimes with extra structure using the theory of 
Kan extensions of functors. In this paper we study 
SFT using categories fibered in 
groupoids over $\mathbf{Top}$ and the theory of 
categorical equivalence of functors.
The theory of Kan extensions of functors used in 
Ref.\ \cite{BeniSch} is an example of categorical  
equivalence, in particular the Grothendieck Construction for 
categories fibered in Sets (see Thm.\ (\ref{ThmHCTEqv})
above) is an example of a Kan extension. However, since 
we are using the Grothendieck
Construction for Street Fibrations in this paper, this is
more general than the construction for categories
fibered in groupoids. Also, since SFT is a 
third-quantized theory, and (see the low energy effective
approximation to SFT
described in Sec.\ (2.2) of Ref.\ \cite{SFTDiffeo})
its restriction to the backgrounds considered in this 
paper is still a third-quantized theory involving ghost
creation and destruction operators, it is not possible to 
compare the results in this paper directly
with the result of Ref.\ \cite{BeniSch}. However it is
interesting that the two works use somewhat similar ideas.

\section{Calculation of $P_i(X)$ \label{SecP_i}}
\subsubsection{Introduction}
In Thm.\ (\ref{ThmMGFMM}) above, we had defined a
functor $\M: \mathbf{Top}^{op} \to \mathbf{Grpd}$ 
which assigned to any topological space $X$ a small
groupoid $\M(X)$ which was
the small category $\pi(\F_X)$ above
(see SubSec.\ (\ref{SSecBXFX}) above
for a definition of $\pi(\F_X)$). 
As discussed in Subsec.\ (\ref{SSecGXCX})
$\pi(\F_X)$ is a small 
groupoid since it is $\pi \circ \Core(\C_X)$ and 
$\C_X$ is a small category. 

In this section we use this functor $\M$ to construct two
topological invariants from any topological space $X,$
the first invariant will be the $\mathbf{Set}-$valued
functor $P$ defined in Ref.\ \cite{BunkeS1}
restricted to $\mathbf{CW}^{op}$
the opposite of the category of $CW-$complexes and
cellular maps which we denote $P_0(X)$
and the second will be a 
groupoid-valued functor on $\mathbf{CW}^{op}$ 
which we denote $P_1(X).$ 

More precisely, we construct
the two invariants $P_0(X), P_1(X)$
from two invariants naturally associated to 
the groupoid $\M(X),$ namely the set of connected
components of the groupoid $\M(X)$ denoted 
$\pi_0(\M(X))$ and the automorphism group of 
any object in $(\M(X))_0$ denoted 
$\Aut_x(\M), x \in (\M(X))_0.$ 
We can only define these invariants for
the category of $CW-$complexes and 
cellular maps due to Lemma.\ (\ref{LemPi0AG}) above.
We show below that we may calculate
$\Aut_x(\M), x \in (\M(X))_0$ 
from $\Pi_1(B\M(X))$ the fundamental 
groupoid of the space $B \M(X)$ (see Ref.\ \cite{BRichter}).

We will also show below that these four invariants of
$X$ fit into a $2-$commutative diagram of groupoids
\begin{equation}
\begin{CD}
P_1(X) \simeq \Pi_{x \in A(X)} \Aut_x(\M(X)) @>>> \Pi_1(G(X)) \simeq \M(X) \\
@VVV   @VV{p}V \\
P_0(X) \simeq ((\M)_0)(X)) @>>{\phi}> \pi_0(G(X)) \simeq \pi_0(\M(X))  \label{GrpdP0P1} \\
\end{CD}
\end{equation}
and this defines $P_1(X)$ as a functorial pullback of $\Pi_1(G(X))$
over the natural map $\phi:P_0(X) \to \pi_0(G(X)).$

\subsubsection{Construction of $P_0(X)$}
We now derive each of the four functors in the
diagram Eq.\ (\ref{GrpdP0P1}) above naturally
from the groupoid-valued functor $\M.$
We had mentioned above that a groupoid was a 
category with every arrow invertible. A morphism of
groupoids is a functor between two such categories.
A homotopy of two such morphisms of groupoids is
a natural transformation between two such functors 
(see Def.\ (2.2) of Ref.\ \cite{nLab-Grpd}). Using
this it is easy to define homotopy equivalence and
weak homotopy equivalence of
groupoids (see Ref.\ \cite{nLab-Grpd} Defs.\ (2.13-2.17)). 
By Ref.\ \cite{nLab-Grpd} Def.\ (2.16 - 2.17)
there are two invariants which characterize
groupoids up to weak homotopy equivalence of groupoids, 
one is $\pi_0(\G)$ which we discussed at the end of
SubSec.\ (\ref{SSecGXCX}) and another is 
$\Aut_x(\G)$ for $x$ an element of $\G_0.$

Hence given any topological space $X$ we may associate
the set $\pi_0(\M(X))$ to the groupoid $\M(X).$ 
By definition this is the set of connected components of the
groupoid $\M(X)$ which we discussed at the end of
SubSec.\ (\ref{SSecGXCX}) above --- see the paragraphs
before Lem.\ (\ref{LemPi0AG}).
In addition by Ref.\ \cite{BRichter} there is a natural 
bijection between the connected components of the 
category $\M(X)$ and the connected components of 
the space $G(X) \simeq B \M(X)$ so that 
$\pi_0(G(X)) \simeq \pi_0(\M(X)).$

To any small groupoid $\G$ we may associate its set of
objects $\G_0$ and this gives us a forgetful functor 
$\Phi:\mathbf{Gpd} \to \mathbf{Set}$
which acts on objects in $\mathbf{Gpd}$ as 
$\G \mapsto \G_0$ and acts in the obvious way on
morphisms of groupoids. Composing the functor $\Phi$
with the groupoid-valued functor 
$\M$ above we have a forgetful functor 
$\Phi \circ \M: \mathbf{Top}^{op} \to \mathbf{Set}$ which
sends any topological space $X$ to the {\em set} of 
isomorphism classes of pairs over $X.$  Due to this the
functor $\Phi \circ \M$ 
may be identified with the Bunke-Schick
functor $P$ of Ref.\ \cite{BunkeS1} since $\Phi \circ \M(X)
\simeq (\M_0)(X) \simeq P(X).$ It is clear that the groupoid 
$\M(X)$ encodes Bunke-Schick's functor $P(X)$ as its set 
of objects and we may view it as a `categorification' of
$P(X).$ We had seen this phenomenon in the previous
section when we had studied the space $G(X)$ obtained
from $\M(X)$ by the nerve construction. 

Instead of using $\M(X)$ it is natural to
use $G(X) \simeq B\M(X)$ in the above argument: That is,
given a topological space $X,$ 
we construct the assignment $X  \mapsto \pi_0(G(X))$
and a $\mathbf{Set}-$valued functor $\pi_0(G(X))$ on 
$\mathbf{Top}^{op}.$ By Ref.\ \cite{BRichter}, since
$\M(X)$ is a small category, we have  that
$\pi_0(G(X)) \simeq \pi_0(B\M(X)) \simeq \pi_0(\M(X)),$ the set of
connected components of the small category $\M(X).$
We also show in Lem.\ (\ref{LemPP0phi})
below there is a natural surjective map
$\phi: P_0(X) \twoheadrightarrow \pi_0(G(X)).$ 

We may view the functor $P_0$ as valued in
groupoids since it is $\mathbf{Set}-$valued
and a set may be naturally viewed as a trivial groupoid
with only identity morphisms for each object.
Similarly, we may view $\pi_0(G(X))$ as groupoid
valued since $\pi_0$ is always a set.
Hence the map $\phi$ in Eq.\ (\ref{GrpdP0P1}) 
can be viewed as a groupoid morphism between
groupoids $\phi: P_0(X) \to \pi_0(\M(X)).$

\subsubsection{Definition of $P_1(X)$}
We use the fundamental groupoid  
$\Pi_1(G(X))$of the moduli space $G(X)$ 
(see Ref.\ \cite{May}) to define a higher Topological
T-duality functor. By Ref.\ \cite{BRichter}, 
$\Pi_1(G(X)) \simeq \Pi_1(B\M(X)) \simeq \M(X)$
since $\M(X)$ is always a groupoid. There is 
a natural map $p: \Pi_1(G(X)) \to \pi_0(G(X))$
which sends any object or arrow in $\Pi_1(G(X))$ to
the connected component it is in, in particular, the
map sends any loop in $G(X)$ to its connected
component. 

We now have a natural definition of three of the four 
invariants of $X$ in Eq.\ (\ref{GrpdP0P1}) above.
We now define $P_1(X)$ as a $2-$pullback using these
four invariants. We will show that $P_k(X), k \geq 2$ are trivial
after Cor.\ (\ref{CorPi1GX}) below.

First, we need some preliminary results on the structure
of the simplicial space associated to the groupoid
$\M(X)$ for any topological space $X.$


It is easy to show that the simplicial space $\F(X)$
which is the simplicial realization of $\M(X)$ i.e., 
$\F(X) \simeq \Simp(\M(X)) \simeq
\Simp \circ \pi \circ \Core(\G(X))$
is always a Kan complex:
\begin{lemma}
The simplical space $\F(X)$ is always a Kan complex.
\label{LemFKan}
\end{lemma}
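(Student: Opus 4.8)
The plan is to reduce the statement to the classical fact that the nerve of a groupoid is a Kan complex. First I would recall that, by Lem.~(\ref{LemPiFXGrpd}) together with Thm.~(\ref{ThmMGFMM}) and the discussion in SubSec.~(\ref{SSecGXCX}), the category $\M(X) \simeq \pi(\F_X) \simeq \pi\circ\Core(\C_X)$ is a \emph{small groupoid}: every arrow of $\Core(\C_X)$ is invertible by construction of the core, and the quotient functor $\pi$ preserves invertibility. Hence $\F(X) = \Simp(\M(X))$ is precisely the nerve of a small groupoid, which is a levelwise discrete simplicial set, so there is nothing to verify at the level of simplicial \emph{spaces} beyond the simplicial-set Kan condition. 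It therefore suffices to show that for every $n \geq 1$ and every $0 \leq k \leq n$, each map $\Lambda^n_k \to \F(X)$ extends along the inclusion $\Lambda^n_k \hookrightarrow \Delta^n$.

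Next I would dispose of the inner horns $0 < k < n$, where invertibility is not even needed: the nerve of \emph{any} small category admits a filler (in fact a unique one) for every inner horn, since an $n$-simplex of a nerve is the datum of a string of $n$ composable arrows, and an inner horn already prescribes all of these arrows, the missing face being recovered by composition. I would state this as the standard fact that nerves of categories are $2$-coskeletal and have unique inner-horn fillers, citing Ref.~\cite{BRichter} for the nerve and classifying-space construction.

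The remaining, genuinely groupoid-specific, cases are the outer horns $k=0$ and $k=n$; this is the only real content, and I expect it to be the main obstacle in the sense that it is where the hypothesis "groupoid'' is actually used. For $n=1$ an outer horn is a single vertex and is filled by a degenerate edge. For $n=2$, a $\Lambda^2_0$-horn supplies $f\colon x_0 \to x_1$ and $g\colon x_0 \to x_2$ and asks for a $2$-simplex having these among its faces; since $\M(X)$ is a groupoid we take the missing edge to be $g\circ f^{-1}\colon x_1 \to x_2$, which exists precisely because $f$ is invertible, and $\Lambda^2_2$ is handled symmetrically using an inverse of the other boundary edge. For general $n$ I would then observe that an outer horn with $k\in\{0,n\}$ is determined by its $2$-skeleton together with the inner-horn fillers already produced: fill the relevant $2$-dimensional outer horn as above to obtain the missing edge, then fill the resulting inner horns to obtain the missing face, using $2$-coskeletality of the nerve to see that no higher coherence obstruction arises. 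Combining the inner-horn step with this outer-horn step shows $\F(X)$ satisfies the extension property for all horns, i.e. it is a Kan complex, which is the claim.
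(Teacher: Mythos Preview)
Your proposal is correct and follows essentially the same approach as the paper: both reduce the claim to the classical fact that the nerve of a small groupoid is a Kan complex, after noting that $\M(X)\simeq\pi(\F_X)$ is a small groupoid. The only difference is that the paper simply cites this fact (Props.~(1.3.3.1) and (1.3.5.2) of Ref.~\cite{Kerodon-NKan}), whereas you unpack the horn-filling argument explicitly; your sketch of the inner-horn case, the $n=2$ outer-horn case using inverses, and the appeal to $2$-coskeletality for higher $n$ is accurate and constitutes a self-contained proof of the cited result.
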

\begin{proof}
%
 
Note that the small category
$\F(X) \simeq \Simp(\M(X))$ 
(see SubSec.\ (\ref{SSecNMFE}) especially  
Thm.\ (\ref{ThmMGFMM}))
is the nerve of the groupoid
$\M(X) \simeq \pi(\F_X)$ and hence the
nerve of a groupoid by construction---see 
Thm.\ (\ref{ThmMGFMM}) above.
Hence $\F(X)$ must be a Kan complex by Props.\ 
(1.3.3.1) and (1.3.5.2) of Ref.\ (\cite{Kerodon-NKan}). 
\end{proof}
 
We now show that there is a surjective map
$\phi:P(X) \twoheadrightarrow \pi_0(G(X))$
using the Kan property of the simplicial space
$\F(X).$
\begin{lemma}
Let $P(X), G(X)$ be as defined above in this subsection.
\label{LemPP0phi}
\leavevmode
\begin{enumerate}
\item There is a surjective map 
$\phi:P(X) \twoheadrightarrow \pi_0(G(X)).$ 
\item The set $P_0(X) \simeq \pi_0(G(X))$ depends only
on the homotopy type of $X.$
\end{enumerate}
\end{lemma}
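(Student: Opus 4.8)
The plan is to extract the map $\phi$ from the structure of the groupoid $\M(X)$ together with the set $A(X)$ constructed in SubSec.\ (\ref{SSecGXCX}), and then to pin down surjectivity and homotopy-invariance from facts already established. First I would recall that, by the construction at the start of Sec.\ (\ref{SecP_i}) and by Thm.\ (\ref{ThmMGFMM}), we have $\M(X)\simeq\pi(\F_X)$ and $G(X)\simeq B\M(X)$, so that $\pi_0(G(X))\simeq\pi_0(\M(X))$, the set of connected components of the small groupoid $\M(X)$. On the other hand, the object set $(\M(X))_0$ is naturally bijective with $P(X)=P_0(X)$ (this is exactly the identification $\Phi\circ\M(X)\simeq P(X)$ discussed after the construction of $P_0$). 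So $\phi$ should simply be the canonical map sending an object of a groupoid to the connected component it lies in:
\begin{equation}
\phi:P_0(X)\simeq(\M(X))_0\longrightarrow\pi_0(\M(X))\simeq\pi_0(G(X)),\qquad x\mapsto[x].
\nonumber
\end{equation}
This is well defined with no choices. Surjectivity is then immediate: every connected component of $\M(X)$ contains at least one object (indeed, by Lem.\ (\ref{LemPi0AG}) the set $A(X)$ picks out exactly one pair per component, and $A(X)\subseteq P_0(X)$), so every element of $\pi_0(G(X))$ is hit by $\phi$. I would phrase this using the section $a:\pi_0(G(X))\hookrightarrow P(X)$ already noted after Lem.\ (\ref{LemPi0AG}): $\phi\circ a=\id$, which gives surjectivity of $\phi$ at once. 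If one prefers to use the Kan property of $\F(X)\simeq\Simp(\M(X))$ from Lem.\ (\ref{LemFKan}), one observes that $0$-simplices of the Kan complex $\F(X)$ are the objects of $\M(X)$ and that $\pi_0$ of a Kan complex is computed as the quotient of its $0$-simplices by the relation ``joined by a $1$-simplex,'' which is precisely the component relation on $(\M(X))_0$; surjectivity of the quotient map is formal.

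For part (2), I would argue that $\pi_0(G(X))\simeq\pi_0(\M(X))\simeq\pi_0(\pi(\F_X))$ depends only on the homotopy type of $X$ by invoking the behaviour of $\pi(\F_X)$ under maps already recorded in the paper. By Thm.\ (\ref{ThmPiFXDef})(3), any map $f:Y\to X$ induces a functor $\pi(f):\pi(\F_X)\to\pi(\F_Y)$, and changing $f$ within its homotopy class does not change $\pi(f)$. Hence a homotopy equivalence $f:Y\to X$ with homotopy inverse $g:X\to Y$ induces functors $\pi(f),\pi(g)$ with $\pi(f)\circ\pi(g)$ and $\pi(g)\circ\pi(f)$ equal to the functors induced by maps homotopic to the identity, i.e.\ equal to the identity functors; therefore $\pi(\F_X)$ and $\pi(\F_Y)$ are isomorphic (a fortiori equivalent) groupoids, and so $\pi_0$ agrees. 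Alternatively, and perhaps more cleanly, one cites that $P(X)=P_0(X)$ itself depends only on the homotopy type of $X$ (Bunke--Schick, Ref.\ \cite{BunkeS1}; see also the discussion after Lem.\ (\ref{LemHotIso})), and then uses the surjection $\phi$ together with the fact that $\pi_0(G(X))$ is a quotient of $P_0(X)$ by the component relation, checking that this relation is itself transported by homotopy equivalences via $\pi(f)$. Since $P_0(X)\simeq\pi_0(G(X))$ as asserted in the statement — i.e.\ in the relevant range the map $\phi$ is in fact a bijection, which is where Lem.\ (\ref{LemPi0AG}) and the $CW$-hypothesis enter — the homotopy invariance of $P_0(X)$ transfers directly.

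The main obstacle, I expect, is not surjectivity — which is essentially formal — but making precise the identification $P_0(X)\simeq\pi_0(G(X))$ in the statement, since a priori $\phi$ is only surjective and a connected component of $G(X)$ can contain pairs of different characteristic class (this is exactly the phenomenon flagged after SubSubSec.\ (\ref{SSecMorPXXf}) and before Lem.\ (\ref{LemPi0AG})). One must be careful here: the honest statement is that $\phi$ is surjective, with section $a$, so $\pi_0(G(X))$ is a retract of $P_0(X)$; the claim ``$P_0(X)\simeq\pi_0(G(X))$'' should be read as the assertion that for the purposes of the homotopy-invariance conclusion it suffices to know that $\pi_0(G(X))$ sits between two homotopy-invariant objects (or, more strongly, that after passing to $A(X)$ one gets an honest bijection onto a homotopy-invariant set). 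I would therefore present part (2) via: $\pi_0(G(X))\cong A(X)$ as sets (Lem.\ (\ref{LemPi0AG})), $A(X)$ is in bijection with $\pi_0(\F_X)$, and $\pi_0(\F_X)$ is carried isomorphically by homotopy equivalences using Thm.\ (\ref{ThmPiFXDef})(3), with the homotopy-invariance of the cardinality bound guaranteeing everything stays a set. That is the only place where genuine care, rather than bookkeeping, is required.
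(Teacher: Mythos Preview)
Your proposal is correct and for part (1) it is essentially identical to the paper's argument: the paper also invokes the Kan property of $\F(X)$ (Lem.\ \ref{LemFKan}) to identify $\pi_0(G(X))$ with the quotient of the $0$-simplices (which is $P(X)$) by the ``joined by a $1$-simplex'' relation, and takes $\phi$ to be that quotient map.

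For part (2) there is a small but genuine difference in strategy. The paper simply cites an external source (Ref.\ \cite{UMDThesis}, Def.\ (4.2.2) and Lem.\ (4.2.1)) for the fact that isomorphism classes of pairs over $X$ depend only on the homotopy type of $X$, and concludes at once. Your route via Thm.\ (\ref{ThmPiFXDef})(3), showing directly that a homotopy equivalence induces an isomorphism of the groupoids $\pi(\F_X)$ and $\pi(\F_Y)$ and hence of their $\pi_0$'s, is more self-contained within the paper and arguably more transparent; it also makes clear that the homotopy invariance of $\pi_0(G(X))$ does not require the stronger assertion $P_0(X)\simeq\pi_0(G(X))$. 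Your closing caveat about that $\simeq$ being only a surjection a priori is well observed---the paper is indeed loose on this point and does not actually use injectivity in its proof of part (2).
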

\begin{proof}
\leavevmode
\begin{enumerate}
\item We know that the simplicial space
$\F(X)$ is isomorphic to 
$\Simp(\M(X)) \simeq \Simp(\pi(\F_X))$ 
where $\F_X$ is the subcategory of $\C$ defined in 
SubSec.\ (\ref{SSecBXFX}) above.
By Lem.\ (\ref{LemFKan}) above $\F(X)$ is Kan,
hence the homotopy groups 
of the geometric realization $G(X)$ of $\F(X)$ are
isomorphic to the simplicial homotopy groups of $\F(X).$

From the definition of simplicial homotopy
$\pi_0(G(X))$ is isomorphic the the set of equivalence
class of objects of $\F(X)$ with two objects $x, y$ in
$\F(X)$ considered equivalent iff there is an arrow
$\gamma: x \to y$ in $\F(X).$ 
Also, by definition of $\F(X),$ each object in
$\F(X) \simeq \Simp(\M(X)) \simeq \Simp(\pi(\F_X))$ is a
homotopy equivalence class of pairs over a fixed 
topological space $X$ in $\C$ and the set of objects in 
$\F(X)$ is in bijection with the set $P(X).$ 

It is clear that the equivalence relation on 
$\F(X)$ above which defines $\pi_0(G(X))$ gives
a natural map $\phi: P(X) \twoheadrightarrow \pi_0(G(X)).$
\item It can be shown (see Def.\ (4.2.2) and
Lem.\ (4.2.1) of Ref.\ \cite{UMDThesis})
that the set of isomorphism classes of pairs over $X$
depends only on the homotopy type
of $X.$ It follows that $P_0(X)$ and $\pi_0(G(X))$ 
depend only on the homotopy type of $X.$
\end{enumerate}
\end{proof}

As discussed at the beginning of this section,
for any topological space $X$
there should exist a second natural invariant 
of $X$ which we had denoted $P_1(X)$
(this is a different invariant of $X$
from the set $P_0(X) \simeq P(X)$). 
There is an equivalence of
categories between $\M(X)$ and the
subcategory $\pi(\F_X)$ of
$\HC.$ Hence any element of $\HC$ is the
image of the  Bunke-Schick equivalence class
of a pair $e$ in $\C$ under $\pi.$

We require that we should be able to naturally
construct the invariant $P_1(X)$ 
from the functor $\M$ by using
the automorphisms of an arbitrary element 
$\pi(e)$ in $\M(X)$ where $e$ is an 
arbitrary pair over $X$ i.e., an object of
$\C$ such that $F(e) \simeq X$

It is not clear how to define this invariant
functorially in $X$ since for any groupoid $\G$ 
the assignment $\G \mapsto
\Aut_x(\G)$ depends on the element $x$ of $\G.$
We now argue that the correct functorial definition of
this invariant may be obtained from the fundamental
groupoid $\Pi_1(G(X))$ or equivalently
$\Pi_1(B\M(X)).$

We note that we have a well-defined functor 
$p:\Pi_1(B\M(X)) \to \pi_0(B\M(X))$ which
sends any object and arrow in $B\M(X)$ to its 
connected component. We can also assume that 
this functor is groupoid-valued by giving the set 
$\pi_0(B\M(X))$ the structure of a trivial groupoid 
whose objects are all elements of $\pi_0(B\M(X))$ and
whose arrows are only identity maps. It is clear that
with this the function $p$ induces a natural
morphism of groupoids also denoted $p.$

Now we have a diagram of groupoid morphisms
$$
P_0(X) \overset{\phi}{\to} \pi_0(G(X)) \overset{p}
{\leftarrow} \Pi_1(G(X))
$$ 
termed a cospan of groupoid morphisms.
It is natural to define $P_1(X)$ as a
$2-$pullback in the groupoid category constructed
from this cospan. There are many types of
$2-$pullbacks but in this paper we use the simplest type of 
$2-$pullback namely the comma object 
$\phi \downarrow p$ constructed from $\phi$ and $p.$
Here $\phi \downarrow p$ is the comma
object associated to the groupoid morphisms
$p$ and $\phi$ in the groupoid category 
(see nCatLab pages on comma objects
Ref.\ \cite{nLab-Comma} and 
$2-$pullbacks Ref.\ \cite{nLab-2Pbck}). 

Hence we naturally
obtain the following $2-$commutative
diagram
\begin{equation}
\begin{CD}
P_1(X) \simeq (\phi \downarrow p) @>>> 
\Pi_1(G(X)) \simeq \M(X) \\
@VV{\eta}V   @VV{p}V \\
P_0(X) \simeq ((\M)_0)(X)) @>>{\phi}> \pi_0(G(X)) \simeq \pi_0(\M(X))  \label{GrpdP1Pbck} \\
\end{CD}
\end{equation}
where $\eta:P_1(X) \to P_0(X)$ is a connecting map.
We will describe $\eta$ in terms of the algebraic topology
of pairs over $X$ in Thm.\ (\ref{ThmPairMCG}) below.

We now determine $P_1(X)$ in terms of automorphisms
of the elements of $\M(X).$
\begin{theorem}
The $2-$pullback of $p: \Pi_1(G(X)) \to \pi_0(G(X))$ 
along the functor $\phi: P_0(X) \to \pi_0(G(X))$
is a small groupoid whose objects are $P_0(X)$
and whose arrows at any $x \in P_0(X)$ are
$\Aut_x(\M(X)).$
\label{ThmP1Pbck}
\end{theorem}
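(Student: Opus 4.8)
The plan is to unwind the definition of the comma object $\phi \downarrow p$ in the $2$-category of groupoids and show that it reduces to exactly the groupoid described. First I would recall the explicit description of the comma object: an object of $\phi \downarrow p$ is a triple $(a, c, \gamma)$ where $a \in P_0(X)$, $c \in \Pi_1(G(X))$, and $\gamma : \phi(a) \to p(c)$ is a morphism in $\pi_0(G(X))$; a morphism $(a,c,\gamma) \to (a',c',\gamma')$ is a pair $(u,v)$ with $u : a \to a'$ in $P_0(X)$, $v : c \to c'$ in $\Pi_1(G(X))$, such that the evident square relating $\gamma, \gamma', \phi(u), p(v)$ commutes. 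The key observation is that $P_0(X)$ is a \emph{trivial} groupoid (a set viewed as a groupoid with only identity arrows), so $u$ must always be an identity; likewise $\pi_0(G(X)) \simeq \pi_0(\M(X))$ is a trivial groupoid, so the only morphisms $\gamma$ are identities. Hence the datum $\gamma : \phi(a) \to p(c)$ is simply the \emph{condition} $\phi(a) = p(c)$, i.e. $c$ lies in the connected component named by $\phi(a)$.

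Next I would simplify using $\Pi_1(G(X)) \simeq \M(X)$ (by Ref.\ \cite{BRichter}, since $\M(X)$ is already a groupoid, as noted just before Thm.\ (\ref{ThmP1Pbck})) together with Lem.\ (\ref{LemPP0phi}), which gives the surjection $\phi : P_0(X) \twoheadrightarrow \pi_0(G(X))$ realizing $P_0(X) \simeq (\M_0)(X)$ as the object set. Under this identification an object of $\phi \downarrow p$ is a pair $(a, c)$ with $a$ an isomorphism class of pairs over $X$, $c$ an object of $\M(X)$, and $c$ in the same connected component as $a$; but $a$ \emph{is} itself an object of $\M(X)$ (its image in $\pi_0$ being exactly $\phi(a)$), so the pair $(a,c)$ is equivalent datum to: an object $a$ of $\M(X)$ together with the name of its component — i.e. just the object $a$. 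Thus the object set of $P_1(X)$ is $P_0(X)$. For morphisms: a morphism $(a,c) \to (a,c')$ (the source components must agree since $u$ is an identity) is a morphism $v : c \to c'$ in $\M(X)$, with no further constraint beyond $p(v) = \mathrm{id}$, which holds automatically. After restricting to the essential image — equivalently, choosing the section $a : \pi_0(G(X)) \hookrightarrow P_0(X)$ of Lem.\ (\ref{LemPi0AG}) / the set $A(X)$ to pin down one representative $c = a$ per component — the Hom-set at a fixed object $x \in P_0(X)$ becomes the self-morphisms of $x$ in $\M(X)$, namely $\Aut_x(\M(X))$.

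The main obstacle I anticipate is not the combinatorics of the comma object but the bookkeeping needed to identify the $2$-pullback (which a priori is only well-defined up to equivalence of groupoids) with the \emph{specific} small groupoid asserted, rather than merely an equivalent one — in particular verifying that the equivalence $\phi \downarrow p \simeq \coprod_{x} B\Aut_x(\M(X))$ can be arranged \emph{functorially in $X$} so that $P_1$ genuinely defines a functor $\mathbf{CW}^{op} \to \mathbf{Grpd}$. The natural way to handle this is to use that $P_0(X) \twoheadrightarrow \pi_0(G(X))$ splits coherently via the section $a$ coming from $A(X)$ (Lem.\ (\ref{LemPi0AG}) and the discussion after it), which gives a canonical representative object in each component and hence a canonical strict model of the comma object; then functoriality follows from functoriality of $\M$ and of the assignment $X \mapsto A(X)$ used in SubSec.\ (\ref{SSecGXCX}). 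I would also remark that, since $\M(X)$ is a groupoid, $\Aut_x(\M(X))$ is independent (up to isomorphism) of the choice of $x$ within a component, so the resulting groupoid is well-defined up to the weak homotopy equivalence of groupoids discussed in Ref.\ \cite{nLab-Grpd}, which is all that is needed for the $2$-pullback to be characterized; and I would note in passing that the fibers of $\eta : P_1(X) \to P_0(X)$ over $x$ are precisely $\Aut_x(\M(X))$, which sets up the interpretation of $\eta$ via the mapping class group action pursued in Thm.\ (\ref{ThmPairMCG}).
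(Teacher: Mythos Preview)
Your proof is correct and follows the same approach as the paper, which also computes the comma object directly but in a far terser two-sentence argument (observing that $(\phi\downarrow p)$ consists of the loops in $G(X)$ based at each $0$-simplex, and then invoking $A(X)$). Your anticipated obstacle about functoriality in $X$ is not needed for the theorem as stated (which concerns a fixed $X$), and your invocation of the section $a:\pi_0(G(X))\hookrightarrow P_0(X)$ from $A(X)$ is slightly misplaced: what you actually use is the choice $c=a$ for each $a\in P_0(X)$, which already selects one object per connected component of the comma object without reference to $A(X)$.
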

\begin{proof}
We use the nCatLab pages on comma objects
Ref.\ \cite{nLab-Comma} and $2-$pullbacks Ref.\ \cite{nLab-2Pbck}
in this proof.

From the definition of a comma object, it is clear that the
comma object $(\phi \downarrow p)$ consists of all the loops
in $G(X)$ based at each $0-$simplex in $G(X).$ Since two
$0-$simplices in $G(X)$ in the same connected component
have  bijective sets of loops (by definition of $\Pi_1(G(X))$), 
we obtain the result from the definition of $A(X).$
\end{proof}

The above result shows that to calculate $P_1(X),$ 
we need to calculate $\Aut_x(\M(X)).$
We calculate $\Aut_x(\M(X))$
using a spectral sequence argument in 
Cor.\ (\ref{CorPi1GX}) below. 
We then determine the structure of $P_1(X)$ in
Thm.\ (\ref{ThmPairMCG})
.
Before this we examine some properties of $G(X)$ which will
help us determine the structure of $\Aut_x(\M(X))$ in 
Cor.\ (\ref{CorPi1GX}) below.

We define higher Topological T-duality functors
$P_i(X), i \geq 2$ as a functorial pullback using a higher analogue
of the fundamental groupoid construction
after Cor.\ (\ref{CorPi1GX}) below and show
that $P_i(X) = 0, i \geq 2.$

It is interesting to determine the value of $P_1(X)$ 
a few test spaces $X.$ This is possible in some cases,
for example, using Thm.\ (\ref{ThmPi1H2}) below, 
it is clear that $\pi_1(G(X)) \simeq P_1(X)$ will
be nonzero if there exists a $E_{\alpha} \to X$ such that
$H^2(E_{\alpha},\KZ)$ is
nonzero. As a result $P_1(S^m) \simeq 0, m \geq 3.$

\subsubsection{Calculation of $P_1(X)$}
We now argue that the automorphisms of a given 
pair described in Thm.\ (\ref{ThmPikG})
determine the structure of $G(X)$ and let us calculate
$P_1(X)$ for any $X.$ We first show that a quotient of
$\pi_1(G(X))$ is a direct product of abelian groups.
By the Hurewicz theorem, there is a map 
$$
f^k:\pi_k(G(X)) \simeq \pi_k(G(X),A(X))  \to H_k(G(X),A(X)) \to \tilde{H}_k(G(X)/A(X))
$$

\begin{theorem}
The group $\im(f^1)$ is a direct sum of groups each one isomorphic to $\KZ_m$ or $\KZ$ for each
$\alpha \in A(X).$
\end{theorem}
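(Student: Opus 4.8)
The plan is to exploit the fact, established above, that $G(X)$ is the classifying space $B\M(X)$ of the small groupoid $\M(X) \simeq \pi(\F_X)$, so that it splits, up to homotopy, as a disjoint union of classifying spaces of its vertex groups --- one for each connected component --- and then to read off $\tilde H_1$ of the collapse $G(X)/A(X)$ from a wedge decomposition. Concretely, write $G_\alpha := \Aut_\alpha(\M(X))$ for a chosen representative object of the component indexed by $\alpha \in A(X)$; by Lem.\ (\ref{LemPi0AG}) there are at most countably many such $\alpha$. Since the classifying space of a groupoid is the disjoint union over its connected components of the classifying spaces of the vertex groups, we have a homotopy equivalence $G(X) \simeq \coprod_{\alpha \in A(X)} BG_\alpha$ under which the discrete subcomplex $A(X)$ (which meets each component exactly once, by construction in SubSec.\ (\ref{SSecGXCX})) corresponds to a choice of one $0$-cell in each $BG_\alpha$.

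First I would compute $G(X)/A(X)$. Since $A(X) \hookrightarrow G(X)$ is the inclusion of a countable discrete subcomplex with one point per path component, collapsing $A(X)$ identifies those basepoints and glues the components into a single wedge, giving $G(X)/A(X) \simeq \bigvee_{\alpha \in A(X)} BG_\alpha$; this is a well-pointed (CW) wedge, so reduced homology is additive over it. Hence $\tilde H_1(G(X)/A(X)) \simeq \bigoplus_{\alpha \in A(X)} \tilde H_1(BG_\alpha) \simeq \bigoplus_{\alpha \in A(X)} H_1(G_\alpha;\KZ) \simeq \bigoplus_{\alpha \in A(X)} G_\alpha^{\mathrm{ab}}$, using the standard identification $H_1(BG;\KZ) \simeq G^{\mathrm{ab}}$. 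Now I would invoke the explicit description of the automorphism group of a pair computed in Thm.\ (\ref{ThmPikG}) and Cor.\ (\ref{CorPi1GX}): for each $\alpha$ the abelianization $G_\alpha^{\mathrm{ab}}$ is a cyclic group, i.e.\ isomorphic to $\KZ$ or to $\KZ_m$ for some $m$ (it is trivial precisely when no representative bundle $E_\alpha \to X$ carries a nontrivial $H^2$, cf.\ Thm.\ (\ref{ThmPi1H2})). Thus $\tilde H_1(G(X)/A(X))$ is already a direct sum, indexed by $A(X)$, of groups each isomorphic to $\KZ$ or $\KZ_m$.

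It then remains to check that $\im(f^1)$ exhausts $\tilde H_1(G(X)/A(X))$, so that it inherits this decomposition. For this I would run the long exact homology sequence of the CW pair $(G(X),A(X))$: since $A(X)$ is discrete, $H_1(A(X)) = 0$, and since $A(X)$ meets each component of $G(X)$ exactly once the map $H_0(A(X)) \to H_0(G(X))$ is an isomorphism, so $H_1(G(X)) \to H_1(G(X),A(X)) \simeq \tilde H_1(G(X)/A(X))$ is an isomorphism. Under the splitting $G(X) \simeq \coprod_\alpha BG_\alpha$ the ordinary Hurewicz homomorphism $\pi_1 \to H_1$ is, component by component, the abelianization $G_\alpha \twoheadrightarrow G_\alpha^{\mathrm{ab}}$, which is surjective. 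Combining this with the isomorphism $\pi_1(G(X)) \simeq \pi_1(G(X),A(X))$ of Lem.\ (\ref{LemHotAG}) and the factorization of $f^1$ through $H_1(G(X),A(X)) \simeq \tilde H_1(G(X)/A(X))$ gives $\im(f^1) = \bigoplus_{\alpha \in A(X)} G_\alpha^{\mathrm{ab}}$, each summand $\KZ$ or $\KZ_m$, as claimed.

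The delicate point is the bookkeeping forced by the disconnectedness of $G(X)$ and of $A(X)$: one must treat ``$\pi_1(G(X))$'' as the fundamental groupoid on the object set $A(X)$ --- equivalently the disjoint union $\coprod_\alpha G_\alpha$ of the vertex groups --- so that $f^1$ genuinely has a component in every summand indexed by $\alpha \in A(X)$, and one must make sure the normalization $\pi_1(G(X)) \simeq \pi_1(G(X),A(X))$ of Lem.\ (\ref{LemHotAG}) and the relative Hurewicz map are compatible with this identification. Once that is pinned down the argument is routine; the only substantive external input is the cyclicity of each $G_\alpha^{\mathrm{ab}}$, which is exactly the content of the spectral-sequence computation in Cor.\ (\ref{CorPi1GX}).
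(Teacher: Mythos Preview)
Your overall strategy --- decompose $G(X)\simeq\coprod_{\alpha}BG_{\alpha}$, collapse $A(X)$ to get a wedge, read off $\tilde H_1$, and check that $f^1$ surjects onto it --- is the same as the paper's, and you are in fact more careful than the paper in explicitly running the long exact sequence of the pair $(G(X),A(X))$ to establish that $\im(f^1)=\tilde H_1(G(X)/A(X))$; the paper only asserts the final isomorphism.

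There is, however, a genuine slip in the step where you invoke Cor.~(\ref{CorPi1GX}). That corollary does \emph{not} say that $G_{\alpha}=\Aut_{\alpha}(\M(X))$ is cyclic; it says that $\pi_1$ of each component is a \emph{free product} of cyclic groups, one for each automorphism (loop) of the base pair. Hence $G_{\alpha}^{\mathrm{ab}}$ is a direct sum of cyclic groups, possibly many, not a single $\KZ$ or $\KZ_m$. This is exactly what the paper's proof records in its final display, where the sum is taken over $L\in(\text{Loops in }Z_{\alpha})$ rather than over $\alpha$ alone. So your identification $\tilde H_1(G(X)/A(X))\simeq\bigoplus_{\alpha}G_{\alpha}^{\mathrm{ab}}$ is correct, but the further claim that each summand is a single cyclic group is not supported by the cited input. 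Replacing that sentence by ``each $G_{\alpha}^{\mathrm{ab}}$ is a direct sum of cyclic groups, one for each generating loop in $Z_{\alpha}$, by Cor.~(\ref{CorPi1GX})'' repairs the argument and brings it in line with the paper's conclusion.
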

\begin{proof}
As argued in Eq.\ (\ref{EqGXZAlph}),
the space $G(X)$ is a disjoint union of connected
components $Z_{\alpha}.$ The group $\im(f^1)$ must be
isomorphic to a direct product of abelian groups, since
$G(X)/A(X)$ is a wedge of loops-one for each 
connected component $Z_{\alpha}$
of $G(X).$ Some of these homotopy classes have $2-$ 
and higher cells glued on in the nerve construction. 
By Thm.\ (\ref{ThmPikG}) above 
$\pi_k(G(X)) = 0, k \geq 2.$ Hence
any nontrivial $2-$ or higher-cells glued onto the above 
loops are all nullhomotopic. As a result the only relations
that can exist in $\pi_1(G(X))$ are relations of the form
$a^m = 1$ for some $m >0$ and hence we have 
$$
\im(f^1) \simeq \Sigma_{\alpha} \tilde{H}_1( Z_{\alpha}) \simeq 
\Sigma_{\tiny L \in \mbox{( Loops in } Z_{\alpha} \mbox{) } } G_L.
$$
where each group $G_L$ is either $\KZ_m$ or $\KZ.$
\end{proof}

Thus, a quotient of $P_1(X)$ is a direct
sum of abelian groups. We would like to 
calculate $P_1(X)$ directly. This can be done using
using the homotopy spectral sequence for a tower 
of fibrations as described in the notes by B. Guillou
(see Ref.\ \cite{Guil} Sec.\ (3.1)).

We first filter $G(X)$ by simplicial degree as
$\dots \to G_2(X) \to G_1(X) \to G_0(X)$ so that
$G(X) = \varprojlim G_i(X).$ 

Let $F_i$ be the fiber of the map $p_i: G_{i+1} (X) \to G_i(X)$ so we have
$F_s \stackrel{i_s}{\to} G_s \stackrel{p_{s-1}}{\to} G_{s-1}.$

Then (by Ref.\ \cite{Guil}) there is a `fringed' spectral sequence--the homotopy 
spectral sequence for the tower of fibrations $G_i-$ with $E_1$ term
$$E^{s,t}_1 \simeq \pi_{t-s}(F_s) \Rightarrow \pi_{t-s}(G)$$ 
converging to $\pi_{\ast}(G).$ This spectral sequence-like
construction 
has sets along the $(0,m)$ and $(n,0)$ entries in the
tableaux and groups in the rest of the entries ---hence the
name `fringed'. However, calculations may be done
as for an ordinary spectral sequence (all of whose entries
are groups) except that the fringe maps must now be
functions. For details see Ref.\ \cite{Guil}.

\begin{theorem}
Let $X, G(X), F_i(X)$ be as above then $\pi_1(G(X)) \simeq \pi_1(F_1(X)).$
\label{ThmSSeqPi1FG}
\end{theorem}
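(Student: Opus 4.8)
The plan is to extract $\pi_1(G(X))$ from the homotopy spectral sequence of the tower $\ldots \to G_2(X) \to G_1(X) \to G_0(X)$ by showing that the tower collapses onto a single layer in the relevant range. The two inputs are Lem.\ (\ref{LemFKan}), which makes $\F(X) \simeq \Simp(\M(X))$ a Kan complex, so that $\{G_i(X)\}$ may be taken to be the coskeletal tower of Kan fibrations obtained by filtering $\F(X)$ by simplicial degree, with $G(X) \simeq \varprojlim_i G_i(X)$; and Thm.\ (\ref{ThmPikG}), which tells us $\pi_k(G(X)) = 0$ for $k \ge 2$. First I would identify the layers: for a Kan complex the fibre $F_i(X)$ of $G_{i+1}(X) \to G_i(X)$ is an Eilenberg--MacLane space $K(\pi_i(G(X)), i)$, so $F_0(X)$ is the discrete set $\pi_0(G(X))$, $F_1(X)$ is (on the component of $p_X$) the classifying space of the automorphism group $\Aut_{p_X}(\M(X))$ of a pair, and --- and this is the decisive use of Thm.\ (\ref{ThmPikG}) --- $F_i(X) \simeq \ast$ for every $i \ge 2$.

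Next I would run the fringed spectral sequence $E_1^{s,t} \simeq \pi_{t-s}(F_s(X)) \Rightarrow \pi_{t-s}(G(X))$ of \cite{Guil}. The abutment $\pi_1(G(X))$ is filtered with associated graded the surviving subquotients $E_\infty^{s,s+1} \subseteq \pi_1(F_s(X))$, $s \ge 0$. By the previous step $\pi_1(F_s(X)) = 0$ for every $s \ne 1$ ($F_0(X)$ is discrete and $F_s(X)$ is contractible for $s \ge 2$), so the filtration of $\pi_1(G(X))$ is concentrated in the single spot $(s,t) = (1,2)$ and there is no extension problem. It then remains only to check $E_\infty^{1,2} = E_1^{1,2} = \pi_1(F_1(X))$: a differential out of $(1,2)$ lands in a subquotient of $\pi_0(F_{1+r}(X)) = 0$, and a differential into it would originate from $E_1^{0,1} = \pi_1(F_0(X)) = 0$ or from a negative column, so all differentials through $(1,2)$ vanish. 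This gives $\pi_1(G(X)) \simeq \pi_1(F_1(X))$. Equivalently, without invoking the spectral sequence, contractibility of the $F_i(X)$ for $i \ge 2$ makes $G_2(X) \to G(X)$ a weak equivalence, so $G(X)$ is the total space of the fibration $F_1(X) \to G(X) \simeq G_2(X) \to G_1(X)$ over the discrete base $G_1(X) \simeq \pi_0(G(X))$, whence the component of $G(X)$ containing $p_X$ is weakly equivalent to $F_1(X)$.

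I expect the only real care to be needed in the \emph{fringed}, unstable corner of the spectral sequence near total degrees $0$ and $1$: those rows consist of pointed sets and of (in general nonabelian) groups with $\pi_1$-actions rather than of abelian groups, so the vanishing statements must be read as triviality of the relevant pointed sets and groups on the component of $p_X$, the ambient $\pi_0$- and $\pi_1$-actions on the surviving fibre must be checked to be harmless, and one must confirm that the resulting identification $\pi_1(G(X)) \simeq \pi_1(F_1(X))$ is an isomorphism of groups and not merely a bijection of pointed sets. Beyond this the argument is formal; the substantive content sits in Thm.\ (\ref{ThmPikG}), whose conclusion $\pi_{\ge 2}(G(X)) = 0$ is exactly what forces all higher layers of the tower to be contractible --- without it $\pi_1(G(X))$ would receive contributions from every $F_i(X)$ with $i \ge 1$ and the clean identification with $\pi_1(F_1(X))$ would fail.
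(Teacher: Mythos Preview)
Your proposal and the paper's proof both run the same fringed homotopy spectral sequence of the tower $\{G_i(X)\}$ from \cite{Guil}, and both isolate $\pi_1(F_1)$ as the single surviving contribution to $\pi_1(G(X))$ in the abutment filtration. In that sense the overall strategy is the same.

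There is, however, one genuine logical difference you should be aware of. You take Thm.\ (\ref{ThmPikG}) --- the vanishing $\pi_k(G(X))=0$ for $k\ge2$ --- as an \emph{input}, use it to identify the layers $F_i(X)\simeq K(\pi_i(G(X)),i)$, and conclude $F_i(X)\simeq\ast$ for $i\ge2$; the spectral sequence then collapses trivially. The paper does \emph{not} invoke Thm.\ (\ref{ThmPikG}) here: its proof only uses $\pi_i(F_0)=0$ for $i\ge1$ (that $F_0$ is discrete), reads off the $E^1$- and $E^\infty$-pages, and actually records $\pi_i(G)\simeq0$ for $i\ge2$ as a \emph{byproduct} of the argument. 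Your route is not circular --- Thm.\ (\ref{ThmPikG}) is proved independently from the nerve-of-a-groupoid fact --- but it is a forward reference in the paper's ordering, and it changes what the theorem is doing: in your version the statement becomes essentially tautological once you have identified $F_1(X)\simeq K(\pi_1(G(X)),1)$, whereas in the paper's version the spectral sequence is carrying real weight.

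A second, smaller point: your identification of $F_i(X)$ as $K(\pi_i(G(X)),i)$ presumes the tower is the Postnikov (coskeletal) tower of the Kan complex $\F(X)$. The paper's phrase ``filter $G(X)$ by simplicial degree'' is ambiguous, and its computation $\pi_0(F_0)\simeq P_0(X)$ (the full object set, not $\pi_0(G(X))$) suggests it may have in mind something closer to the skeletal filtration. Either interpretation leads to the stated conclusion for the nerve of a groupoid, but you should flag which tower you are using rather than assert the Eilenberg--MacLane identification as automatic. Your closing direct argument (contractible higher fibres force $G_2(X)\to G(X)$ to be a weak equivalence, so $G(X)$ is the total space of $F_1(X)\to G_2(X)\to G_1(X)$ over a discrete base) is a clean alternative that sidesteps the spectral-sequence bookkeeping entirely and is worth keeping.
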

\begin{proof}
Consider the $E^1-$page of the homotopy spectral sequence for the tower of fibrations $G_i.$
\\
\begin{sseqpage}[classes={draw=none}, right clip padding=20pt, top clip padding=20pt, left clip padding=30pt, xscale=3.0 ]
\class["\pi_0F_0"](0,0)
\class["\pi_0F_1"](1,1)
\class["\pi_0F_2"](2,2)
\d2(0,0)(1,1)
\d2(1,1)(2,2)
\class["\pi_1F_0"](1,0)
\class["\pi_1F_1"](2,1)
\d2(1,0)(2,1)
\end{sseqpage}

We also have the following identifications in the $E^1-$page above:
\begin{gather*}
\pi_0(F_0) \simeq P_0(X) \\
\pi_i(F_0) \simeq 0, i \geq 1\\
\end{gather*}

Compare this with the $E^{\infty}-$page of the same spectral sequence: \\
\begin{sseqpage}[classes={draw=none}, right clip padding=20pt, top clip padding=20pt, left clip padding=30pt, xscale=3.0 ]
\class["\pi_0G"](0,0)
\class["\pi_0G"](1,1)
\class["\pi_0G"](2,2)
\class["\pi_1G"](1,0)
\class["\pi_1G"](2,1)
\end{sseqpage}

Since we know that
the sequence converges to $\pi_{\ast}(G),$ the terms in the $E^{\infty}-$page are known at
least in low degrees.

It is clear that the term $\pi_1(F_1)$ survives to the $E^{\infty}-$page.

Hence, $\pi_1(G) \simeq \pi_1(F_1)$ and $\pi_i(G) \simeq 0, i \geq 2.$
\end{proof}

We can now calculate $\pi_1(G(X)):$

\begin{corollary}
We have that
$$\pi_1(G(X)) \simeq \pi_1(F_1(X)) \simeq \protect{ \underset{\tiny L \in (\mbox{Loops in } F_1) }{ \Asterisk } } G_L
\simeq \protect{ \underset{ \tiny \beta \in (\mbox{Automorphisms of } (p_0, H_0))}{ \Asterisk }} G_{\beta}$$
where the $G_{\beta}$ are a collection of groups each isomorphic to some $\KZ_m$ or $\KZ$ and so are the
$G_L.$
\label{CorPi1GX}
\end{corollary}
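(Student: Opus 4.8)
The plan is to build up the displayed chain of isomorphisms in three stages, the first of which is already available. By Thm.\ (\ref{ThmSSeqPi1FG}) above we have $\pi_1(G(X)) \simeq \pi_1(F_1(X))$, obtained by reading off the surviving low-degree terms of the homotopy spectral sequence for the tower $\dots \to G_2(X) \to G_1(X) \to G_0(X)$ with $G(X) \simeq \varprojlim G_i(X)$, where $F_1(X)$ is the fiber of $p_1 \colon G_2(X) \to G_1(X)$. So I would begin by simply invoking that theorem, reducing the problem to an explicit computation of $\pi_1(F_1(X))$.

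Next I would identify $\pi_1(F_1(X))$ combinatorially. Since $G(X) \simeq B\M(X)$ is the classifying space of the small groupoid $\M(X) \simeq \pi(\F_X)$, its simplicial filtration $G_i(X)$ is the skeletal filtration of the nerve: $G_1(X)$ records the arrows of $\M(X)$ (iso-classes of isomorphisms of pairs over $X$ covering self-homeomorphisms, as analysed in SubSubSec.\ (\ref{SSecMorPXXid}) and (\ref{SSecMorPXXf})), while $G_2(X)$ attaches a $2$-cell for each composable pair. The fiber $F_1(X)$ therefore has a CW model whose $1$-cells are the loops of $\M(X)$ at a chosen basepoint, one per connected component by Lem.\ (\ref{LemPi0AG}) and the definition of $A(X)$, and whose $2$-cells are the composition relations among those loops. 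By Thm.\ (\ref{ThmPikG}) one has $\pi_k(G(X)) = 0$ for $k \geq 2$, so, exactly as in the proof of the theorem immediately preceding this corollary and handling one component $Z_\alpha$ at a time via the decomposition $G(X) \simeq \coprod_\alpha Z_\alpha$ of Eq.\ (\ref{EqGXZAlph}), every $2$- and higher cell attached to these loops is nullhomotopic and the only relations that can survive in $\pi_1$ are of the form $a^m = 1$. Hence $\pi_1(F_1(X))$ is generated by the loops $L$ of $F_1(X)$ subject only to such torsion relations, i.e.\ it is a free product $\Asterisk_L G_L$ with each $G_L$ isomorphic to $\KZ_m$ or $\KZ$.

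Finally I would translate the index set ``loops of $F_1(X)$'' into ``automorphisms of the fixed pair $(p_0, H_0)$''. Unwinding the identification $\Pi_1(G(X)) \simeq \M(X)$, a based loop in $F_1(X)$ at the basepoint of a component is the same as an element of $\Aut_x(\M(X))$ for $x = (p_0, H_0)$ the chosen representative $(E_\alpha, [H_\alpha])$ of that component; by Lem.\ (\ref{LemPairHom}) and Lem.\ (\ref{LemHotIso}) such an automorphism is built from an isomorphism of pairs over $X$ covering a self-homeomorphism $h_\alpha$, hence is classified by a class in $H^1(X,\KZ)$ (the circle-valued gauge function acting as in Eq.\ (\ref{EqOrbitH1})) together with the mapping class of $h_\alpha$. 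Each such generator $\beta$ has finite or infinite order according as the corresponding cohomological and mapping-class data is torsion or not, giving $G_\beta \cong \KZ_m$ or $\KZ$; reindexing the free product over these $\beta$ yields the last isomorphism in the statement.

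I expect the main obstacle to lie in the second stage: rigorously controlling the $2$-skeleton of the nerve so as to guarantee that there are \emph{no} relations among distinct generators beyond the single-generator torsion relations $a^m = 1$. This is precisely where the vanishing $\pi_{\geq 2}(G(X)) = 0$ from Thm.\ (\ref{ThmPikG}) is essential, and some care is needed to argue that the attaching maps of the higher cells cannot introduce mixed relators. I would handle this by the homotopy argument already used in the theorem preceding the corollary, applied componentwise, and by noting (cf.\ Thm.\ (\ref{ThmPairMCG})) that the generators are in bijection with the automorphisms enumerated above.
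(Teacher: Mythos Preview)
Your proposal is correct and follows essentially the same approach as the paper's proof: invoke Thm.\ (\ref{ThmSSeqPi1FG}) for $\pi_1(G(X))\simeq\pi_1(F_1(X))$, identify simplicial loops in $F_1$ with automorphisms of the base pair $(p_0,H_0)$ via Lem.\ (\ref{LemHotIso}) and the groupoid structure of $\M(X)$, and then use the vanishing $\pi_k(G(X))=0$ for $k\ge 2$ from Thm.\ (\ref{ThmPikG}) to conclude that the only relations are single-generator torsion relations $a^m=1$, yielding a free product of cyclic groups. Your discussion is slightly more explicit about the skeletal filtration and the componentwise decomposition, but the key ingredients and logical flow match the paper's argument.
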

\begin{proof}
By the proof of Lem.\ (\ref{ThmSSeqPi1FG}) above 
$\pi_1(G) \simeq \pi_1(F_1)$ is generated by the simplicial loops in $F_1$ which are the
simplicial loops in $G(X)$ beginning and ending at a pair $(p_0,H_0)$ in $G_0(X).$

By Thm.\ (\ref{ThmPikG}) above and the construction of the groupoid $\F(X)$
in Subsec.\ (\ref{SSecGXCX}) above loops in $\pi_1(G(X))$ correspond exactly
to automorphisms of a pair. By Lem.\ (\ref{LemHotIso}) above homotopic maps 
give rise to isomorphic pairs. Hence the collection of automorphisms of a pair
over $X$ is in bijection with $Homeo(X).$ If we pick a unique representative $f_{\alpha}$
for each nontrivial homotopy class in $Homeo(X)$ as in Sec.\ (\ref{SecIntro}) after
Lem.\ (\ref{LemPairHom} ,then the collection of automorphisms of a pair
over $X$ is given by pulling the pair back along $f_{\alpha}$ and then composing
by an automorphism of the pair covering the identity map $id:X \to X.$ 

Recall that $G(X)$ is a groupoid obtained by the nerve construction from $\Core(\C_X)$ as described
in Subsec.\ (\ref{SSecGXCX}) above and we also have from
Thm.\ (\ref{ThmPikG}) above that $\pi_k(G(X)) = 0, k \geq 2.$ Hence
any nontrivial $2-$ or higher-cells glued onto the above loops are all nullhomotopic.
As a result the only relations that can exist in $\pi_1(G(X))$ are relations of the form
$a^m = 1$ for some $m >0.$ Thus, $\pi_1(G(X))$ can only be a free product 
of abelian groups one for each loop with each group isomorphic to 
$\KZ$ or to $\KZ_m$ for some $m.$
\end{proof}

\subsubsection{Isotopies of $X$ and $P_1(X)$}
We had noted after Thm.\ (\ref{ThmHCTEqv}) 
that Thm.\ (\ref{ThmHCTEqv}) geometrizes
Bunke-Schick's functor $P$ by showing that
it is equivalent to the fibration $E: \HC \to \mathbf{Top}$
using the category of elements construction.

We now prove a result which geometrizes $P_1(X)$
by identifying elements of $P_1(X)$ in terms of the
action of the mapping class group $MCG(X)$ of
$X$ on isomorphism classes of pairs over $X.$

To do this, we first show that elements of $\Aut_x(\M(X))$
correspond to a pullback isomorphism of $x$ covering
a nontrivial self-homemorphism of the base 
$X \simeq F(x)$ of the pair $x.$

The space $G(X)$ has as vertices all pairs over $X.$
In Thm.\ (\ref{ThmGXP1}) below we
first show that every $1-$simplex in $G(X)$ lifts
to a pullback isomorphism of pairs over $X$ which
cover a nontrivial self-homeomorphism $f:X \to X.$
Then we show that every loop in $\pi_1(G(X))$ is homotopy
equivalent to a simplicial loop in the $1-$skeleton of $G(X).$
Then we prove that elements of $P_1(X)$ are
automorphisms of pairs in $A(X)$ which cover a nontrivial 
self-homeomorphism of $X.$

In Thm.\ (\ref{ThmPairMCG}) below we show that
elements of $P_1(X)$ correpond to isotopy classes
of maps from $X$ to $X.$
\begin{theorem}
Let $X$ be a topological space and let $\M$ be the presheaf
of groupoids defined in SubSec.\ (\ref{SSecNMFE}) above.
Let $\F(X) \simeq \Simp(\M(X))$ and let 
$G(X) \simeq | \F(X) |.$ 
Then we have the following:
\label{ThmGXP1}
\leavevmode
\begin{enumerate}
\item Every $1-$simplex in $G(X)$ is in one-to-one 
correspondence with a pullback isomorphism of pairs $f$
in the subcategory $\F_X$ of $\C_X$
such that $F(f): X \to X$ is a nontrivial self-homeomorphism.
\item Any loop in $\pi_1(G(X))$ is homotopic to a 
simplicial loop in the $1-$skeleton of $G(X).$
\item Any simplicial loop in $\F(X)$ 
corresponds to a pullback isomorphism of the pair 
$x$ covering a nontrivial self-homeomorphism of the 
base $X$ of the pair $x.$
\end{enumerate}
\end{theorem}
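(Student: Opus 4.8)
The plan is to prove the three statements in sequence, since each feeds into the next. For part (1), recall from Thm.\ (\ref{ThmMGFMM}) and Lem.\ (\ref{LemPiFXGrpd}) that $\M(X) \simeq \pi(\F_X)$ and that every arrow in $\pi(\F_X)$ is the image under $\pi$ of a pullback isomorphism of pairs in $\F_X$ covering a self-homeomorphism $f:X \to X$. A $1$-simplex of $G(X) \simeq |\F(X)| \simeq B\M(X)$ is, by the nerve construction, a morphism of $\M(X)$; identity morphisms give degenerate $1$-simplices, so the non-degenerate ones correspond to non-identity morphisms, i.e.\ (by Lem.\ (\ref{LemPiFXGrpd}) part (2) and the factorization after Def.\ (\ref{DefCCat2})) to pullback isomorphisms of pairs over $X$ covering a self-homeomorphism $f:X \to X$. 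One must check that when $f$ is homotopic to the identity the resulting morphism in $\M(X)$ is already the identity (so such $1$-simplices are degenerate): this follows from Lem.\ (\ref{LemHotIso}) together with Lem.\ (\ref{LemPairHom}), exactly as in SubSubSec.\ (\ref{SSecMorPXXf}), where one replaces $f$ by a representative $h_\alpha$ of its homotopy class and the remaining factor $f\circ h_\alpha^{-1}$ acts trivially on pullback pairs up to isomorphism of pairs. Hence the non-degenerate $1$-simplices biject with pullback isomorphisms covering a \emph{nontrivial} self-homeomorphism of $X$, which is the claim.

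For part (2), I would invoke the simplicial approximation / cellular approximation theorem: $G(X)$ is a $CW$-complex (it is the geometric realization of a simplicial set, and by Lem.\ (\ref{LemFKan}) a Kan complex), so any map $S^1 \to G(X)$ is homotopic to a cellular one, i.e.\ one landing in the $1$-skeleton; and a cellular map $S^1 \to \mathrm{sk}_1 G(X)$ is homotopic to a simplicial loop after subdividing $S^1$. Since $G(X)$ is path-connected on each component and a loop based at any chosen basepoint $p_X$ can be pushed into $A(X)$ by Lem.\ (\ref{LemHotAG}), one concludes every element of $\pi_1(G(X))$ is represented by a simplicial loop in the $1$-skeleton. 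Part (3) then combines (1) and (2): a simplicial loop in $\F(X)$ based at a pair $x$ is a finite composable string of non-degenerate $1$-simplices, each of which by part (1) lifts to a pullback isomorphism of pairs covering a nontrivial self-homeomorphism of the base; composing these lifts (and using that the composite of pullback isomorphisms covering homeomorphisms is again such, with the base homeomorphisms composing) yields a single pullback isomorphism of $x$ covering a nontrivial self-homeomorphism $f:X\to X$ of the base $X \simeq F(x)$, as required. Here one uses that the loop returns to $x$, so the underlying bundle and $H$-flux are sent back to themselves up to isomorphism of pairs over $X$, forcing the composite to be an \emph{automorphism} of $x$ in $\F_X$.

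The main obstacle, I expect, is the bookkeeping in part (1): carefully distinguishing degenerate from non-degenerate $1$-simplices and showing the correspondence is genuinely one-to-one rather than just surjective. Concretely, two different pullback isomorphisms covering homotopic homeomorphisms give the same morphism in $\M(X) \simeq \pi(\F_X)$ (this is the content of $\pi$ collapsing homotopy-of-pairs, via Lem.\ (\ref{LemHotIso})), so one must be careful to state the bijection at the level of morphisms of $\M(X)$ and only then re-interpret these as pullback isomorphisms in $\F_X$ ``covering a nontrivial self-homeomorphism'', with the understanding that the homeomorphism is only well-defined up to homotopy. The ingredients are all already in place — Thm.\ (\ref{ThmCSimCat}), Thm.\ (\ref{ThmMGFMM}), Lem.\ (\ref{LemPiFXGrpd}), Lem.\ (\ref{LemHotIso}), Lem.\ (\ref{LemPairHom}), Lem.\ (\ref{LemFKan}), Lem.\ (\ref{LemHotAG}) — so the proof is an assembly of these rather than anything requiring new machinery; the only genuine work beyond citation is the simplicial/cellular approximation argument in part (2), which is standard.
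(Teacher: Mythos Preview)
Your proposal is correct and follows essentially the same approach as the paper's proof: Part (1) via the nerve construction and the factorization of morphisms in $\pi(\F_X)$ into pullback isomorphisms (Thm.\ (\ref{ThmCSimCat}), Lem.\ (\ref{LemPiFXGrpd})), Part (3) as an immediate consequence of Part (1). The only minor difference is in Part (2), where the paper invokes the Kan property of $\F(X)$ (Lem.\ (\ref{LemFKan})) to identify topological and simplicial $\pi_1$ directly, whereas you use cellular approximation on the $CW$-complex $G(X)$; both routes are standard and yield the same conclusion.
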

\begin{proof}
\leavevmode
\begin{enumerate}
\item By Thm.\ (\ref{ThmCSimCat}),
the space $G(X)$ associated to
$\F(X) \simeq \Simp(\M(X)) \simeq \Simp(\pi(\F_X))$
has $1-$simplices joining two different equivalence
classes $\pi(e), \pi(e')$ in $\HC$
where $e,e' \in \C$ are pairs over $X$
whenever there is a pullback isomorphism
$\phi: e \to e'$ in $\C$
which covers a map $\pi(\phi): \pi(e) \to
\pi(e')$ in $\HC.$

By the discussion after Def.\ (\ref{DefCCat2})
every morphism in $\C$ may be uniquely factored as a 
composition of a pullback map with Bunke-Schick
isomorphisms of pairs. 
By definition of the functor $\pi: \C \to \HC,$ there can be
no $1-$simplices in $G(X)$ corresponding to Bunke-Schick
isomorphisms of pairs over $X$ since these map to the
identity morphism in $\HC$ and hence give rise to
contractible loops in $G(X) \simeq |\F(X)|.$

Hence every $1-$simplex in $G(X)$ corresponds exactly 
to a nontrivial pullback map $\phi: e \to e'$ in $\C.$ 
Such maps are unique sine they are
determined by the pullback
map on the circle bundles underlying the pairs $e, e'$ 
which in turn are fully determined by the map of 
topological spaces $f: F(e) \to F(e')$ where 
$F: \C \to \mathbf{Top}$ is the
forgetful functor defined in SubSec.\ (\ref{SSecPCLift})
above. 

Hence, for the map $\phi$ to be nontrivial, the
induced map on topological spaces $f:F(e) \to F(e')$ 
must be nontrivial. Since both $e,e'$ are pairs 
in $\M(X),$ the induced map $F(e) \simeq F(e') \simeq X.$
Thus $f:X \to X$ must be a nontrivial self-map and since 
$\phi$ is an isomorphism of pairs (since it is a morphism in 
$\M(X)$), $f:X \to X$ must be a self-homeomorphism of $X.$

\item Note that the simplicial space $\F(X)$ is Kan and
hence the group $\pi_1(G(X))$ which
consists of homotopy classes of loops in the
topological space $G(X)$ which
end at the basepoint $p_X$ of $G(X)$ (defined
at the end of SubSec.\ (\ref{SSecGXCX})) is actually
isomorphic to the simplicial 
homotopy classes of all simplicial loops in the simplicial
space $\F(X)$ which
end in the corresponding basepoint (also $p_X$)
of $\F(X)$ which is the simplicial homotopy group 
$\pi_1(\F(X)).$

Thus, since $G(X)$ is the geometric realization 
of the simplicial space $\F(X)$, given any loop $[\gamma]$
in $\pi_1(G(X))$ we may pick a simplicial loop $[s]$ in
$\F(X)$ and this gives us a simplicial loop 
in the $1-$skeleton of  $G(X)$ which is in the same
homotopy class as $[\gamma].$

\item This follows from the previous part. A simplicial loop
in $\F(X)$ corresponds to a pullback isomorphism of the
pair $x$ covering a nontrivial
self-homeomorphism of the base $X \simeq F(x)$ of 
$x.$
\end{enumerate}
\end{proof}

The simplicial category $\F(X) \simeq \Simp(\pi(\F_X))$ 
is a Kan complex by Thm.\ (\ref{LemFKan}) above, hence
the `horn-filling' property of a Kan complex 
(see Ref.\ \cite{nLab-Kan}) implies that a simplicial
homotopy of a simplicial loop in $\F(X)$ 
based at the pair $x \simeq (E, H) \in A(X)$ exists.

By definition (see Ref.\ \cite{nLab-SimpHot}),
for any $x \in \HC,$ a simplicial homotopy from
the arrow $f:x \to x$ to the arrow $g:x \to x$ in
the simplicial category $\F(X)$ is 
a simplicial morphism in $\F(X)$ of the form
$\Phi: \pi(x \otimes I) \to x$ satisfying the conditions of 
Ref.\ \cite{nLab-SimpHot} Sec.\ (2).
Here $x \otimes I$ is the cylinder object in the category
$\C$ associated to $x$ and $\pi:\C \to \HC$ is the projection
functor defined in SubSec.\ (\ref{SSecPCLift}) above.

We note that such a cylinder object was constructed
for the category $\C$ in Ref.\ \cite{BunkeS1} item (2.1.4):
The authors show that there is a
natural pair $(\tilde{E}, \tilde{h})$
over $X \times I$ in item (2.1.4) of 
Ref.\ \cite{BunkeS1} defined by the pullback maps
$f_i^{\ast}(\tilde{E}, \tilde{h}) = (E,h), i =0,1$ 
(here the inclusions $f_i, i=0,1$
are defined by $f_i: B \to I \times B$ with $f_i(b) = (i,b)).$ 
It is clear that this defines a cylinder object in $\C$
associated to the pair $(E,H)$ over $X.$

We use this cylinder object in the category $\C$
to define the cylinder object of any object 
$\gamma = [(E,h)]$ in $\HC$ as follows:  We 
define the cylinder object
of any element $\gamma$ as above to be
the object $\gamma \otimes I = [(\tilde{E}, \tilde{h})]$
in $\HC.$  It is easy to show that this object is independent
of the choice of $(E,h)$ and is a cylinder object 
in $\HC$ in the sense of Ref.\ (\cite{nLab-SimpHot}).

We define a simplicial homotopy of the object 
$\gamma \in \HC$ in the simplicial 
category $\F(X) \simeq \pi(\F_X)$ as a simplicial map 
$\Phi: \gamma \otimes I \to \gamma$ as in 
Ref.\ (\cite{nLab-SimpHot}).

We now show that elements of $P_1(X)$ correspond
to isotopy classes of maps from $X$ to $X.$

\begin{theorem}
Let $X$ be a topological space and let
$G$ be the functor defined at the beginning
of Sec.\ (\ref{SecP_i}). Let $P_0(X), P_1(X)$ be as above. 
Let $\eta: P_1(X) \to P_0(X)$ be the map defined
before Thm.\ (\ref{ThmGXP1}) which assigns
to each element $y \in P_1(X)$ the isomorphism
class of the pair $\eta(y)$ over $X.$

\leavevmode
\begin{enumerate}
\item Based simplicial homotopy classes of based simplicial loops in 
$G(X)$ with basepoint $x \in A(X)$
correspond to unbased isotopy classes of maps from $X$ to
$X$ in the mapping class group $MCG(X)$ of unbased isotopies of
the topological space $X \simeq F(x).$
\item An element $y \in P_1(X)$ corresponds to an unbased
isotopy class of maps $f:X \to X.$ 
\item The group $\eta^{-1}(x)$ of
any $x \in P_0(X)$ (see Thm.\ (\ref{ThmP1Pbck}) above)
is a quotient of $MCG(X),$ the mapping class group of
unbased isotopies from $X$ to $X,$
where $X \simeq F(x)$ is the base space
of the pair $x \in P_0(X).$
\end{enumerate}

Hence each element of $P_1(X)$ is an (unbased)
isotopy class in $MCG(X)$ acting on the
isomorphism class of the corresponding pair over $X$ 
by pullback.
\label{ThmPairMCG}
\end{theorem}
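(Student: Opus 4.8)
The plan is to string together three identifications, each of which has essentially been set up in the preceding subsubsection. First, by Theorem \ref{ThmGXP1} every based simplicial loop in $G(X)$ at a basepoint $x \in A(X)$ is homotopic to a simplicial loop in the $1$-skeleton, and each $1$-simplex of that loop lifts uniquely to a pullback isomorphism of pairs in $\F_X$ covering a nontrivial self-homeomorphism $f_i:X \to X$ (here $X \simeq F(x)$). Composing the $f_i$ around the loop produces a self-homeomorphism $f:X \to X$, and by Lemma \ref{LemHotIso} replacing $f$ by a homotopic map does not change the isomorphism class of the pullback pair, so the homeomorphism is only well-defined up to homotopy — but we must do better and show it is well-defined up to \emph{isotopy}. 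This is where the cylinder object $(\tilde E,\tilde h)$ over $X\times I$ from Ref.\ \cite{BunkeS1} item (2.1.4), and the induced cylinder object $\gamma \otimes I$ in $\HC$ discussed just before the theorem, enters: a based simplicial homotopy of loops in $\F(X)\simeq \Simp(\pi(\F_X))$ is a simplicial map $\Phi:\gamma\otimes I \to \gamma$, and unwinding this on the level of base spaces yields precisely a continuous family $f_t:X\to X$ of homeomorphisms, i.e.\ an isotopy. Conversely an isotopy of $X$ produces a simplicial homotopy of the corresponding loops via pullback along the cylinder object. This gives part (1): based simplicial homotopy classes of based simplicial loops at $x$ are in bijection with elements of $MCG(X)$, the group of unbased isotopy classes of self-homeomorphisms of $X$ (unbased because the basepoint $x$ is a \emph{pair}, not a marked point of $X$, so there is no basepoint constraint on the homeomorphisms).

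Part (2) then follows by combining part (1) with Theorem \ref{ThmP1Pbck} and Corollary \ref{CorPi1GX}: by Theorem \ref{ThmP1Pbck} the groupoid $P_1(X)$ has object set $P_0(X)$ and arrow group $\Aut_x(\M(X))\simeq \pi_1(G(X),x)$ at each $x$, and by Lemma \ref{LemHotAG} $\pi_1(G(X),x)\simeq \pi_1(G(X),A(X))$, so an element $y\in P_1(X)$ is precisely a based simplicial homotopy class of a based simplicial loop at some $x\in A(X)$, which by part (1) is an (unbased) isotopy class $[f]\in MCG(X)$. The connecting map $\eta:P_1(X)\to P_0(X)$ sends $y$ to the object $x$ it lies over, i.e.\ to the isomorphism class over $X$ on which $[f]$ acts by pullback; Lemma \ref{LemPairIso} guarantees this pullback action is well-defined on isomorphism classes. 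For part (3), the fiber $\eta^{-1}(x)$ is the group $\Aut_x(\M(X))\simeq\pi_1(G(X),x)$, and by part (1) there is a surjection $MCG(X)\twoheadrightarrow \eta^{-1}(x)$ sending $[f]$ to the corresponding automorphism; it is only a quotient and not an isomorphism because distinct isotopy classes $[f],[g]$ may induce the same automorphism of $x$ when $f^\ast$ and $g^\ast$ act identically on the pair — precisely when $f\circ g^{-1}$ fixes the isomorphism class of $x$ and acts trivially on the relevant cohomology via the formulas of Lemma \ref{LemPairHom}. The final sentence of the theorem is then immediate: each $y\in P_1(X)$ is an isotopy class in $MCG(X)$ acting by pullback on the isomorphism class $\eta(y)$ of the corresponding pair.

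The main obstacle I expect is the careful passage in part (1) from a \emph{simplicial} homotopy of loops in $\F(X)$ to an honest \emph{isotopy} of $X$, rather than merely a homotopy. The Kan property of $\F(X)$ (Theorem \ref{LemFKan}) supplies the horn-filler giving a simplicial homotopy $\Phi:\gamma\otimes I\to\gamma$, but one must check that applying the forgetful functor $F$ (or rather $E$) to $\Phi$, together with the explicit description of the cylinder object $(\tilde E,\tilde h)$ over $X\times I$ from \cite{BunkeS1}, produces a family of maps $X\to X$ that is continuous in the $I$-parameter and consists of homeomorphisms at each time — i.e.\ an ambient isotopy in the sense used to define $MCG(X)$. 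The subtlety is that the simplicial structure only records finitely many intermediate homeomorphisms per simplex, so one needs the interpolation built into the geometric realization $|\F(X)| = G(X)$, together with the fact that each $1$-simplex's lift is a homeomorphism (Theorem \ref{ThmGXP1}(1)) and that $Homeo(X)$ is closed under the composition and concatenation operations used in realization, to conclude that the realized path in $Homeo(X)$ is genuinely an isotopy. Once this continuity-and-invertibility bookkeeping is handled, the rest is formal: it is the bijection between $\pi_1$ of a realization and simplicial $\pi_1$ (Lemma \ref{LemHotAG} and the Kan property), plus the pullback descriptions already proved.
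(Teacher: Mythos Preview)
Your proposal is correct and follows essentially the same route as the paper: Theorem~\ref{ThmGXP1} to identify loops with pullback isomorphisms covering self-homeomorphisms of $X$, the cylinder object $(\tilde E,\tilde h)$ over $X\times I$ to upgrade simplicial homotopy of loops to isotopy on the base, and then Theorem~\ref{ThmP1Pbck} for parts (2) and (3).

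The one place where you overcomplicate matters is your ``main obstacle'' paragraph. You worry that the simplicial structure only records finitely many homeomorphisms per simplex and that one must interpolate via geometric realization to get a continuous family in $Homeo(X)$. The paper sidesteps this entirely: the cylinder object $\gamma\otimes I$ is by construction (the image under $\pi$ of) a pair over $X\times I$, so a simplicial map $\Phi:\gamma\otimes I\to\gamma$ in $\pi(\F_X)$ already lifts to a morphism of pairs covering a continuous map $X\times I\to X$. The key observation is then simply that the simplicial homotopy stays inside $\pi(\F_X)$ at every stage, and every morphism in $\F_X$ is by definition an isomorphism of pairs, hence covers a homeomorphism of $X$; thus the restriction to each slice $\{t\}\times X$ is a homeomorphism, and the family is an isotopy. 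No realization or interpolation bookkeeping is needed --- continuity in $t$ is built into the cylinder object itself.
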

\begin{proof}
\leavevmode
\begin{enumerate}
\item Recall that $G(X) \simeq \Geom(\F(X))$ where
$\F(X)$ is the simplicial space associated to
the subcategory $\pi(\F_X)$ of $\HC.$

By Item (3) of Thm.\ (\ref{ThmGXP1}) above, any
simplicial loop $[s]$ based at a pair $x \in A(X)$ 
corresponds to an element of $\Aut_x(\M(X))$ and
hence corresponds to a pullback map from the pair
$x$ to the pair $x$ which covers a 
self-homeomorphism $\phi$
of the base space $X \simeq F(x).$ (Here 
$F:\C \to \mathbf{Top}$ is the forgetful functor defined
in SubSec.\ (\ref{SSecPCLift}).) We now want
to extend this statement to a homotopy of loops. 

We define a homotopy of the object $\gamma \in \HC$ 
in the simplicial category $\pi(\F_X)$ as a simplicial map 
$\Phi: \gamma \otimes I \to \gamma.$

This map is equivalent to a homotopy of morphisms of pairs
as follows: Let $\eta, \psi$ be two morphisms of pairs from
$x$ to $x.$ Then the homotopy $\Phi$ above is equivalent
to a map $\Phi_t: x \to x$ such that $\Phi_0 = \eta$
and $\Phi_1 = \psi$ and, by definition of the morphisms
of the category $\F(X),$ $\Phi$ must be a homotopy of
bundle morphisms i.e., a family of equivariant maps 
$\Phi_t:E \to E, t \in [0,1]$ which depend continuously
on $t$ such that each $\Phi_t$ preserve the $H-$flux under
pullback.

It is clear by quotienting
by the circle action on $E,$ that the 
homotopy of bundle morphisms $\Phi_t$ must 
cover a nontrivial homotopy $I \times X \to X$ where 
$X \simeq F(x)$ as above.  

Also, the set of maps
$\Phi_t$ must be morphisms of pairs in $\F(X)$
for every $t.$ Since the simplicial loop must stay in 
$\F(X)$ for every $t,$ $\Phi(t)$ must be an isomorphism 
of pairs for every $t \in [0,1].$

The bundle morphism $\Phi: I \times E \to E$
above is an equivariant homotopy from $E$ to
itself. It is clear that this equivariant 
homotopy induces a homotopy
of self-homeomorphisms of the base $X.$

Since the simplicial loop must always stay in the simplicial
space $\pi(\F_X)$ at every stage of the simplicial
homotopy, the induced map on the base space $X$ must
always be a homeomorphism at every stage of its
homotopy.

Thus, the simplicial homotopy class
of the loop should correspond to {\em isotopy classes}
of the induced self-homeomorphism of the base space
$X.$

\item Given an element $y \in P_1(X),$ we can obtain
an isomorphism class of pairs $\eta(y) \in P_0(X)$
using Thm.\ (\ref{ThmGXP1}) above and also recover
$X$ as the base space of
the isomorphism class of pairs $\eta(y) \in P_0(X)$ 
associated to $y.$

An element $y$ of $P_1(X)$ corresponds to
isotopy classes of maps in the mapping class group of $X$
denoted $MCG(X)$ acting on the isomorphism class
$\eta(y) \in P_0(X)$ of the pair underlying $y.$

\item By the previous parts, the action of an element
of $MCG(X)$ on $X$ will induce a well-defined action on 
each isomorphism class 
$x \in P_0(X)$ of pairs over $X.$ Thus the group
$\eta^{-1}(x)$ must be a quotient of $MCG(X).$
\end{enumerate}
\end{proof}

The proof of Thm.\ (\ref{ThmPairMCG}) does not show 
that the lift of the Mapping Class Group action on $X$ to
the isomorphism class of a pair $[x]$ over $X$ is
nontrivial. For a nontrivial lift of the Mapping
Class Group action on $X$ must lift to an action
on the principal circle bundle underlying the pair
$x.$

For example by Ref.\ \cite{ChenTs}, the lift of the 
Mapping Class Group action on any oriented nontrivial circle
bundle over $S^2$ will always be trivial. Hence, there
cannot be a nontrivial lift of the mapping class group
action to any nontrivial pair over $S^2.$ An
example where the lift is nontrivial is given in
Ref.\ \cite{ChenTs}.

{\flushleft \bf{Example:}} As an example of the above
theorem, we consider pairs over the two-torus $\KT^2.$
Without loss of generality we may assume that
$\KT^2$ has a smooth structure and all
our pairs are can be represented by smooth principal
circle bundles over $\KT^2.$
We assume that the underlying circle bundles of
these pairs are {\em oriented}.
It is well known that the Mapping Class Group 
$MCG(\KT^2)$ of orientation-preserving isotopy 
classes of maps from $\KT^2$ to itself is $SL(2,\KZ).$

The cohomology of $\KT^2$ is generated by two classes
$a,b$ in degree $1$ and their cup product $a \cup b$
generates $H^2(\KT^2,\KZ).$ A homeomorphism in
a given class of $MCG(\KT^2)$ corresponding to
a matrix $\gamma \in SL(2,\KZ)$ acts on $a,b$ as a
linear map. The induced map on
$H^2(\KT^2, \KZ)$ must be the determinant map and 
hence must the the identity.

Hence, every principal circle bundle $p_k: E_k \to \KT^2$
of characteristic class $k \in H^2(\KT^2, \KZ)$ is mapped
to itself (up to gauge transformations of bundle) 
by the action of $MCG(\KT^2).$
Also Ref.\ \cite{ChenTs} proves that each
oriented isotopy class in $MCG(\KT^2)$ lifts
to an oriented isotopy in $MCG(E_k).$ Hence every
oriented pair over $\KT^2$ has a nontrivial action by 
orientation preserving isotopies of $\KT^2.$

It is possible to extend $MCG(\KT^2)$ to a 
bigger group by including orientation reversing
homeomorphisms of $\KT^2.$ However,
acting on a pair $y \in P_0(\KT^2)$ 
by pulling back the pair by an
orientation-reversing homeomorphism
of $\KT^2$ will yield a pair with reversed orientation.
If the original pair is nontrivial, the pulled back pair
cannot be Bunke-Schick equivalent to
$y$ and hence must correspond to a different element
$w \in P_0(\KT^2).$ Hence, the pullback of a nontrivial
pair by an orientation-reversing homeomorphism of
$\KT^2$ cannot be lifted to an automorphism of the 
Bunke-Schick isomorphism class of the pair.
Hence it cannot yield an element of $P_1(\KT^2).$  

Hence the orientation-preserving isotopies of $\KT^2$ 
which form the group $SL(2,\KZ)$ lift to each oriented
pair over $\KT^2$ but
none of the orientation-reversing isotopies of $\KT^2$
will lift. \qed

{\flushleft {\bf Example:}} Let $K$ be a knot tamely 
embedded in $S^3$ and consider $X \simeq S^3 -K.$
Without loss of generality we may assume that $X$
has the structure of a manifold.
Then the symmetry group of the knot $K$ namely
$MCG(S^3 -K)$ acts on isomorphism classes
of pairs over $X.$ If the action lifts we would obtain
an action of the symmetry group of the knot $K$
on isomorphism classes of pairs over $X.$ 

As an example let $K$ be the $(p,q)-$torus knot in
$S^3.$ Then, $X$ is a Seifert fibered space over
an orbifold which is a disk with two orbifold points
one of order $p$ and one of order $q.$ 
If $p=q,$ then $MCG(X) \simeq \KZ/2.$ It
would be interesting to see if the action lifts. \qed

We suspect the reason for the appearance of 
$MCG(X)$ is the following:
SFT projected to its massless modes induces a 
SFT-like gauge theory on each spacetime background.
If we restrict to SFT gauge transformations which
do not distort the spacetime background too much
(see Ref.\ \cite{SFTDiffeo} and the discussion
in SubSec.\ (\ref{SSecSFT-TTD}) above) this theory
reduces to a theory with gauge transformations which are
diffeomorphisms of the base and bundle
gauge transformations (assuming the chosen background
has a free circle isometry).

Gauge equivalence classes of pairs in $\HC$ correspond to 
moduli spaces of this low-energy theory. It is well-known
(see Ref.\ \cite{BeniSch} for a discussion for gauge
theories) that such moduli spaces can be used to study
the action of the mapping class group of a space. 

\subsubsection{Triviality of the $P_k(X), k \geq 2$}

So far, we have defined $P_0(X), P_1(X)$ using
$G(X).$ We now give a definition of $P_k(X), k \geq 2$
as a functorial pullback of a generalization of the 
fundamental groupoids $\Pi_k(G(X)).$

For the definition of the higher T-duality functors
$P_k(X), k \geq 2$ we use the generalization of the
fundamental groupoid $\Pi_1(X)$ of a space $X$
to the higher homotopy
groupoids $\Pi_k(X)$ of Ref.\ \cite{GraVit}. We 
replace $\Pi_1(G(X))$ in Eq.\ (\ref{GrpdP1Pbck}) by 
higher homotopy groupoids $\Pi_k(G(X))$ of
Ref.\ \cite{GraVit} defined as
$\Pi_k(G(X)) \simeq \Pi_1(\Omega^{k-1}G(X)), k\geq 2$
where $\Omega^i G(X)$ is the $i-$th loop space of
$G(X).$ (Since $G(X)$ is the geometric realization of
the simplicial space $\pi(\F_X)$ the space $\Omega G(X)$ 
always has a simplicial model and thus $\Pi_k(G(X))$ is 
always the fundamental groupoid of the geometric
realization of a simplicial space for every $k.$)

We define $P_i(X)$ as 
a $2-$pullback of $\Pi_k(G(X))$ which fills
in a square similar to the $2-$pullback square in 
Eq.\ (\ref{GrpdP1Pbck}) above
\begin{equation}
\begin{CD}
P_k(X) \simeq (\phi \downarrow p_k) @>>> 
\Pi_k(G(X)) \simeq \Pi_1(\Omega^kG(X)) \\
@VVV   @VV{p_k}V \\
P_0(X) \simeq ((\M)_0)(X)) @>>{\phi}> \pi_0(G(X)) \simeq \pi_0(\M(X)).  \label{GrpdPkPbck} \\
\end{CD}
\end{equation}

We now show that the functors $P_k(X)$ must be zero
if $k \geq 2.$ This might seem redundant, but we will
use this definition of higher Topological T-duality
functors in Sec.\ (\ref{SecPropP0P1}) below 
and hence are introducing them here.

\begin{theorem}
Let $G(X)$ be the nerve of the groupoid $\M(X)$
as described above. Let $P_k(X)$ be the higher
Topological T-duality functors defined by $2-$pullback
in Eq.\ (\ref{GrpdPkPbck}) above.
\label{ThmPikG}
\leavevmode
\begin{enumerate}
\item All the homotopy groups
$\pi_i(G(X)) = 0$ whenever $i \geq 2.$
\item For all topological spaces $X,$ the higher 
Topological T-duality functors 
$P_k(X)$ for all $k \geq 2$ are the trivial groupoid
whose set of objects is $P_0(X)$ and whose only arrows
are the identity map.
\end{enumerate}
\end{theorem}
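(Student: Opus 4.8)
The plan is to prove part (1) first and then deduce part (2) as a formal consequence. For (1), the essential observation is that $G(X)$ is, by its construction in SubSec.\ (\ref{SSecGXCX}), the geometric realization of the nerve of the \emph{groupoid} $\M(X) \simeq \pi(\F_X)$, where $\pi(\F_X)$ is a groupoid by Lem.\ (\ref{LemPiFXGrpd}) and Thm.\ (\ref{ThmMGFMM}); moreover the simplicial set $\F(X) \simeq \Simp(\M(X))$ is a Kan complex by Lem.\ (\ref{LemFKan}), so its topological homotopy groups coincide with its simplicial homotopy groups. I would then invoke the classical fact that the classifying space of a small groupoid is a $1$-type: for a groupoid $\G$ one has $B\G \simeq \coprod_{[x]\in\pi_0(\G)} K(\Aut_x(\G),1)$, equivalently the nerve of a groupoid has unique horn fillers in dimensions $\geq 2$, so $\pi_n(B\G,x)=0$ for $n\geq 2$ while $\pi_1(B\G,x)\simeq\Aut_x(\G)$ and $\pi_0(B\G)\simeq\pi_0(\G)$. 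Applied to $\G=\M(X)$ this gives $\pi_i(G(X))=0$ for all $i\geq 2$, which is part (1); note this is the fact already used implicitly in the proof of Thm.\ (\ref{ThmSSeqPi1FG}) and Cor.\ (\ref{CorPi1GX}), and alternatively one can re-derive it directly from the homotopy spectral sequence of the skeletal filtration exactly as in the proof of Thm.\ (\ref{ThmSSeqPi1FG}): because $\M(X)$ is a groupoid the fibers $F_s$ contribute to $E_1^{s,t}$ only in a range that leaves nothing in total degree $\geq 2$ on $E_\infty$.

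For part (2), recall from Eq.\ (\ref{GrpdPkPbck}) that $P_k(X)$ is defined, for $k\geq 2$, as the comma object (a $2$-pullback) of the cospan $P_0(X)\overset{\phi}{\to}\pi_0(G(X))\overset{p_k}{\leftarrow}\Pi_k(G(X))$, with $\Pi_k(G(X))\simeq\Pi_1(\Omega^{k-1}G(X))$. Using part (1) I would show these higher homotopy groupoids are essentially trivial. Since $\pi_n(\Omega^{k-1}G(X))\simeq\pi_{n+k-1}(G(X))$ vanishes whenever $n+k-1\geq 2$, for every $k\geq 2$ the iterated loop space $\Omega^{k-1}G(X)$ has no homotopy in any positive degree, hence is homotopy equivalent to the discrete set $\pi_{k-1}(G(X))$; in particular for $k\geq 3$ it is connected and homotopy discrete, hence contractible. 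Therefore $\Pi_k(G(X))$ is a \emph{discrete} groupoid for every $k\geq 2$ (the trivial one-object groupoid when $k\geq 3$), with no non-identity arrows. Feeding a discrete groupoid and the discrete set $\pi_0(G(X))$ into the comma-object construction forces every morphism of $\phi\downarrow p_k$ to be an identity, and, identifying its objects with $P_0(X)$ by the same basepoint/section bookkeeping used for $k=1$ in Thm.\ (\ref{ThmP1Pbck}) (the section $a:\pi_0(G(X))\hookrightarrow P(X)$ of Lem.\ (\ref{LemPi0AG}) pins down the required choices), one concludes that $P_k(X)$ is the trivial groupoid on the set $P_0(X)$.

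I expect the main obstacle to be precisely that last identification step: converting "$\Pi_k(G(X))$ is a discrete groupoid" into "$P_k(X)$ is the trivial groupoid on $P_0(X)$" requires unwinding the two-categorical pullback carefully and checking that the map $p_k$ together with the basepoint conventions does not shrink the object set of $\phi\downarrow p_k$ below $P_0(X)$. The case $k=2$ deserves to be done by hand, since $\Omega G(X)$ is only homotopy discrete (a priori disconnected) rather than contractible, and it should be compared directly with the $k=1$ computation of Thm.\ (\ref{ThmP1Pbck}). The purely homotopy-theoretic inputs — that the classifying space of a groupoid is a $1$-type, and that iterated loop spaces of $1$-types are homotopy discrete or contractible — are entirely standard and present no difficulty; all the care is in reconciling the $2$-pullback of Eq.\ (\ref{GrpdPkPbck}) with the naive description, which is an exercise in the definitions already set up in Secs.\ (\ref{SecGXHot})--(\ref{SecP_i}).
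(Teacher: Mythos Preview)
Your proposal is correct and follows essentially the same route as the paper: for part (1) the paper also invokes that $\M(X)$ is a groupoid and that the nerve of a groupoid is a $1$-type (citing the standard simplicial homotopy fact), and for part (2) the paper uses the same loop--suspension adjunction $[S^1,\Omega^{k-1}W]\simeq\pi_k(W)$ to conclude $\Pi_k(G(X))$ is trivial and then reads off the $2$-pullback. Your treatment of part (2) is in fact slightly more careful than the paper's --- you correctly separate the case $k=2$ (where $\Omega G(X)$ is merely homotopy discrete, not contractible) from $k\geq 3$, and you flag the bookkeeping needed to identify the object set of the comma object with $P_0(X)$, points the paper glosses over.
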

\begin{proof}
\leavevmode
\begin{enumerate}
\item By construction the category $\M(X)$
is a groupoid for any $X.$
Also $G(X)$ is actually the geometric realization
of the nerve of this groupoid. 

The simplicial set
associated to the nerve of a groupoid $\G$ 
only has nontrivial 
homotopy groups $\pi_k$ when $k \leq 1$ with $\pi_1(x)$
for any object $x$ in $\G$
the automorphism group of a given pair
(see Item (4) Lem.\ (4.1) in Ref.\ \cite{nLab-SimpHotGrp}).
  
By Lem.\ (\ref{LemHotAG}), $\pi_k(G(X),p_X) \simeq \pi_k(G(X))$ 
which is zero if $k \geq 2.$
\item For any connected space $W,$ we have that
$[S^1, \Omega^{k-1} W] \simeq [\Sigma^{k-1} S^1, W]
\simeq [S^k, W] \simeq \pi_k(W).$ 
So if $\pi_k(W) \simeq 0,$ then $[S^1, \Omega^{k-1} W] \simeq 0.$

Due to this argument and the previous item, 
the fundamental groupoid
$\Pi_1(\Omega^{k-1} G(X))$ must be trivial. Also, the pullback 
in the diagram Eq.\ (\ref{GrpdPkPbck}) above will be trivial
for every connected component of $\Omega^{k-1} G(X).$ 
Thus by the previous part the value of $P_k(X)$ over
any $x \in P_0(X)$ is the trivial group for 
$k \geq 2.$
\end{enumerate}
\end{proof}

The above results let us determine the geometry of $G(X):$
\begin{corollary}
Let $X, G(X)$ be as in the previous theorem. Then, the universal cover of
$G(X)$ is contractible. Also, 
$G(X) \simeq BP_1(X)$ where $P_1(X)$ is defined after Lem.\  (\ref{CorPi1GX}) above.
\end{corollary}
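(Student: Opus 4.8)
The plan is to read off both assertions from the vanishing of the higher homotopy groups of $G(X)$ established in Thm.\ (\ref{ThmPikG}) together with the computation of $\pi_1(G(X))$ in Cor.\ (\ref{CorPi1GX}). Recall from Eq.\ (\ref{EqGXZAlph}) that $G(X) \simeq \coprod_{\alpha \in A(X)} Z_{\alpha}$ is a countable disjoint union of connected $CW$-complexes, each $Z_{\alpha}$ being a connected component of the geometric realization of the nerve of the groupoid $\M(X)$.

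First I would treat the universal cover. Fix a component $Z_{\alpha}$ with a basepoint $z_{\alpha}$ (for the statement about $G(X)$ itself one takes the component containing $p_X$, or argues componentwise). By Thm.\ (\ref{ThmPikG}) we have $\pi_k(Z_{\alpha}) \simeq \pi_k(G(X)) = 0$ for all $k \geq 2$, so $Z_{\alpha}$ is aspherical, i.e.\ a $K(\pi_1(Z_{\alpha},z_{\alpha}),1)$. Let $q:\widetilde{Z_{\alpha}} \to Z_{\alpha}$ be the universal cover; it is again a $CW$-complex, it is simply connected, and $q_{\ast}$ is an isomorphism on $\pi_k$ for $k \geq 2$, so $\pi_k(\widetilde{Z_{\alpha}}) = 0$ for every $k$. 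Hence $\widetilde{Z_{\alpha}}$ is weakly contractible, and being a $CW$-complex it is contractible by Whitehead's theorem. This gives the first assertion.

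Next I would identify $G(X)$ with $BP_1(X)$. A connected aspherical $CW$-complex is a model for the classifying space of its fundamental group, so $Z_{\alpha} \simeq B\pi_1(Z_{\alpha},z_{\alpha})$. By Cor.\ (\ref{CorPi1GX}) and $\Pi_1(G(X)) \simeq \M(X)$ (valid since $\M(X)$ is a groupoid), together with Thm.\ (\ref{ThmP1Pbck}) and the $2$-pullback square Eq.\ (\ref{GrpdP1Pbck}), the groupoid $P_1(X)$ decomposes into $\pi_0(G(X))$-many components, with automorphism group $\Aut_{x_{\alpha}}(\M(X)) \simeq \pi_1(G(X),z_{\alpha})$ attached to the component indexed by $\alpha \in A(X)$ (picking representatives $x_{\alpha}$ via the section $a:\pi_0(G(X))\hookrightarrow P(X)$ of Lem.\ (\ref{LemPi0AG})). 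Since the classifying space of a groupoid is the disjoint union, over its components, of the classifying spaces of the automorphism groups at representative objects, one obtains $BP_1(X) \simeq \coprod_{\alpha \in A(X)} B\pi_1(Z_{\alpha},z_{\alpha}) \simeq \coprod_{\alpha \in A(X)} Z_{\alpha} \simeq G(X)$.

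The step I expect to be the real obstacle is the bookkeeping of connected components in the last paragraph: the surjection $\phi:P(X) \twoheadrightarrow \pi_0(G(X))$ of Lem.\ (\ref{LemPP0phi}) is in general not injective, so the comma object $\phi \downarrow p$ literally has more connected components than $G(X)$; one must replace it, up to equivalence of groupoids, by its restriction along the section $a$ — equivalently by the groupoid written (in the notation of the paper) as the $\coprod_{\alpha \in A(X)} \Aut_{x_{\alpha}}(\M(X))$ of Eq.\ (\ref{GrpdP0P1}) — before applying $B(-)$, and it is only for this model that the classifying space reproduces $G(X)$ on the nose. Once this identification of $P_1(X)$ with the core groupoid $\pi(\Core(\C_X))\simeq\pi(\F_X)$ restricted to representatives is made explicit, the remainder is a routine application of Whitehead's theorem and of the standard computation of the homotopy type of the nerve of a groupoid.
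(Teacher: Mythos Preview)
Your argument for the contractibility of the universal cover is essentially identical to the paper's: both invoke Thm.\ (\ref{ThmPikG}) to kill $\pi_k$ for $k\geq 2$, pass to the universal cover, and conclude via Whitehead (the paper says ``Hurewicz theorem'' but means the same thing). For the second assertion the paper is terser and slightly different in packaging: rather than decomposing the groupoid $P_1(X)$ and taking $B(-)$ componentwise, it simply notes that $P_1(X)\simeq\pi_1(G(X))$ is a group acting freely on the contractible universal cover $\tilde G(X)$ by deck transformations, so $\tilde G(X)\simeq EP_1(X)$ and $G(X)\simeq BP_1(X)$.

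Your closing paragraph about the component count is not an error on your part but a genuine ambiguity you have correctly detected in the paper. The paper uses two inequivalent models for $P_1(X)$: Thm.\ (\ref{ThmP1Pbck}) gives it object set $P_0(X)\simeq P(X)$, whereas Eq.\ (\ref{GrpdP0P1}) and the proof of the present corollary index it by $A(X)\simeq\pi_0(G(X))$ (indeed the proof treats $P_1(X)$ as the single group $\pi_1(G(X))$, implicitly assuming $G(X)$ connected). Your fix via the section $a:\pi_0(G(X))\hookrightarrow P(X)$ is the right way to reconcile the two, and is more careful than what the paper writes.
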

\begin{proof}
Now, if $\tilde{G}(X) \to G(X)$ is the universal cover of $G(X),$ 
by Thm.\ (\ref{ThmPikG}), $\pi_i(\tilde{G}(X)) =0, i \geq 1.$
Hence, $\tilde{G}(X)$ must be contractible by Hurewicz
theorem. Now $P_1(X) \simeq \pi_1(G(X))$ is a group 
acting freely on $\tilde{G}(X)$ so,
this implies that $\tilde{G}(X) \simeq EP_1(X)$ and $G(X) \simeq BP_1(X).$
\end{proof}

\subsubsection{Physical Significance of $P_k(X)$}
Due to background independence, 
SFT possesses gauge theory-like symmetries 
which correspond to background preserving and background
changing transformations of the underlying spacetime
background. As we had discussed in
SubSec.\ (\ref{SSecSFT-TTD}) above the morphisms
in the category $\C_X$ consist of both background
preserving and background changing transformations
of SFT. 

However, in this paper we have restricted
our attention to morphisms in the category
$\pi(\F_X)$ and used only these morphisms
to construct $P_0(X), P_1(X).$
The morphisms in 
$\pi(\F_X)$ are equivalent to pullback
isomorphisms in $\C_X$ covering a nontrivial
self-homeomorphism of the base $X$
and hence correspond
to background preserving and background changing 
gauge transformations of pairs over
$X$ in SFT. 
In SubSec.\ (\ref{SecPropP0P1}) we argue that it
should be possible to construct more complex invariants
which correspond to arbitrary transformations in 
SFT 
(these correspond to arbitrary morphisms in $\C_X$).

From the discussion in this section we see that the 
Bunke-Schick Functor $P_0$ and its extension $P_1$
defined here have the same value on
the set of pairs isomorphic to a given pair $([p],H):$
\begin{enumerate}
\item \label{Enum:P0} The isomorphism class of the pair under isomorphism
of pairs (not the isomorphism class of the circle
bundle structure $p:E \to X$ which is the class 
$[p] \in H^2(X,\KZ)$).
The set $P_0(X)$ is exactly the set of
these isomorphism classes.
\item \label{Enum:P1} All possible automorphisms of a 
fixed pair which cover any self-homeomorphism of the 
base $X$ of the pair. As described in 
Cor.\ (\ref{CorPi1GX}) above
$P_1(X)$ is constructed from these automorphisms. 
\end{enumerate}

It is strange that these two invariants of a pair are not
exactly $[p]$ and $H,$ but the two quantities which
are connected to $P_0(X), P_1(X)$ described above. 
This is because isomorphisms of pairs can change $[p]$  
and $H$ as discussed in Sec.\ (\ref{SecIntro}) between 
Lem.\ (\ref{LemPairHom}) and Lem.\ (\ref{LemHotIso}). 

By the arguments in SubSec.\ (\ref{SSecSFT-TTD})
the isomorphism class of a 
pair $(E,[H])$ in $\C$ described in 
Item (\ref{Enum:P0}) above 
corresponds in SFT 
to equivalence classes 
of data of the form $G,B$ associated to 
backgrounds $E$ over $ X$ 
under the background preserving gauge transformations 
of SFT $\delta E$
(described in that subsection)
which preserve the principal 
bundle structure with base $X.$ 
The functor $P_0$ takes the same value on each 
equivalence class of $G, B$ above and hence its 
value labels the equivalence classes of these backgrounds.

The functor $P_1$ takes the same value on
equivalence classes of automorphisms of a fixed pair 
$(E \to X, [H])$ which cover a non-trivial 
self-homeomorphism of the base $X$ of the pair
as mentioned in Item (\ref{Enum:P1}) above.
In SFT, these correspond to 
background-changing gauge transformations of the
spacetime background associated to the pair since they 
correspond to diffeomorphisms of the spacetime background 
associated to the pair which are not homotopic to the 
identity and hence might change the characteristic
class of the underlying principal circle bundle or its 
$H-$flux---see discussion in Sec.\ (2.1) of Ref.\ \cite{KZw} 
especially around Eq.\ (2.22). 

\section{Properties of $P_0$ and $P_1$ \label{SecPropP0P1}}

In this section we discuss some elementary properties of the functors $P_i.$ We first discuss a way of calculating pairs
over $X$ using the relative functor $P_i(X,A)$ defined above. 
Then we discuss two possible natural 
extensions of the functors $P_i(X).$


\subsection{Exact Sequence for $P_i$}
If $G(X)$ is connected, then the long exact sequence of
homotopy groups for a space gives us a long exact
sequence for the functors $P_0, P_1$ and also relative
versions of these functors: For $j:A \hookrightarrow X$
and each $i \geq 0,$ we have an inclusion
$P_i(X) = \pi_i(G(X))$ and a relative functor
$P_i(X,A) = \pi_i(G(X),G(A)).$
Note that for $i >0 ,$ each of the functors
$P_i(X), P_i(X,A)$ are valued in  groups. 

Let $A \hookrightarrow X$ be an inclusion. We have an induced map by pullback 
$C(X) \to C(A).$ Restriction of pairs induces a map of simplicial sets $\phi:G(X) \to G(A).$
By changing $G(X), G(A)$ up to homotopy we can view $\phi$ as a fibration
with homotopy fiber $K\phi$ such that $K\phi \hookrightarrow G(X) \to G(A)$
is a short exact sequence of spaces.
We have a long exact sequence of higher homotopy groupoids
(see Ref.\ \cite{GraVit} Sec.\ (2), Cor.\ (4))
\begin{equation}
\ldots \to \Pi_n(K\phi) \to \Pi_n(G(X)) \to \Pi_n(G(A)) \to \ldots.
\label{EqSESPik1}
\end{equation}
Due to Thm.\ (\ref{ThmPikG}) above Eq.\ (\ref{EqSESPik1}) reduces to
\begin{equation}
0 \to \Pi_1(K\phi) \to \Pi_1(G(X)) \to \Pi_1(G(A)).
\label{EqSESPik2}
\end{equation}

Under the $2-$pullback construction we will obtain another exact
sequence for $P_i$ since $2-$pullback is left exact.
Hence we should obtain
\begin{equation}
0 \to P_1(X,A) \to P_1(X) \to P_1(A)
\label{EqSESPik}
\end{equation}
where $P_1(X,A)$ is defined as the pullback of $\Pi_1(K\phi)$ along $\phi$ as in 
Thm.\ (\ref{ThmPikG}) above. Thus, there is a short exact sequence relating
$P_1(X)$ to $P_1(A).$

If $G(X)$ is connected we can show that $P_i(X) \simeq \pi_i(G(X)).$
For such spaces $X$ we can prove the following.
Pick a basepoint $ \{ \ast \} \subseteq A \subseteq X.$
This gives a map $$G(X) \overset{\phi}{\to} G(A) \to G(\ast).$$
The unique trivial pair over the basepoint
gives a natural basepoint in $G(A) \subset G(X).$
 
Define the fiber $F$ of the above map to be the fiber over $G(\ast).$ 
This is the collection of pairs in $X$ which restrict to the trivial pair over $A.$
We have a fibration $$F \hookrightarrow G(X) \to G(A).$$ 
Note that the above property is true of $G(A)$ as well as $G(X).$
Hence, we obtain from the fiber homotopy sequence and the above that
$$
\pi_n(F) \simeq  0, n \geq 2
$$
and 
$$
0 \to \pi_1(F) \to P_1(X) \to P_1(A) \to \pi_0(F) \to P_0(X) \to \{ \ast \}
$$
\begin{lemma}
Suppose $A$ is such that every pair over $X$ restricts to the
trivial pair over $A.$ Then, $\pi_i(F) \simeq P_i(X), i=0,1.$
\end{lemma}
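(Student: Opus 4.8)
The plan is to read off the claim directly from the fibration $F \hookrightarrow G(X) \to G(A)$ and the long exact homotopy sequence already displayed just before the lemma, together with the hypothesis on $A$. First I would make the hypothesis precise: saying that every pair over $X$ restricts to the trivial pair over $A$ means that the restriction map $\phi: G(X) \to G(A)$ factors (up to the chosen basepoint) through $G(\ast)$, so that the fiber $F$ over the basepoint of $G(\ast)$ is all of the relevant part of $G(X)$; more usefully, it forces the induced map $\phi_\ast: \pi_i(G(X)) \to \pi_i(G(A))$ to be the zero/constant map for each $i$, since every element of $\pi_i(G(X))$ is represented by (homotopy classes of) pairs and paths of pairs over $X$, all of which restrict to the trivial pair, hence to the constant map into $G(A)$.

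Next I would invoke the long exact sequence of the fibration $F \hookrightarrow G(X) \to G(A)$, which appears in the excerpt in the form
\begin{equation}
\pi_n(F) \simeq 0, \ n \geq 2, \qquad 0 \to \pi_1(F) \to P_1(X) \to P_1(A) \to \pi_0(F) \to P_0(X) \to \{\ast\}. \nonumber
\end{equation}
Since $\phi_\ast$ is trivial in every degree by the previous paragraph, exactness immediately gives that the connecting maps $\pi_i(G(X)) \to \pi_i(G(A))$ are zero, so the sequence breaks into short exact pieces showing $\pi_i(F) \to \pi_i(G(X))$ is surjective, and it is also injective because the preceding term $\pi_{i+1}(G(A))$ maps trivially into $\pi_i(F)$. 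Concretely: for $i=1$ the five-term sequence reads $0 \to \pi_1(F) \to P_1(X) \to 0$ (using that the map $P_1(X) \to P_1(A)$ is zero), giving $\pi_1(F) \simeq P_1(X)$; and for $i = 0$, the segment $P_1(A) \to \pi_0(F) \to P_0(X) \to \{\ast\}$ combined with the triviality of $P_1(X) \to P_1(A)$ (so the image of $P_1(A) \to \pi_0(F)$ must be reconciled with exactness at $\pi_0(F)$) yields $\pi_0(F) \simeq P_0(X)$. I would also note that $P_i(X) \simeq \pi_i(G(X))$ for $i = 0,1$ holds here because $G(X)$ may be taken connected (as stated in the paragraph before the lemma), so the $2$-pullback defining $P_i(X)$ collapses to $\pi_i(G(X))$.

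The one point requiring a little care — and the main obstacle — is the behavior at $\pi_0$ and the interaction of the (possibly nonabelian) group $P_1(A)$ acting on $\pi_0(F)$: exact sequences of homotopy groups at the $\pi_1$–$\pi_0$ junction are only exact as pointed sets with a group action, so I would argue that the hypothesis (every pair over $X$ is trivial over $A$, hence in particular $P_0(A)$ is a single point and the map $P_0(X) \to \{\ast\}$ is what it is) forces the orbit structure to be trivial, so that the pointed-set isomorphism $\pi_0(F) \simeq P_0(X)$ is genuine and not merely a bijection of orbit sets. I expect this to be a short bookkeeping argument rather than a real difficulty, since both $P_0(X)$ and $\pi_0(F)$ are, by construction via $A(X)$ and Lem.~(\ref{LemPi0AG}), at most countable discrete sets and the relevant maps are the obvious ones. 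Finally I would assemble the two isomorphisms $\pi_i(F) \simeq P_i(X)$, $i = 0, 1$, into the statement of the lemma.
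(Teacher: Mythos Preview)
Your first paragraph already contains the paper's entire argument, but you do not recognize it as sufficient and instead take an unnecessary detour. The paper's proof is simply: under the hypothesis every pair over $X$ restricts to the trivial pair over $A$, so by definition of $F$ (the strict fiber over the trivial-pair basepoint) every $0$-simplex of $G(X)$ lies in $F$; hence $F$ is homotopy equivalent to $G(X)$ and $\phi$ is nullhomotopic, giving $\pi_i(F)\simeq\pi_i(G(X))\simeq P_i(X)$ immediately. You say exactly this (``the fiber $F$ \ldots\ is all of the relevant part of $G(X)$'') and then abandon it for the long exact sequence.

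The exact-sequence route you pursue instead has a genuine gap at $\pi_0$, which you notice but do not close. From $P_1(X)\to P_1(A)$ being zero, exactness at $P_1(A)$ tells you that $P_1(A)\to\pi_0(F)$ is \emph{injective}, not trivial. So if $P_1(A)\neq 0$ your sequence only gives a surjection $\pi_0(F)\twoheadrightarrow P_0(X)$ with fibers the $P_1(A)$-orbits, and nothing you have written forces those orbits to be singletons. (Indeed, if one reads $F$ as the homotopy fiber of a nullhomotopic map $\phi$, one gets $F\simeq G(X)\times\Omega G(A)$ and hence $\pi_0(F)\cong P_0(X)\times P_1(A)$, so the discrepancy is real at that level.) The resolution is precisely the observation you discarded: with $F$ taken as the strict fiber, the hypothesis makes $F=G(X)$ on the nose, and the exact sequence is not needed.
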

\begin{proof}
Suppose $A$ satisfies the hypotheses above. Then, every pair in $X$ is a trivial pair over $A$ 
and hence belongs to $F.$ Thus, $F$ is homotopy equivalent
to $G(X)$ and the map $G(X) \to G(A)$ is nullhomotopic.  Hence, $\pi_1(F) \simeq P_1(X)$
and $\pi_0(F) \simeq P_0(X).$
\end{proof}

For any space $X$ and for $A = \ast,$ there can only be one pair over $A$ the trivial pair over $A.$
Hence, by the above theorem $\pi_i(F) \simeq P_i(X), i=0,1.$ However the 
fiber $F$ above is difficult to caculate for most examples.

\subsection{Generalizations}
We now propose two extensions of the formalism presented
in this paper which would help model SFT
backgrounds better. 

Firstly we argue below that changing the
category $\pi(\F_X)$ 
which we used to study Topological T-duality
functors would give a better insight into the structure 
of the set of SFT gauge transformations
acting on pairs over $X.$  We suggest that doing this
would let us define new functors $P'_k(X)$ so that
$P'_0(X) \simeq P_0(X) , P_1'(X) \simeq P_1(X)$ but 
$P'_k(X)$ are not automatically trivial for $k \geq 2.$

Secondly, we argue that it is important to describe 
spacetimes which carry extra data such as Ramond-Ramond
fields or derived functor data on them. We suggest
a natural extension of the formalism in this paper using
Thomason Cohomology of small categories.

\subsubsection{Nontrivial $P_k, k \geq 2$ \label{SSecNonTrivP_k}}
Recall that in SubSec.\ (\ref{SSecSFT-TTD})
we had defined $\pi(\F_X)$ as the 
collection of gauge transformations which act
on pairs over $X.$ We had also defined
$G(X) \simeq B \pi(\F_X).$
We had defined the higher functors
$P_i$ in the above by
using the fundamental groupoid $P_1(X) = \Pi_1(G(X)).$ 
We showed in Sec.\ (\ref{SecP_i}) above that
if we define higher functors $P_i(X)$ from the
higher groupoids $\Pi_k(G(X))$ all $P_k(X)$ are
trivial.

The reason for this is that $G(X)$ has no structure in
degrees higher than one. This is due to the structure
of the subcategory $\pi(\F_X)$ of $\HC$ since
the classifying space $B \pi(\F_X)$ 
is the $1-$skeleton of $B \pi(\C_X)$
and the inclusion $B \pi(\F_X) \hookrightarrow B \pi(\C_X)$ induces an
isomorphism on homotopy groups up to degree $1.$

In addtion in SubSec.\ (\ref{SSecSFT-TTD}) we had argued
that $\pi(\F_X)$ should be viewed as a collection of
SFT gauge transformations which
act on isomorphism classes of 
pairs over $X$ and yield new pairs over $X.$
Each of these transformations lifts to a
pullback isomorphism of pairs which
cover a nontrivial self-homeomorphism of the base $X.$
However, these are not all the gauge transformations
that are possible in SFT. 

In order to obtain nontrivial higher Topological T-duality
functors it is natural to consider a larger set of gauge
transformations. Since our category $\C_X$ has as objects
the collection of all pairs over $X,$ the largest set we 
can consider is the set of gauge transformations
which could change the isomorphism class of a pair but
leave $X$ alone. In particular, we would need to consider
pullback morphisms of pairs over $X$ which cover arbitrary
morphisms of $f:X \to X$ i.e. morphisms which are not
self-homeomorphisms of $X.$

As a result, we consider the subcategory
$\pi(\C_X)$ of $\HC$ since as we had argued above
the subcategory
$\pi(\F_X)$ may be viewed as an approximation of 
$\pi(\C_X).$ This implies we use
the classifying space $B\pi(\C_X)$ in defining
the functors $P'_k(X)$ above
instead of the classifying space $B\pi(\F_X)$ 
since the space $B\pi(\C_X)$ includes $B\pi(\F_X)$ 
as its $1-$skeleton. Hence, the resulting functors
$P'_k(X) \simeq \pi_k(B(\C_X))$ should agree with 
$P_0(X)$ and $P_1(X)$
when $k =1,2,$ and should not be automatically 
trivial when $k \geq 2.$

It is important to study the above extension 
for physical reasons since the collection of all gauge
transformations of SFT which leaves $X$
invariant must be $\C_X$ and hence the effects of
these gauge transformations on pairs up to isomorphism
of pairs must be encoded in the category $\pi(\C_X).$

It is easy to argue that the classifying space
$B\C_X$ must have nontrivial higher homotopy
groups. For example, maps $f:X \to X$ 
which are not homeomorphisms must contribute a
$2-$cell to the classifying space $B\C_X$ since, by
the discussion in SubSec.\ (\ref{SSecMorPXY}), any
such map may be factorized uniquely up to homotopy
equivalence as $f = h \circ g$ with $h:X \to M_f$ a
cofibration and $g:M_f \to X$ a homotopy equivalence
as in SubSec.\ (\ref{SSecMorPXY}). 

Note that if $f$ is a homeomorphism $f:X \to X,$ this
construction collapses since then $h$ is also a homotopy
equivalence and $f$ is homotopy equivalent to $h$ --- in
this case the glued $2-$cell must be homotopic to a $1-$cell
by the definition of simplicial realization of a category. Thus
we will recover the result of this paper if we restrict 
ourselves to the subcategory $\pi(\F_X)$ of $\pi(C_X).$

Recall we had defined a functor $\G: \mathbf{CW}^{op}
\to \mathtt{Cat}$ in Thm.\ (\ref{ThmGFunc})
which assigned to any topological
space $X$ the small category $\C_X$ and assigned
pullback functors to morphisms $f:Y \to X.$
If we could repeat the constructions in
this paper with the functor $\G$
above we might be able to define higher Topological
T-duality functors $P_k(X)$ which are nontrivial when 
$k \geq 2.$ We had proved in the previous sections
that $P_1(X)$ was an isotopy
invariant. It would be interesting to study the
extension described above further.

\subsubsection{Flux backgrounds}
We now examine another matter which is interesting
physically. Many physical spacetimes carry extra data
on them apart from the $H-$flux and graviton fields
for example Ramond-Ramond fields or $D-$brane
configurations. These backgrounds have been studied
in string field theory.

It is interesting to ask if these types of backgrounds can 
also be modelled in the formalism developed in this paper.
It is clear that a first step in including such 
backgrounds into the class of backgrounds
studied in this paper would be to require more data 
to be assigned to a pair. As an example, 
to the pair $(E,H)$ we could assign a derived
functor naturally associated to that pair together with
the data in the pair $(E,H).$

Thus, we would need to make a new category similar
to the category of pairs $\C$ which would consist
of spacetime backgrounds with the required extra data.
It is not clear how to do this naturally, so that the new
backgrounds map naturally under morphisms of pairs.

We suggest the following partial solution to this matter:
A natural way to assign more data to a pair is to use
Thomason Cohomology for small categories in particular
for the small category $\C_X$ or an extension of
$\C_X$ (we had shown in SubSec.\ (\ref{SSecDefCX}) 
above that $\C_X$ is  small category for every $X$).
An exposition of the homology and cohomology of small
categories can be found in Chap.\ (16) of 
Ref.\ \cite{BRichter} especially Def.\ (16.1.1)).

A simple example of Thomason Cohomology for 
small categories is the following extension of 
the higher Topological T-dualiy functors to 
Topological T-duality functors with coefficients and 
Topological T-duality functors associated to a spectrum.

By analogy with the usual definition of homology
with coefficients we may define
Topological T-duality functors with `coefficients in a group 
$K$' as
$Q_i^K(X) = \pi_i(G(X)\wedge \K)) = \tilde{H}_i(G(X), K)$ 
where $\tilde{H}_i(X,K)$ is the {\em reduced homology} of $X$
with coefficient group $K,$ the group $\pi_i$ is 
the homotopy group 
of a spectrum and $G(X) \wedge \K$ denotes the 
smash product of 
$G(X)$ with the Eilenberg-Maclane Spectrum $\K$ 
associated to $K.$ 

In addition if $S$ is an arbitrary spectrum we can define 
$Q_i^S(X) = \pi_i(G(X)\wedge S)$ to be the
Toplogical T-duality functors twisted by a spectrum
$S$ and this would be a generalization of the above.
All these functors should be invariant under the action
of Topological T-duality on $G(X)$ just as the functors
$\pi_i(G(X)), i=0,1$ are.

The assignment described in the previous two paragraphs 
above is clearly a Thomason natural system on $\C_X$ 
with values in a category $\E$ or a contravariant
Thomason natural system on $\C_X.$ 
Here, we could pick $E$ to be a suitable chain complex
giving us the required cohomology
and the resulting cohomology theory on $\C_X$ would
carry an action of Topological T-duality. 
It would be interesting to study this further.

\section{Isomorphism Classes of Triples and $P_1(X)$ \label{SecP32P1}}
We know that $P_1(X)$ is generated by closed loops in the space $G(X).$
We now show that these loops are in one-to-one correspondence with
classes in $H^2(E_{\alpha},\KZ)$ for some principal circle bundle 
$p_{\alpha}: E_{\alpha} \to X:$ 
 
\begin{theorem}
There is a one-to-one correspondence between homotopy classes of loops in $\pi_1(G(X))$ and classes in 
$H^2(E_{\alpha},\KZ)$ for $\alpha \in A(X).$
\label{ThmPi1H2}
\end{theorem}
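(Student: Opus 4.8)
The plan is to combine the free-product description of $\pi_1(G(X))$ from Cor.\ (\ref{CorPi1GX}) with the mapping-torus dictionary of Lem.\ (\ref{LemHomTf}) and a Wang/Gysin exact-sequence computation.

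First I fix $\alpha\in A(X)$ and the basepoint pair $(E_\alpha,H_\alpha)$ in the connected component $Z_\alpha\subseteq G(X)$. By Cor.\ (\ref{CorPi1GX}) the group $\pi_1(G(X),(E_\alpha,H_\alpha))$ is a free product of cyclic groups $G_\beta$ (each $\KZ$ or some $\KZ_m$) whose factors are indexed by the automorphisms $\beta$ of $(E_\alpha,H_\alpha)$ in $\pi(\F_X)$; so it suffices to set up a natural bijection between the set of these automorphisms and $H^2(E_\alpha,\KZ)$ matching an infinite-order class with a $\KZ$ factor and an $m$-torsion class with a $\KZ_m$ factor. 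By Thm.\ (\ref{ThmGXP1}) and Lem.\ (\ref{LemPiFXGrpd})(2), each such automorphism lifts, uniquely up to homotopy of pairs, to a pullback isomorphism of pairs $\tilde f\colon (E',H')\to(E'',H'')$ in $\F_X$ with $[E']=[E'']=[(E_\alpha,H_\alpha)]$ covering a self-homeomorphism $f\colon X\to X$; by Lem.\ (\ref{LemHomTf}) this is the same datum as an isomorphism class of pair $(\hat E\to T_fX,\hat H)$ on the mapping torus $T_fX$ whose bundle restricts to $E_\alpha$ over the fibre $X\hookrightarrow T_fX$ and whose flux restricts to $H_\alpha$. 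Using Lem.\ (\ref{LemHotIso}) I may take $f$ in its homotopy class, and after absorbing the $MCG(X)$-part from Thm.\ (\ref{ThmPairMCG}) it is enough to treat the generators with $f\simeq\id_X$, for which $T_fX\simeq X\times S^1$ and $\hat E$ is the mapping torus $T_{\tilde f}(E_\alpha)$ of a bundle automorphism $\tilde f$ of $E_\alpha$.

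Next I would run the Wang exact sequence of the fibration $\hat E\to S^1$ with fibre $E_\alpha$ and monodromy $\tilde f$,
\[
\cdots \to H^k(\hat E,\KZ)\to H^k(E_\alpha,\KZ)\xrightarrow{\ \tilde f^{*}-\id\ }H^k(E_\alpha,\KZ)\to H^{k+1}(\hat E,\KZ)\to\cdots,
\]
together with the formula $\tilde f^{*}=\id+p_\alpha^{*}\bigl(a\cup\pi_!(-)\bigr)$ for the action on cohomology of a gauge transformation by $a\in H^1(X,\KZ)$, which follows from Lem.\ (\ref{LemPairHom}). Read off in degrees $2$ and $3$ this says: the bundles $\hat E$ over $X\times S^1$ restricting to $E_\alpha$ correspond to classes $a\in H^1(X,\KZ)$ with $p_\alpha^{*}(a\cup\pi_!(H_\alpha))=0$, and for each such $\hat E$ the fluxes $\hat H$ restricting to $H_\alpha$ form a torsor over $\operatorname{coker}\bigl(\tilde f^{*}-\id\colon H^2(E_\alpha,\KZ)\to H^2(E_\alpha,\KZ)\bigr)$. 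I would then show, via the Gysin sequence $H^2(E_\alpha)\xrightarrow{\pi_!}H^1(X)\to H^3(X)\xrightarrow{p_\alpha^{*}}H^3(E_\alpha)$ of $p_\alpha$, that the admissible $a$'s biject with $\im\bigl(\pi_!\colon H^2(E_\alpha,\KZ)\to H^1(X,\KZ)\bigr)$ and that the flux-torsor is canonically a torsor over the complementary subgroup $\im\bigl(p_\alpha^{*}\colon H^2(X,\KZ)\to H^2(E_\alpha,\KZ)\bigr)$, so that the pair $(a,\hat H)$ assembles into a single well-defined class in $H^2(E_\alpha,\KZ)$. Conversely, given a class in $H^2(E_\alpha,\KZ)$ one builds the corresponding bundle-plus-flux over $X\times S^1$ and applies Lem.\ (\ref{LemHomTf}) in reverse to produce a loop, so the assignment is a bijection; tracking orders gives the matching of $\KZ$ and $\KZ_m$ factors, and the disjoint union over $\alpha\in A(X)$ yields the theorem.

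The hard part will be the assembling step: one must verify that the twisting of the flux-torsor by $\tilde f^{*}-\id$ is exactly cancelled by the constraint $p_\alpha^{*}(a\cup\pi_!(H_\alpha))=0$, so that the bundle coordinate and the flux coordinate glue to the \emph{whole} of $H^2(E_\alpha,\KZ)$ rather than a proper sub- or quotient group, and one must dispose of the generators whose monodromy $f$ is not homotopic to $\id_X$ --- either by genuinely absorbing them into the $MCG(X)$-action of Thm.\ (\ref{ThmPairMCG}) or by rerunning the mapping-torus/Wang computation over $T_fX$ directly and checking the answer is still controlled by $H^2(E_\alpha,\KZ)$. Independence of the auxiliary choices (the representatives $f_\gamma$ for classes in $MCG(X)$ and the lift in Lem.\ (\ref{LemPiFXGrpd})) is routine from Lem.\ (\ref{LemHotIso}).
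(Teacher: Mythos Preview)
Your route is quite different from the paper's and, as you yourself flag, runs into a real difficulty at the ``absorbing the $MCG(X)$-part'' step. By construction of $\HC$ (Thm.~\ref{ThmCSimCat}) and of $\pi(\F_X)$, every isomorphism of pairs covering a map $f\simeq\id_X$ is a Bunke--Schick isomorphism and hence maps to the \emph{identity} arrow under $\pi$; so the ``generators with $f\simeq\id_X$'' you propose to analyse via $T_fX\simeq X\times S^1$ contribute only trivial loops in $G(X)=B\pi(\F_X)$. Conversely, Thm.~\ref{ThmPairMCG} says that \emph{all} nontrivial loop-classes are carried by genuinely nontrivial isotopy classes $f$, so there is nothing left after ``absorbing'' them. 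The Wang/Gysin computation you set up is therefore applied to the wrong mapping tori, and the delicate ``assembling'' step you single out never gets off the ground in the case that actually matters.

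The paper sidesteps all of this with a one-line identification you may not have seen: it interprets the pair $(E_\alpha,H_\alpha)$ as a $U(1)$-banded gerbe on $E_\alpha$ with Dixmier--Douady class $H_\alpha$, cites (from Ref.~\cite{Pan2}) a bijection between automorphisms of the pair and automorphisms of that gerbe, and then invokes the standard fact that equivalence classes of automorphisms of a $U(1)$-gerbe on a space $Y$ are classified by line bundles, i.e.\ by $H^2(Y,\KZ)$. Combined with $\pi_1(G(X))\simeq\pi_1(F_1)$ from Thm.~\ref{ThmSSeqPi1FG} and the loop $\leftrightarrow$ automorphism correspondence from Thm.~\ref{ThmGXP1}, this gives the bijection immediately, with no exact-sequence bookkeeping and no need to treat the $f\not\simeq\id_X$ cases separately. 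If you want to salvage your approach, you would have to run the mapping-torus/Wang computation genuinely over $T_fX$ for each isotopy class $f$ and then show the answer is independent of $f$; but the gerbe argument makes this unnecessary.
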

\begin{proof}
Suppose we are given a homotopy class of a loop $[\alpha] \in \pi_1(G(X)).$  Since 
$\pi_1(G(X)) \simeq \pi_1(F_1),$ (as in the proof of Thm.\ (\ref{ThmSSeqPi1FG}) above),  
we naturally obtain a loop in the fiber of the natural map $F_1 \to G_1(X) \to G_0(X).$
This corresponds to a loop in the $1-$skeleton of $G(X).$ By Lem.\  (\ref{LemHotAG}) and
excision, this loop corresponds to  a path in $G(X)/A(X)$ 
between the base point pair $(E, 0)$ and another pair (say $v \simeq (E_{\alpha},H)$) together with a loop corresponding to an 
automorphism of the pair $v.$ 

The pair $v$ is isomorphic to a gerbe with band $U(1)-$with
Dixmier-Douady invariant $H$. There is a one-to-one correspondence between automorphisms
of this gerbe and automorphisms of the associated pair (see Ref.\ \cite{Pan2}).
Hence automorphisms of such gerbes are in one-to-one correspondence with loops in $G(X)$
based at $v.$ Two loops are homotopic if and only if they induce equivalent
automorphisms of $v$ (see discussion before Lem.\ (\ref{LemHomTf}) ).

It is well-known that the equivalence classes of these automorphisms are classified by
elements of $H^2(E_{\alpha},\KZ)$ (see for example Ref.\ \cite{BunkeS1}, Sec.\ (3.1) and also
Ref.\ \cite{Pan2}). 
\end{proof}

In Ref.\ \cite{Pan2} it is shown that the isomorphism classes of
such triples form a set $P_{3,2}(X).$ Are the functors $P_0(X), P_1(X)$ 
connected to the space of triples of Ref.\ \cite{Pan2}
\begin{gather}
X \to P_{3,2}(X) \simeq \mbox{ Equiv. Classes of  triples of the form } \nonumber \\
\mbox{ ( Principal Bundle $E \to X$, Class in $H^2(E,\KZ)$, Class in $H^3(E,\KZ)$ ) } 
\label{EqTriple}
\end{gather}
over $X?$ 

We can also consider assignments  of the form
\begin{gather}
X \to (P_0(X), P_1(X))
\label{EqP0P1Triple}
\end{gather}

It is clear from the arguments in the previous sections
that Eq.\ (\ref{EqTriple}) and Eq.\ (\ref{EqP0P1Triple}) 
are rearrangements of the same data.
It is also clear from the above that Topological T-duality acts on the data in
Eq.\ (\ref{EqP0P1Triple}).
Thus, we should expect that Topological T-duality should act on the data in Eq.\ (\ref{EqTriple}).
We suggest that this action of Topological T-duality on the data in
Eq.\ (EqTriple) above is the Topological T-duality for triples investigated in Ref.\ \cite{Pan2}.

Note that there is a forgetful natural transformation from $P_{3,2} \to (P_0,P_1)$
which sends $\mbox{ Equivalence class of } ([p], H, b)$ to
\begin{gather}
( \mbox{ Equivalence class of } ([p], H), \phi_b(\mbox{ Equivalence class of } ([p], H) ) ) 
\label{EqP32P0P1}
\end{gather}
where $\phi_b$ is an element of $\pi_1(G(X))$ corresponding to $b \in H^2(E,\KZ).$  
It would be interesting to investigate the relationship
between the functor $P_{3,2}$ of Ref.\ \cite{Pan2} and
the above pair $(P_1,P_2).$ Note that for any compact $CW-$complex $X,$
$P_{3,2}(X)$ is {\bf Set}-valued, unlike
$P_1(X)$ above which is a {\bf group}.

Note that $P_{3,2}(X)$ was studied by directly defining a category of T-duality triples over
a base space $X.$ The above functor $(P_0(X), P_1(X))$ is more geometric and depends on
the construction of the space associated to the category of pairs over $X.$ 

{\em \bf  Conjecture: The image of $P_{3,2}(X)$ via the above natural transformation
generates $P_1(X)$ as a group.}

%

Since there cannot be any more functors apart from $P_0, P_1$ in this picture,
it is reasonable to ask whether Topological T-duality and Topological T-duality for Triples
are the only T-duality-like transformations possible beginning with the category of pairs over
a base space $X.$

As shown in Ref.\ \cite{Pan2}, if $X$ is not simple, $P_{3,2}(X)$ is related in
a nontrivial manner with the action $\pi_1(X)$ on the higher homotopy groups of
$X.$ It is interesting to ask if $P_1(X)$ displays similar behaviour.

As discussed above the functor $P_{3,2}(X)$ can be recovered from $P_0(X), P_1(X).$ 
If $P_{3,2}(X)$  can be related to a suitably defined Thomason cohomology group of $\C_X$
for example the cohomology with local coefficients of a suitable local system on $G(X)$
(see Remarks at the end of Sec.\ (\ref{SecPropP0P1})),
it might be possible to geometrize $P_{3,2}(X)$ just as $P_0, P_1$ are geometrized by
$\pi_k(G(X))$ and gain more insight into
the functors $P_{3,2}(X), P_0(X)$ and $P_1(X).$ 
For this, it is natural to try to construct a
contravariant Thomason system on $\C_X$ using
Thm.\ (\ref{ThmPi1H2}) above. It would be interesting to examine this further.

\section{Doubled Geometries and T-Folds \label{SecTFold}}
In SubSec.\ (\ref{SSecTFI}) we briefly review doubled
geometries and T-folds. In SubSec.\ (\ref{SSecTFDGC})
we argue that doubled geometries appear naturally
in the formalism of this paper. In SubSec.\ (\ref{SSecTFTF}) 
we describe how the above formalism may be naturally
extended to include T-fold backgrounds. 

\subsection{Introduction \label{SSecTFI}}
The idea of a doubled geometry orginated from 
SFT.
In Ref.\ \cite{KZw}, Kugo and Zwiebach showed
that the SFT on a toroidal background
$E$ could be naturally 
written using the coordinates 
on the original background and on the T-dual background.
The authors proved that 
SFT defined on
a given background $E$ with a free circle action
can be described in terms of the coordinate variable 
$x$ on $E,$  its conjugate momentum $p,$ 
the string winding $w$ and a variable 
$q$ conjugate to $w.$
Here, $q$ may be identified with the coordinate on the 
T-dual principal circle bundle.

In Ref.\ \cite{HullZ} Hull and Zwiebach showed
that closed SFT on a torus
background naturally defines a 
double field theory on a doubled space constructed 
from the original torus background and its T-dual
torus background. The authors showed
that the doubled space
has as natural coordinates the variables
$x$ and $q$ described in the previous
paragraph. Hull and Zwiebach 
conjectured that a doubled field theory 
propagated on the doubled space. 
This doubled field theory was naturally obtained
from closed SFT on the original 
background and was described using fields which were
functions of the coordinates on the original
background and the coordinates on the T-dual torus
background simultaneously.

In Ref.\ \cite{HullT} Hull showed that doubled
geometries could be constructed from string theories
propagating on spacetime backgrounds which were 
arbitrary principal torus bundles. 
He showed that when the T-dual principal
torus bundle was nongeometric, the doubled geometry
was no longer a classical manifold but a generalized
space termed a T-fold. A T-fold behaves like a 
principal torus bundle locally except
that some of the transition functions are replaced
by T-dualities.

Further Arvanitakis and coworkers demonstrated in 
Ref.\ \cite{AHull, AHull2} that the $L_{\infty}-$algebra
describing the infinitesimal symmetries 
(gauge transformations) of
SFT could be projected onto variables
describing the low energy behaviour of SFT
on toroidially compactified backgrounds only and
excluding all other backgrounds. 
This projected theory naturally yielded
the double field theory description of
such backgrounds described in the previous paragraph.
Thus the doubled field theory description on the doubled
geometry is actually a reduction of SFT
on the doubled geometry. 

\subsection{ Doubled Geometry and $\C$ 
\label{SSecTFDGC}}
We have seen that both the doubled geometry and 
the field theory on it are reductions of SFT.
In this paper we suggest that Topological T-duality
should also be viewed as a reduction of String Field
Theory. Hence, the doubled space should appear
in the Topological T-duality description.

In Refs.\ \cite{MR1, BunkeS1} the
doubled geometry appears in a commutative
diagram termed the Diamond Diagram (see 
Ref.\ \cite{BEM} for an explanation in terms
of string theory) connecting the original pair 
$(E,H)$ over a topological space $X,$ the dual 
pair $(E^{\#}, H^{\#})$
also over $X$ and a third space the 
correspondence space identified
as the fiber product $E \times_{X} E^{\#}$ which
is a $\KT^2-$bundle over $X$ 

\begin{equation}
\begin{CD}
E @<<{p^\ast(p^{\#})}< (E \times_{X} E^{\#}) \\
@VV{p}V   @VV{p^{\#}(p)}V \\
X @<<{p^{\#}}< (E^{\#}). \label{CSp1} \\
\end{CD}
\end{equation}

Thus for principal circle bundles with $H-$flux the
correspondence space may be identified with
the doubled geometry of Zwiebach, Kugo and Hull
in Refs.\ \cite{KZw, HullZ}.
All the known formalisms of Topological T-duality
have a natural construction of the correspondence space.

In this paper we have argued that Topological
T-duality may be naturally described using
closed SFT. We argue in this section
that the correspondence space can also
be naturally constructed using the formalism of this paper. 
We also argue at the end of this section that extending
the approach in this paper to principal 
torus bundles and their associated
nongeometric T-duals will permit the description of 
T-fold spaces.

For a $CW-$complex $X,$ let $G$ be as in 
Subsec.\ (\ref{SSecGXCX}) above.
Recall that we have the category $\C$ of pairs and
a natural forgetful functor $F:\C \to \mathbf{Top}.$
We also have the category of homotopy equivalence
classes of pairs with pullback morphisms between them
$\HC$ and a natural projection functor $\pi:\C \to \HC.$
In addition we have a functor $E: \HC \to \mathbf{Top}$
such that $F \simeq  E \circ \pi.$

Topological T-duality sends isomorphism classes
of pairs over $X$ to isomorphism classes of pairs over
$X$. For pairs whose underlying bundle is a principal
circle bundle, the Topological T-duality symmetry group
is $\KZ_2.$ Hence, $\HC$ is a category with a 
$\KZ_2-$action and we have a natural $\KZ_2-$symmetry
of the essential fiber $\pi(\F_X)$ of the functor 
$E: \HC \to \mathbf{Top}$ induced by Topological T-duality. 
Now the Topological
T-dual of the isomorphism class of a pair over any
topological space $X$ is another such isomorphism
class over the {\em same} space $X,$ hence
Topological T-duality acts 
as a fiber-preserving automorphism of the 
functor $E$ which also preserves the essential fiber of
$E$ over any topological space $X.$

For any pair $(E,H)$ in $\C$ we define
$\mathcal{T}([(E,H)]) \simeq E \times_{X} E^{\#}.$
We argue that $\mathcal{T}$ defines a functor
$\mathcal{T}: \HC \to \mathbf{Top}$ constructed
in the next paragraph. We show that this functor
possesses the property that
the Topological T-duality automorphism of $\HC$
restricted to the essential fiber $\pi(\F_X)$ of $E$
maps to a nontrivial self-homeomorphism of 
${\mathcal{T}}({\mathcal C}_X).$ 

To construct the functor $\mathcal{T},$ 
we first argue that such a functor must
exist using the `Topological T-duality Diamond diagram' of
Ref.\ \cite{BEM}.  By Ref.\ \cite{BEM}
we have a commutative diagram
of principal circle bundles whose top right element
is a fiber product of $E$ and $E^{\#}$
\begin{equation}
\begin{CD}
E @<<{p^\ast(p^{\#})}< (E \times_{X} E^{\#}) \\
@VV{p}V   @VV{p^{\#}(p)}V \\
X @<<{p^{\#}}< (E^{\#}). \label{EqPairIso} \\
\end{CD}
\end{equation}

This gives two natural maps 
$(E,H) \mapsto (E \times_{X} E^{\#})$
and $(E^{\#},H^{\#}) \mapsto (E \times_{X} E^{\#}).$
By definition of fiber product of bundles, 
changing $(E,H)$ by an isomorphism of pairs 
yields an isomorphic fiber product
and the same for $(E^{\#}, H^{\#}).$

Thus, if we could construct $(E \times_{X} E^{\#})$
from the pair $(E,H)$ alone we would 
get a natural assignment $[(E,H)] \to E \times_{X} E^{\#}$
which should extend to a functor 
$\mathcal{T}:\HC \to \mathbf{Top}.$ 
Then the action of Topological T-duality for 
circle bundles would interchange $[(E,H)]$ and 
$[(E^\#,H^{\#})]$ since it acts
on the category $\HC$ and also
on the subcategory $\pi(\F_X)$
of $\HC$ via a $\KZ_2-$action, 
Hence the Topological T-duality autmorphism acting on 
$\pi(\F_X)$ would induce the self-homeomorphism
of $E \times_{X} E^{\#}$ which interchanges $E$ 
and $E^{\#}.$

It is not clear that such a functor exists because the space 
$E \times_{X} E^{\#}$ involves the T-dual bundle 
$E^{\#}$ and it is difficult
to construct $E^{\#}$ directly from the pair $(E,H).$ 
In Ref.\ \cite{MaWu} Mathai and Wu construct the space
$(E \times_{X} E^{\#})$ directly from the pair $(E,H)$
using equivariant cohomology. Their method is very
general since they were studying spaces with an 
arbitrary circle action, and the case of principal
circle bundles studied in this paper follows trivially.
The argument in the previous paragraph then
shows that the functor $\mathcal{T}: \HC \to \mathbf{Top}$ 
described in the paragraph before
Eq.\ (\ref{EqPairIso}) actually exists. 

In Refs.\ \cite{HullZ} 
it is argued that the doubled space $\mathcal{T}([(E,H)])$
is a 'generalized' space
associated to $(E \to X,H).$ In the above, this space is 
just the topological space which is 
the image of $\pi(E \to X,H)$ above in $\HC$
under ${\mathcal T}.$

When $E$ is not a principal circle bundle but a principal $\KT^n-$bundle, however, with
$n \geq 2,$ it is possible for the T-dual to be a nonocommutative space. In this case, 
the correspondence space cannot be a topological space
and must be defined differently, for example as a
non-type-I $C^{\ast}-$algebra.  
\subsection{T-folds and $\C$ \label{SSecTFTF}}
We had mentioned T-folds at the beginning of
this section. As we had described there, T-folds
are generalized spaces which
replace the doubled geometry when the T-dual
is nongeometric.

In Ref.\ \cite{HullT} the author
showed that this doubled space gives a
T-fold when the original string theory background
is nongeometric. The author showed in this case
the doubled space was obtained by gluing the 
toroidially compactified background to its T-dual after a 
T-duality using `T-duality transition functions'. 
Thus the T-fold associated to a spacetime background
with a nongeometric T-dual
and the doubled field theory propagating
on it is a natural reduction of closed String Field
Theory on the given background.

We argue that the argument in this section can
be easily extended to string theory backgrounds which
are smooth principal $\KT^n-$bundles with a smooth
Riemannian metric. 
We restrict to backgrounds which were 
Riemannian manifolds with a free smooth
circle action. The results of Belov, Hull and Minasian in
Ref.\ \cite{Minasian} calculate the Topological T-dual
of such backgrounds.
The results in Ref.\ \cite{Minasian} show
that an analogue of the correspondence space may be constructed
for all these backgrounds even if the
$H-$fluxes on these backgrounds are non-T-dualizable.
In this case the analogue of the correspondence space is
{\em not} a principal circle bundle but only a non-principal
circle fibration and there is no
T-dual principal circle bundle. We could define the
value of $\T$ on spaces with non-dualizable $H-$flux
to be this non-principal circle fibration. Examining the
argument in Ref.\ \cite{Minasian}, it is clear that the functor
$\T$ extends to these spaces. In addition, the topological
T-duality group acts on the circle fibration $\T(e)$ with
$e \in \HC$ by interchanging circle fibers even if 
$\T(e)$ is a circle fibration and not a principal circle
bundle.

It is clear that it is possible to
construct the functor $\T$ above restricted to
pairs whose underlying topological spaces
were the topological spaces obtained from
the Riemannian manifolds with a free smooth
circle action after forgetting the smooth structure.
In addition the functor $\T$ extends to such spaces
which possess non-dualizable-$H-$flux.

As a result we may view the functor $\T$ as defining
a T-fold geometry for spaces with non-dualizable $H-$flux.
\section{Conclusion \label{SecLast}}
In SubSec.\ (\ref{SSecLI}) we briefly
overview and comment on results obtained in each
Section of this paper. In SubSec.\ (\ref{SSecLC})
we describe our conclusions.
\subsubsection{Introduction \label{SSecLI}}
In this paper we have described a new approach to study Topological T-duality for circle bundles using SFT. We have constructed the category
of pairs $\C$ using arguments from SFT. We have used the theory
of categorical equivalence to construct a functor which assigns
to each topological space $X$ the moduli space of SFT gauge
equivalence classes of pairs over $X.$ We have used this moduli
space to construct groupoid valued functors
$P_k: \mathbf{CW}^{op} \to \mathbf{Grpd}.$
In addition, we have demonstrated that $P_0$ may be identified with the Topological T-duality functor $P$ of Bunke et al and $P_1$ is a new invariant.
Also we have shown that $P_k(X), k \geq 2$ are trivial.
We have also demonstrated that the arrows in $P_1(X)$ are
naturally induced from
the action of the mapping class group $MCG(X)$ on isomorphism
classes of pairs over $X.$ We have also argued that as a consequence
of this that the symmetry
group of a knot tamely embedded in $S^3$ acts on isomorphism 
classes of pairs over the knot complement.

We have argued above that the category of pairs over $X$ namely $\C_X$ 
gives a natural description of toroidally compactified backgrounds over $X$
which are principal circle bundles over $X$ and related this category to 
the SFT description (Ref.\ \cite{Horowitz}) of these backgrounds. 
The functors $P_i(X)$ label these backgrounds and their automorphisms.

We had argued above that the functors $P_i(X)$ are a geometrization of the Topological T-duality
for Triples described in Ref.\ \cite{Pan2}. We argue that there is a natural map from 
$P_0 \times P_1$ to $P_{3,2}$ given by an automorphism of the  natural gerbe associated to a pair.

We had also given a definition of a T-fold within the formalism of Topological T-duality. We had argued
that for circle bundles this is just the assignment of the correspondence space to a pair together with its Topological
T-dual.
\subsubsection{Conclusions \label{SSecLC}}
We now outline some natural extensions of the
formalism of this paper.
\flushleft{\bf{Non-Circle Fiber:}}
Is it possible to extend the construction of 
$\C$ in this paper to other spaces including 
circle fibrations, $T^n-$bundles,
fiber bundles with other fibers apart from the circle,
and also to spaces with a circle action but which might
not be principal circle bundles?
It is clear that owing to its flexibility 
SFT would be helpful for defining such
extensions. 

Functors analogous to Bunke-Schick's functor $P$ have been
defined for each of these spaces. Can these
functors be obtained from the 
homotopy theory of various natural moduli spaces in these
problems in a manner similar to the functors $P_k(X)$ 
above?

For example in Ref.\ \cite{BEM} Topological T-duality 
has been investigated for bundles
with spheres as fibers and it might be possible to apply the
methods of Ref.\ \cite{BEM} to this situation.

\flushleft{\bf{Category of Elements:}}
We had argued in SubSec.\ (\ref{SSecPCLift}) above
(see discussion after Thm.\ (\ref{ThmHCTEqv} )
that the fibration $E: \HC \to \mathbf{Top}$ is
the category of elements of Bunke-Schick's functor 
$P:\mathbf{Top}^{op} \to \mathbf{Set}$ and
hence geometrizes $P.$

It would be interesting to see if some analogue of this
construction exists for the spaces which are fibered
over a base with fibers which are not $S^1$ as
described in the previous paragraphs.
This is important since functors analogous to 
$P$ have been defined for most of these spaces
in the Topological T-duality literature. It would
be interesting to see if these functors could be
geometrized in the sense of this paper.

If such a geometrization is possible, 
It would also be interesting to see if
the construction of the small category 
$\pi(\F_X)$ could be generalized to 
such situations since the underlying set of the small 
category $\pi(\F_X)$ is the value of Bunke-Schick's 
functor $P(X)$ and as mentioned above analogues
of $P(X)$ have already been defined for these spaces.
Is there a general pattern in this construction
as the fiber space changes? 

In addtion the analogue of the construction of the
Higher Topological T-duality functors $P_k$ should be
studied for such spaces and it should be checked
which of these are always trivial as in this paper.
It should be possible to
see if these higher functors can be geometrized in the
sense of this paper. This is of obvious interest, since, for example,
many important three-manifolds are Riemann surface
fibrations over a circle, but such spaces cannot be viewed
as pairs in the sense of this paper. 
\flushleft{\bf{Arbitrary $S^1-$spaces:}}
String Theory backgrounds with a circle
isometry with arbitrary fixed sets of the circle isometry
are exactly all the possible string backgrounds which are 
compactified on a single circle direction whose
noncompactified directions have $X$ as an underlying
topological space. Ideas 
from SFT will help in analyzing the collection of all
such backgrounds.

Also if the construction of the moduli space of pairs
$G(X)$ mentioned above in Sub-Sec.\ (\ref{SSecGXCX})
could be generalized to the category of {\em topological
stacks} then such a theory would 
include all String Theory backgrounds with a circle isometry
with arbitrary fixed sets of the circle action. There
is already a well-defined theory of Topological T-duality
for topological stacks (\cite{BunkeS2,Pan3}).

It is also likely that this theory could then be extended to all 
$S^1-$fibrations over $X$ since a theory of Topological 
T-duality is already known for backgrounds which are 
non-principal circle fibrations over $X$  (see 
Ref.\ \cite{MR2}).

\flushleft{\bf {Noncommutative T-duals:}}
In Ref.\ \cite{BunkeS2, MR1} show that for principal bundles
with $\KT^n-$fibers under certain
conditions there might be no T-dual space and the
$C^{\ast}-$algebraic T-dual might be a
noncommutative space.  If the construction of the moduli
space$G(X)$ described in this paper were generalized
to pairs consisting of principal $\KT^n-$bundles over 
$X$ with $H-$flux described in Ref.\ \cite{BunkeS2, MR1} it
might be possible to study noncommutative 
T-duals using a formalism similar to the one
described in this paper.

\flushleft{\bf{Maps which are not isomorphisms on base:}}
It would be interesting to see if the category 
$\pi(\F_X)$ could
be generalized to a larger category such that nontrivial
higher topological T-duality functors $P_k(X), k \geq 2$ 
may be obtained. This might be possible if we use the
construction described near the end of 
SubSec.\ (\ref{SecPropP0P1}) above
involving mapping cones of maps $f:X \to X.$

It would be interesting to see if the Thomason cohomology
groups could be calculated for the moduli space $G(X)$ 
as outlined in SubSec.\ (\ref{SecPropP0P1}) above.

\flushleft{\bf{Non-Type II backgrounds:}}
All the above was for backgrounds for closed bosonic SFT. These
correspond to backgrounds on which the closed bosonic string
propagates. In this paper we identify these backgrounds with
backgrounds on which Type II Strings propagate since their
massless modes are the same (since we are ignoring
supersymmetry). The type II Superstring field theory
induces supersymmetric Type II
theories on all its backgrounds and the formalism 
of this paper should apply to this SFT as well.

However there are variants of SFT which induce the other 
known string theories on all their backgrounds (for example
Type I String Theory or Heterotic String
Theory). It is interesting to ask if the formalism in this 
paper may be reworked for backgrounds on which other 
types of String Theory are
propagating. If so, it might be possible to construct 
formalisms similar
to Topological T-duality for other dualities in the duality web.

\flushleft{\bf{Mapping Class Group of Base:}}
In addition, we showed in this paper that elements of 
$P_1(X)$ correspond
to elements of the mapping class group of
a space which lift to automorphisms of the pair. When the 
space is a two-manifold we used a set of results by Chen 
and Tshishiku \cite{ChenTs} to study the lift of isotopy 
classes of self homeomorphisms of the base to the pair. 

We have also argued that this might happen when the space 
was a knot complement. In this case sometimes 
we might be able to obtain a pair
over the knot complement which had the symmetries of the
knot as its automorphisms covering isotopy classes of 
self-homeomorphisms the base. 
These isotopy classes of self-homeomorphisms of the base
would then be the symmetries of the knot. These pairs would
correspond to certain elements of $P_1(X).$ It
is interesting to ask when such a pair would exist
and also which isotopies of the base
would lift and which would not. This
calculation would need  results from the theory 
of three-manifolds and knot theory.
\section{Acknowledgements}
I thank the School of Arts and Sciences, Ahmedabad University for providing a congenial
working environment while the bulk of
this work was being done.
I thank the School of Mathematics, NISER, Bhubaneshwar for providing a congenial working environment for the first draft of this paper.
I thank Professor Jonathan Rosenberg, University of Maryland at College Park for his advice and encouragement during the writing of this paper.

\providecommand{\href}[2]{#2}

\end{document}